\DeclareRobustCommand*\cal{\relax\mathcal}
\newcommand{\D}{{\cal D}}
\newcommand{\X}{{\cal X}}
\newcommand{\K}{{\cal K}}
\newcommand{\Z}{{\cal Z}}
\newcommand{\E}{{\mathbb{E}}}
\newcommand{\N}{{\mathbb{N}}}
\newcommand{\R}{{\mathbb R}}
\newcommand{\adjac}{\approx}
\newcommand{\1}{\mathds 1}
\newcommand{\PR}[1]{\Pr \left( #1 \right)}
\newcommand{\conf}{\delta}
\newcommand*{\defsym}{\stackrel{\text{def}}{=}}
 \newcounter{ncomm}%
\newtheorem{theorem}{Theorem}
\newtheorem{lemma}{Lemma}
\newtheorem{remark}{Remark}
\theoremstyle{definition}
\newtheorem{definition}{Definition}[section]
\def\BibTeX{{\rm B\kern-.05em{\sc i\kern-.025em b}\kern-.08em
    T\kern-.1667em\lower.7ex\hbox{E}\kern-.125emX}}
\begin{document}
\sloppy
\title{On the (Im)Possibility of Estimating
Various Notions of Differential Privacy
}
\author{\IEEEauthorblockN{Daniele Gorla$^*$}
\IEEEauthorblockA{\textit{Dept. of Computer Science,}\\ \textit{Sapienza University of Rome (Italy)}}
\thanks{$^*$These two authors have equally contributed to the work.}
\and
\IEEEauthorblockN{Louis Jalouzot$^*$}
\IEEEauthorblockA{\textit{ENS de Lyon (France)}}
\and
\IEEEauthorblockN{Federica Granese}
\IEEEauthorblockA{\textit{Lix, Inria, Institute Polytechnique de Paris (France)} \\
\textit{ Sapienza University of Rome (Italy)}}
\and
\IEEEauthorblockN{Catuscia Palamidessi}
\IEEEauthorblockA{\textit{Lix, Inria, Institute Polytechnique de Paris (France)}}
\and
\IEEEauthorblockN{Pablo Piantanida}
\IEEEauthorblockA{\textit{L2S, CentraleSup\'elec (France)} \\
\textit{CNRS, Universit\'e Paris Saclay (France)}}
}

\maketitle

\begin{abstract}
We analyze to what extent final users can infer information about the level of protection of their data when the data obfuscation mechanism is a priori unknown to them (the so-called ``black-box" scenario). In particular, we delve into the investigation of two notions of \textit{local differential privacy} (LDP), namely $\varepsilon$-LDP and Rényi LDP. On one hand, we prove that, without any assumption on the underlying distributions, it is not possible to have an algorithm able to infer the level of data protection with provable guarantees; this result also holds for the central versions of the two notions of DP considered. On the other hand, we demonstrate that, under reasonable assumptions (namely, Lipschitzness of the involved densities on a closed interval), such guarantees exist and can be achieved by a simple histogram-based estimator. 
We validate our results experimentally and we note that, on a particularly well-behaved distribution (namely, the Laplace noise), our method gives even better results than expected, in the sense that in practice the number of samples needed to achieve the desired confidence is smaller than the theoretical bound, and the estimation of $\epsilon$ is more precise than predicted.
\end{abstract}

\begin{IEEEkeywords}
Differential Privacy, Local Differential Privacy, Rényi differential privacy, impossibility of provable guarantees, histogram-based sampling.
\end{IEEEkeywords}

\section{Introduction}
Differential privacy (DP) \cite{Dwork06} is nowadays one of the best established and theoretically solid tools
to ensure data protection. Intuitively, given a set of databases,
differential privacy requires that databases that only slightly differ one from the other
(e.g. in one individual record) are mapped to the same obfuscated value with similar probabilities; this provides privacy
to the changed record because statistical functions run on the database should not overly  depend on the data of any individual.
The success of this privacy notion is testified by its wide application, both in academia
and in industry (see e.g. \cite{Apple,Dwork08,RAPPOR,4497436,5207644}). 

The classical notion of differential privacy relies on a central model where a trusted authority, the data collector, receives all the user data in clear without any privacy concerns.
It is then the responsibility of the data collector to privatize the received data by ensuring privacy and utility constraints.
More formally, the data collector applies to the collected data a randomized mechanism $\K$, that is a mapping from the set of databases to the distributions over $\Z$ (where $\Z$ is the domain of the obfuscated values). $\K$ is said to be $\epsilon$-DP (for some $\epsilon \geq 0$) if, for every $S \subset \Z$ measurable and every adjacent database $d_1$ and $d_2$ (i.e., databases that only minimally differ one from the other), we have that
    $$
        \Pr[\K(d_1) \in S] 
        \leq e^\epsilon \, \Pr[\K(d_2) \in S]
    $$
The value of $\epsilon$ controls the level of privacy: a smaller $\epsilon$ ensures a higher level of privacy.
Starting from this basic notion, a few variants have been proposed in the literature.

The first variant is a distributed version of DP, called {\em local differential privacy} (LDP) \cite{Alvim2018}.
In this setting, the users obfuscate their data by themselves before sending them to the data collector; therefore, the privacy constraints can be satisfied locally. 
Here, we do not work anymore with (adjacent) databases but directly on values from a set $\X$. Now, $\K$ is said to be $\epsilon$-LDP if, for every measurable $S \subset \Z$ and $x_1,x_2 \in \X$, it holds that
    $$
        \Pr[\K(x_1) \in S] 
        \leq e^\epsilon \, \Pr[\K(x_2) \in S]
    $$
Therefore, while in the central model it is the data collector that directly queries the user databases and applies the randomized mechanism, in the local model the idea is that the users apply the randomized mechanism to their own answers.
More abstractly, we could synthesize the two settings with a symbolic ``privacy wall": in DP, it is located after the data collector; in LDP, it is between the users and the data collector.
Again, $\epsilon$ controls the level of privacy: the smaller $\epsilon$, the higher the level of privacy. Furthermore, in the local setting, the users can choose the level of privacy they desire.

A second variant is the so-called {\em Rényi differential privacy} (RDP) \cite{RDP}, a relaxation of DP based on the notion of Rényi divergence. This notion has been introduced to remedy some unsatisfactory aspects of $(\epsilon,\delta)$-DP \cite{DKMMN06}, and it is well-suited for expressing guarantees of privacy-preserving algorithms and for composing heterogeneous mechanisms. 
More formally, a randomized mechanism $\mathcal{K}$ is $\epsilon$-RDP or $\epsilon$-LRDP of order $\alpha>1$ if, for every $x_1,x_2$ adjacent databases (for RDP) or values (for LRDP), we have that
    $$
        D_\alpha(\mathcal{K}(x_1) \| \mathcal{K}(x_2)) \leq \epsilon
    $$
where $D_\alpha(\ \cdot\ \|\ \cdot\ )$ denotes Rényi divergence \cite{Renyi}. Also here the value of $\epsilon$ controls the level of privacy, in the sense that a smaller $\epsilon$ corresponds to a higher privacy.

However, all such notions of DP are built into existing software products by the producing companies, and the final users have no way of testing the real level of security (i.e., the real value of $\epsilon$).
They can only trust the producers, sometimes leading to unexpected (and unwanted) behaviors.
For this reason, we would like to study to what extent final users can infer information about the level
of protection of their data when the data obfuscation mechanism is a priori unknown to them, and they can only sample from it (the so-called ``black-box" scenario). Indeed, in the literature, many examples exist \cite{AlbarghouthiH18,BartheGAHKS14,ChistikovMP19,NearDASGWSZSSS19,ReedP10,ZhangK17} (just to cite a very few) for testing the level of privacy when the obfuscation mechanism is known (the so-called ``white-box" scenario); by contrast,
to the best of our knowledge, almost no result is present for the black-box one.

\subsection{Our contribution}
In this paper, we focus on the local notions of DP and obtain the following results:
\begin{itemize}
    \item We prove that, without any assumption, no black-box algorithm exists for estimating the privacy parameter $\epsilon$ of LDP and LRDP. This impossibility result is quite strong, in the sense that it holds  no matter how low are the level of precision and confidence required. Furthermore, it holds for both continuous and discrete domains and noise distributions, and it can be adapted to the central versions of DP as well.
    \item Instead, if the densities involved are Lipschitz on a closed interval, we provide statistical guarantees showing the learnability of $\epsilon$ for both LDP and LRDP and for both continuous and finite discrete domains.
    \item Finally, we validate our method via experiments based on (a variation of) the Laplace obfuscation mechanism.
\end{itemize}
In particular, the paper is organized as follows.

In Section \ref{sec:prel} we provide the basic notions on differential privacy and $C$-Lipschitzness. 
We shall assume the randomized mechanism to be a probabilistic function $\K:\X\rightarrow \D\Z$, where 
$\D\Z$ denotes the set of probability distributions over $\Z$. We let $x$ range over $\X$, $z$ over $\Z$, $X$ denote a variable representing the value the randomized mechanism is applied on, and $Z=\K(X)$; then, for every $z \in \Z$, we have that $\K$ is $\epsilon$-LDP if and only if $p_{Z|X}(z|x_1) \leq e^\epsilon \, p_{Z|X}(z|x_2)$, where the notation $p_{Z|X}(\,\cdot\,|x_i)$ denotes either the density distribution of $\K(x_i)$ in the continuous case or the probability distribution \ $[z \mapsto \Pr(\K(x_i) = z)]$ \ in the discrete one.

In Section \ref{sec:impossibility} we first prove that, for every $\gamma > 0$ (precision), $0<\conf<1$ (confidence), and probabilistic estimator algorithm $\Tilde{\epsilon}$ that almost surely terminates, there exists a probability distribution $p_{Z|X}$ such that the estimated value for $\epsilon$ differs from the real one by at least $\gamma$ with probability at least $1-\conf$ (see \cref{thm:imposs}). Intuitively, no estimator can exist in the general case because we can always choose $p_{Z|X}$ to be non-regular and the points $z$ for which the real $\epsilon$ is reached can have a low probability of being sampled. Indeed, even though the estimator has access to the full range of values $\mathcal{Z}$, since it does not know the probability of each output, it cannot adapt (even on the fly) the number of samples from $p_{Z|X}$ it does to be sure to sample at least once each $z\in\mathcal{Z}$.

Then, in Section \ref{sec:est_ldp}, we focus on the continuous case and on the estimation of $\epsilon$ for a specific pair of values $(x_1,x_2)$. 
We start by showing that, whenever the densities $p_{Z|X}(\cdot|x_1)$ and $p_{Z|X}(\cdot|x_2)$ over $\mathcal{Z} = [a,b]$ are $C$-Lipschitz with $C< \frac 2 {(b-a)^2}$, there exists a probabilistic histogram-based estimator that succeeds and outputs an answer that differs from the real $\epsilon$ for at most $\gamma$ with probability at least $\conf$, for every $\gamma > 0$ (precision) and $0<\conf<1$ (confidence) (see \cref{thm:estimator}).
Once we have this estimator for a single pair of values, in Section \ref{subsec:actEst} we discuss how to estimate the overall $\epsilon$ (that is obtained as the sup of the $\epsilon$ for all the possible pairs). To this aim, we divide $\X$ in $k$ buckets (for a proper $k$), take the mid-points of all the buckets, run the previous estimator for all pairs of mid-points, and return the maximum. We prove that, if we also assume $p_{Z|X}(z|\cdot)$ to be $D$-Lipschitz, for some $D$ and for all $z \in \Z$, this new algorithm is able to estimate the overall $\epsilon$ with provable guarantees (see \cref{thm:buckets}).

Notice that the estimator allows for checking if an existing system actually provides privacy with budget $\epsilon$ only if the involved distributions are smooth (i.e., $C$-Lipschitz). Hence, a provider could lie both about $\epsilon$ and on the smoothness of the distributions; in the latter case, it is possible to trick the estimator into answering $\epsilon$ although the system does not actually provide such a level of privacy. Therefore, we have 2 cases to consider:
(1) the estimator agrees on the $\epsilon$ claimed by the provider, or (2) the estimator is in contradiction with the claims of the provider. In the second case, we can conclude that the provider should not be trusted because it lied either about $\epsilon$ or about the smoothness.
But in the first case, either the provider did not lie at all, or it lied about the smoothness of the distributions and tricked the estimator. To improve this unsatisfactory conclusion, we can implement a safety check derived from a necessary condition for Lipschitzness, that we provide in \cref{thm:scLip} in \cref{sec:Lipsch}. By making multiple executions of the estimator, one can empirically argue if the provider lied, by comparing the obtained probability with the theoretical lower bound from \cref{thm:scLip}.

In Section \ref{sec:exper} we first show that the Lipschitzness assumptions required by our theorems are met by the two most widely used DP mechanisms, namely Laplacian and Gaussian \cite{DworkMNS06,DworkR14}. Then, we validate all our results for the Laplace distribution. However, since our estimator assumes that $\Z$ is a closed interval, we have to rely on a variation of the Laplace distribution that we call {\em truncated Laplace distribution}. Since this distribution is not provided by classical libraries for sampling, we rely on the well-known inverse transform sampling to massively sample from the truncated Laplace.
We first consider the number of samples the estimator does; this parameter depends on $\gamma, \conf, C$ and $|\Z|$; however, we discover that the strongest dependency is on $\gamma$. So this parameter will have the largest influence on the time complexity. 
Then, we compare the estimated $\epsilon$ against the real one. We discover that the number of samples required to have satisfactory results in practice is significantly lower than the theoretical one. 
Furthermore, we study the proportion of estimated $\epsilon$ that are close to $\epsilon$ within $\gamma$ across 100 executions for different values of the number of samples. We discover that the lowest number of samples that yields a proportion greater than $\conf$ is around $400$ times lower than the theoretical one in this case.
Finally, we also test whether the safety check for $C$-Lipschitzness is practically valuable. The results show that it is, unless the actual $C$ is close to the claimed one. This is because \cref{thm:scLip} does not assume anything on the involved distributions, so its bound is certainly not tight for truncated Laplace distributions.

In Section \ref{sec:rdp} we adapt all these results to the notions of LRDP. The changes needed for the impossibility result are relatively minor, whereas the estimators, even if formally very similar, require more complex bounds both on the number of experiments and on the number of intervals required; furthermore, also the proofs of the desired guarantees are more technical. Then, we run experiments similar to the ones for LDP that confirm the quality of our approach also for LRDP. In particular, for this second setting the gap between the number of samples sufficient for achieving the guarantees of the theorem and the theoretical one (used in the proof of the theorem) is even more dramatic than for LDP: here the practical one is around 10$^5$ times smaller.

Finally, Section \ref{sec:concl} concludes the paper, by also drawing possible developments for future work.

All proofs are omitted or only sketched in this paper, but full details can be found in the appendix.

\subsection{Related work}
Many works in the literature face the problem of verifying DP of an algorithm, e.g. through formal verification. For instance, ShadowDP~\cite{WangDWKZ19} uses shadow execution to generate proofs
of differential privacy with very few programmer annotations and without relying on customized logics and verifiers; CheckDP~\cite{WangDKZ20} takes the source code of a mechanism along with its claimed level
of privacy and either generates a proof of correctness or a verifiable counterexample.
Finding violations of differential privacy has also been addressed by looking for the \textit{most powerful attack}~\cite{BichselSBV21} and by using \textit{counterexample generators}~\cite{BichselGDTV18,DingWWZK18}.
The above works assume that the program generating the obfuscation mechanism is known (white-box scenario), and uses the code to guide the verification process, 
with the exception of \cite{DingWWZK18}, which adopts a semi-black-box approach, 
in the sense that it is mostly based on analyzing the input-output of  executions, but also 
uses a  symbolic execution model to find values of  parameters in the code that makes
it easier to detect counterexamples. 

By contrast, very few approaches to verifying purportedly private algorithms in a fully black-box scenario (i.e., when the data obfuscation mechanism is unknown) have been proposed so far. One bunch of results are based on \textit{property testing}~\cite{GilbertM18,DixitJRT13}, where the goal is to design procedures to test whether the algorithm satisfies the privacy definition under consideration. Interestingly, given an oracle who has access to the probability density functions on the outputs, Dixit et al.~\cite{DixitJRT13} cast the problem of testing differential privacy on \textit{typical datasets} (i.e., datasets with sufficiently high probability mass under a fixed data generating distribution) as a problem of testing the Lipschitz condition. 
Their result concerns variants of differential privacy called \emph{probabilistic differential privacy} and \emph{approximate differential privacy} (i.e., $(\epsilon,\delta)$-DP), while we focus on the stronger notion of pure DP. Furthermore, unlike \cite{DixitJRT13}, we also provide an impossibility result, and we focus our analysis on LDP and RDP.
By contrast, Gilbert et al.~\cite{GilbertM18} prove both an impossibility result for DP and a possibility result for approximate DP. 
Their impossibility result shows that, for any $\epsilon > 0$ and proximity parameter $\alpha > 0$, no privacy property tester with finite query complexity exists for DP.
Their result does not cover ours, because their notion of "property testing algorithm" is a (probabilistic) decision procedure to prove or disprove DP, while we show that the impossibility holds even for a semi-decision procedure (disproving DP).
As for the possibility result, they provide lower bounds on the query complexity verification of approximate differential privacy, while, as already mentioned, we consider the stronger notion of pure DP.
Moreover, differently from our work, \cite{GilbertM18,DixitJRT13} achieve their possibility result using randomized algorithms.

The paper that is most closely related to ours is \cite{AKD22}.
Like us, the authors have also considered the problem of black-box estimating the $\epsilon$ of DP, but there are some differences. 
First, they only consider central DP, whereas we focus on LDP and LRDP, and then discuss how our results extend to central (R)DP. 
Second, we both consider a pair of databases/values and evaluate the $\epsilon$ for this pair; this provides a lower bound on the overall $\epsilon$, or proves that the provider lied on it. To compute a better under-approximation of the overall $\epsilon$, they iterate their method over a (somehow chosen) finite set of pairs; however, the strategy they propose has no provable guarantees. By contrast, our method, based on discretizing the set of values to be obfuscated and running the fixed-pair estimator on all pairs of mid-points, comes equipped with a formal statement that ensures guarantees on the estimation (see Subsection \ref{subsec:actEst}). 
Third, we use histograms and rely on the Lipschitzness of the noise function, whereas they use kernel density estimation and rely on Holder continuity (a generalization of Lipschitzness); however, we both assume that the noise is within a closed interval. 
Fourth, we provide the sample complexity, i.e., the number of samples (the $n$ defined implicitly in \eqref{ndef} for LDP and \eqref{nRDPdef} for LRDP) needed to achieve a certain precision $\gamma$ and confidence $\conf$ in the estimation of $\epsilon$ (Theorems \ref{thm:estimator} and \ref{thm:RDP-correct}). Instead, their theorem states that the estimation approximates $\epsilon$ asymptotically (within a certain confidence range) as $n$ grows, but they do not give the number of samples necessary to achieve the desired precision. 
Finally, we formally prove that it is impossible to estimate central and local (R)DP without any assumption; this is not provided in \cite{AKD22}.

Other two closely related papers are \cite{LO19}, and \cite{LWMZ22}, where the authors focus on $(\epsilon,\delta)$-DP and the estimation of the DP parameters of a given (unknown) mechanism. 
In \cite{LO19} the authors aim to estimate the parameters for a fixed pair of adjacent databases; their main focus is on the relation between the number of samples required and the accuracy of the estimation. To obtain the estimation of the parameters of the mechanism (i.e., by not relying on a specific pair of databases), they repeat their estimation on every possible pair.
In \cite{LWMZ22} the authors aim at estimating, once $\epsilon$ is given, the $\delta$ of a certain (unknown) mechanism. Like ours, their approach is black-box but is focused on  {\em polynomial-time} approximate estimators. They start by proving that an estimator (with provable guarantees) cannot exist; however, their result does not subsume ours, since we are not assuming the polynomiality of the estimator (hence, our impossibility results are more general than theirs) and because we also cover local notions of DP. To develop a poly-time estimator, they focus on the estimation of $\delta$ limited to a given subset $\cal T$ of all the possible databases, thus defining (and estimating) the notion of {\em relative DP}. This is done by estimating the $\delta$ over a given pair of adjacent databases in $\cal T$ (this is done using the kNN algorithm) and then taking the maximum over all possible such pairs. 

Finally, \cite{CCP19} and \cite{CCPT20} focus on estimating Bayes risk of a black-box mechanism and prove an impossibility result similar to ours. However, as it is noticed in the second paper (section VII), DP/LDP provide lower bounds for Bayes leakage (specifically, $\epsilon$-DP induces a bound on the multiplicative Bayes leakage, where the set of secrets are all the possible databases, and LDP implies a lower bound on Bayes leakage, but not vice versa). We want to remark that, being lower bounds, the impossibility proved for Bayes risk does not entail the impossibility results that we prove here.
\section{Preliminaries}
\label{sec:prel}

In this section, we recall some useful notions from differential privacy and on Lipschitz functions.

\subsection{$\epsilon$-Differential Privacy}
The classical notion of \textit{differential privacy} (DP) relies on a \textit{central} model where a trusted authority, the \textit{data collector}, receives all the user data in clear without any privacy concern.  It is then the responsibility of the data collector to privatize the received data by ensuring privacy and utility constraints.
More formally, given a set of databases $\X$,
DP relies on the notion of {\em adjacent} databases, whose most common definitions are: 
(1) two databases $x_1,x_2 \in \X$ are adjacent if $x_2$ can be obtained from $x_1$ by adding or removing one single record; 
(2) two databases $x_1,x_2 \in \X$ are adjacent if $x_2$ can be obtained from $x_1$ by modifying one record. 
We write $x_1 \adjac x_2$ to mean that $x_1$ and $x_2$ are  adjacent, under whichever definition of adjacency is relevant in the context of a given algorithm. The notion of adjacency used by an algorithm must be fixed a priori and it is part of the framework. 

A {\em randomized mechanism} for a certain query is any probabilistic function $\K:\X\rightarrow \D\Z$, where $\D\Z$ represents the set of probability distributions on $\Z$ and $z\sim\K(x)$ is the reported answer. 
Furthermore, we let $X$ denote a variable representing the database on which the randomized mechanism is applied and $Z=\K(X)$. 


\begin{definition}[$\epsilon$-DP \cite{DP}]
    \label{def:DP}\it
    A randomized mechanism $\K$ is said to be {\em $\epsilon$-differentially private} for some $\epsilon \geq 0$ if, for every $S \subset \Z$ measurable and every adjacent databases $x_1 \sim x_2 \in \X$, we have that
    $$
        \Pr[\K(x_1) \in S] 
        \leq e^\epsilon \Pr[\K(x_2) \in S]
    $$
    or equivalently if, for every $z \in \Z$, we have that
    \begin{align}
        p_{Z|X}(z|x_1) \leq e^\epsilon \, p_{Z|X}(z|x_2)
        \label{eq:DP}
    \end{align}
    where $p_{Z|X}(\, \cdot\,|x_i)$ denotes either the density distribution of $\K(x_i)$ in the continuous case or \ $[z \mapsto \Pr(\K(x_i)\! =\! z)]$ \ in the discrete one.
\end{definition}

\vspace{3pt}

    The value of $\epsilon$, called the {\em privacy budget}, controls the level of privacy: 
    a smaller $\epsilon$ ensures a higher level of privacy.
This parameter is what we aim at estimating throughout this paper.

\subsection{Local Differential Privacy}

Sometimes, it is more convenient to rely on 
a distributed variant of  $\epsilon$-DP, called \textit{local differential privacy} (LDP). In this setting, users obfuscate their data by themselves before sending them to the data collector~\cite{Alvim2018}; therefore, the privacy constraints can be satisfied locally. In this case, instead of working with a set of databases, the set $\X$ represents all possible values for the data of a generic user. 

\begin{definition}[Local differential privacy \cite{DuchiJW13}]\it
    \label{def:LDP}
    A randomized mechanism $\K : \X\rightarrow \D\Z$ is said to be {\em $\epsilon$-locally differentially private} for some $\epsilon \geq 0$ if, for every $z \in \Z$ and $x_1,x_2 \in \X$, it holds that
    \begin{align}
        p_{Z|X}(z|x_1) \leq e^\epsilon \, p_{Z|X}(z|x_2)
        \label{eq:LDP}
    \end{align}
    where again $p_{Z|X}(\,\cdot\,|x_i)$ denotes either the density distribution of $\K(x_i)$ in the continuous case or \ $[z \mapsto \Pr(\K(x_i) = z)]$ \ in the discrete one.
\end{definition}


\vspace{5pt}

Note that LPD implies DP on the collected data. In addition, LDP allows for a higher level of security, since the central authority can be considered malicious and men-in-the-middle attacks can be foreseen; indeed, LDP leaves to the users the freedom of choosing their own level of privacy. Anyway, the algorithms in the local
setting usually require more data than the algorithms in the central setting in order for the statistics to be significant~\cite{Alvim2018}. Some practical implementations of LDP are given by Google with \textsc{RAPPOR}~\cite{RAPPOR}, Apple~\cite{Apple}, and Microsoft \cite{MICROSOFT}.


\subsection{Rényi Differential Privacy}

The Rényi differential privacy \cite{RDP} is a relaxation of differential privacy based on the notion of Rényi divergence \cite{Renyi}, that we now recall.

\begin{definition}[Rényi divergence]\it
    For two probability distributions $P,Q$ with densities $p,q$ defined over the same space, the {\em Rényi divergence} of order $\alpha > 1$ is defined as
    $$
        D_\alpha(P\|Q)
        = \frac{1}{\alpha - 1} \log \E_{z \sim Q}
        \left(
            \frac{p(z)}{q(z)}
        \right)^\alpha.
    $$
\end{definition}

\vspace{5pt}

\begin{definition}[($\alpha,\epsilon)$-RDP~\cite{RDP}]\it
    \label{def:RDP}
    A randomized mechanism $\mathcal{K}$ is said to have {\em $\epsilon$-Rényi differential privacy} of order $\alpha>1$ if, for every $x_1,x_2$ adjacent databases for the central setting or values for the local one, we have that
    \begin{align}
    \label{eq:RDP}
        D_\alpha(\mathcal{K}(x_1) \| \mathcal{K}(x_2)) \leq \epsilon.
    \end{align}
\end{definition}

Notice that RDP can be extended by continuity to $\alpha \in \{1,+\infty\}$. 
In particular, it holds that
$\K$ is $(+\infty,\epsilon)$-RDP if and only if it is $\epsilon$-DP (the same holds for the local versions of RDP and DP, respectively).

\subsection{(Truncated) Laplace distribution}
\label{sec:truncated}
One of the main obfuscation mechanisms used to reach L(R)DP is replacing the value to obfuscate by a noisy one which is sampled from a Laplace distribution centered on the actual value \cite{DworkMNS06}. 

\begin{definition}\it
\label{def:Laplace}
A Laplace distribution $g$ of location $x$ and scale $B > 0$ is defined by the following density on $\R$:
$$
    g(z|x,B) = \frac{1}{2B}e^{- \frac{|z - x|} B}
$$
Its cumulative distribution function (CDF) is given by
$$
G(z)=\int_{-\infty}^z g(u|x,B)du
$$
and, for $p \in [0,1]$, its inverse CDF is
$$
G^{-1}(p)=x - B\operatorname{sgn}(p-0.5)\log(1 - 2|p-0.5|)
$$
\end{definition}

Notice that, to be interesting for DP, $\X$ (the set of possible values to obfuscate) must be a closed interval: indeed, a Laplace distribution is $\frac{\Delta \X}{B}$-DP, where $\Delta \X$ is the width of $\X$, as for all $z,x,x'\in\X$:
\begin{align*}
    \frac{g(z|x,B)}{g(z|x',B)}
    & = \exp\left(\frac{|z - x'| - |z - x|}{B}\right) \leq \exp(\Delta \X / B)
\end{align*}

However, our estimators will also require $\Z$ (the set of possible obfuscated values) to be a closed interval. So, for simplicity, in our experiments we will consider $\X = \Z$ and rely on an adapted version of Laplace mechanism that we call {\em truncated Laplace distribution}: this is a distribution defined over $\Z = [a, b]$ and, for a location $x \in \X$ and a scale $B > 0$, we consider the following density:
$$
    f(z|x,B)=K_{x,B}\ e^{-|z - x|/B}
$$
where the normalization constant $K_{x,B}$ is equal to:
$$
    \frac{1}{B(2 - e^{-(x - a)/B} - e^{-(b - x)/B})}
$$
Notice that both $f$ and $K$ also depend on the chosen interval $[a,b]$. However, to keep notation lighter, we will not specify this dependency in an explicit way.

\subsection{Lipschitz functions}

We recall here the definition of a classical property on functions that we will assume for densities later in the paper.

\begin{definition}[Lipschitz functions]\it
    \label{def:lipschitz}
    A function \ $f: \R \rightarrow \R$ \ is said to be {\em $C$-lipschitz}, for some constant $C \geq 0$, if, for every $x,y \in \R$, it satisfies
    $$
        |f(x) - f(y)| \leq C|x - y|
    $$
\end{definition}

    Notice that the smaller the constant $C$ the smoother the function $f$, and the less it varies.

To conclude this section, we now state a property of Lipschitz densities over a closed interval that will be useful in the proofs of our main results.

\begin{lemma}
    \label{lower}
    Let $p$ be a $C$-lipschitz density over $\mathcal{Z}=[a,b]$; then, for $W=b-a$ and for all $z\in\mathcal{Z}$, it holds that
    $$\frac{1}{W}-\frac{CW}{2} \leq p(z) \leq \frac{1}{W}+\frac{CW}{2}.$$
\end{lemma}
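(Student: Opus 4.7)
The plan is to prove both inequalities by contradiction, using the fact that $p$ is a density (so $\int_a^b p(z)\,dz = 1$) together with the Lipschitz bound to construct a linear envelope around a putative extremal point $z_\star \in [a,b]$.

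For the upper bound, suppose toward contradiction that there exists $z_\star \in [a,b]$ with $p(z_\star) > \frac{1}{W} + \frac{CW}{2}$, and write $p(z_\star) = \frac{1}{W} + \frac{CW}{2} + \epsilon$ for some $\epsilon > 0$. By $C$-Lipschitzness, $p(z) \geq p(z_\star) - C|z-z_\star|$ for every $z \in [a,b]$. Integrating over $[a,b]$ gives
$$1 \;=\; \int_a^b p(z)\,dz \;\geq\; W\,p(z_\star) \;-\; C\int_a^b |z - z_\star|\,dz.$$
The key auxiliary calculation is $\int_a^b |z - z_\star|\,dz = \tfrac{(z_\star - a)^2 + (b - z_\star)^2}{2} \leq \tfrac{W^2}{2}$, where the maximum $\tfrac{W^2}{2}$ is attained at the endpoints (this is the elementary fact that $u^2 + v^2$ is maximized on $\{u + v = W, u,v \geq 0\}$ at a corner). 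Plugging in gives $1 \geq 1 + W\epsilon$, contradicting $\epsilon > 0$.

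The lower bound is entirely symmetric: if $p(z_\star) < \frac{1}{W} - \frac{CW}{2}$, set $p(z_\star) = \frac{1}{W} - \frac{CW}{2} - \epsilon$ with $\epsilon > 0$, use the opposite Lipschitz inequality $p(z) \leq p(z_\star) + C|z - z_\star|$, integrate, and apply the same bound $\int_a^b |z - z_\star|\,dz \leq \tfrac{W^2}{2}$ to obtain $1 \leq 1 - W\epsilon$, again a contradiction.

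The only nontrivial ingredient is the bound $\int_a^b |z - z_\star|\,dz \leq \tfrac{W^2}{2}$; once that is in hand, the rest is just substitution. I do not expect any obstacle: both directions follow the same two-step template (Lipschitz linear envelope, then integrate against the density constraint).
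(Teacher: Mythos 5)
Your proof is correct and, despite the contradiction framing, is essentially the same argument as the paper's: both integrate the Lipschitz inequality against the density constraint $\int_a^b p(t)\,dt = 1$ and both rely on the same auxiliary bound $\int_a^b |z - z_\star|\,dz \leq W^2/2$. Unwinding your contradiction shows you derive $|p(z_\star) - 1/W| \leq \frac{C}{W}\int_a^b |t - z_\star|\,dt$, which is exactly the paper's central estimate, so the extra $\epsilon$ and the indirect framing are purely cosmetic.
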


\section{On the estimation of $\epsilon$ in Local Differential Privacy}
\label{sec:DPLDP}
We start by examining whether it is possible to estimate the value of $\epsilon$ in the setting of LDP with fixed guarantees in a black-box scenario. By black-box we mean that the estimator does not know the 
internals of the mechanism or the distribution $p_{Z|X}(\cdot|x)$ generated by it, but can sample it as many times as needed. We assume that the estimator does this by calling an auxiliary procedure $\cal S(x)$ (the {\em sampler}) that samples in an i.i.d. way from $p_{Z|X}(\cdot|x)$. 
Furthermore, we wish the estimator to give sufficient guarantees, namely to provide a certain precision (the smaller the better) with a certain confidence (the higher the better).


First, we show that, without making any assumption about the distributions we want to estimate, such an estimator does not exist.
Then, we show that, under reasonable assumptions on the mechanism itself, we can design an estimator that can achieve \emph{any} level of precision with \emph{any} degree of confidence.

Notice that estimating \cref{eq:LDP} when \ $p_{Z|X}(z|x_2) > 0$ \ is equivalent to verifying
$\frac{p_{Z|X}(z|x_1)}{p_{Z|X}(z|x_2)}
    \leq e^\epsilon$.
Actually, when both $p_{Z|X}(z|x_1) > 0$ and $p_{Z|X}(z|x_2) > 0$, i.e., whenever $z \in \Z(x_1,x_2)$ where
    $$\Z(x_1,x_2) \defsym \{z\in \Z\ |\  p_{Z|X}(z|x_1)>0 \wedge p_{Z|X}(z|x_2)>0\},$$
we shall concentrate on the estimation of
$$
    \log\left(\frac{p_{Z|X}(z|x_1)}{p_{Z|X}(z|x_2)}\right)
    \leq \epsilon.
$$
We remark that all logarithms in this paper are natural.

For the negative result, we will show that it is impossible to estimate  $\epsilon$ even for a fixed pair of values $x_1$ and $x_2$. Namely, we  will consider the quantity
\begin{align}
\label{eq:epsx}
    \epsilon^{\star}(x_1, x_2)\defsym\underset{z\in\mathcal{Z}(x_1, x_2)}{\sup}\log\left(\frac{p_{Z|X}(z|x_1)}{p_{Z|X}(z|x_2)}\right).
\end{align}
%
In this setting, an estimator $\Tilde\epsilon$ is an algorithm that  takes in input a pair $(x_1,x_2)$, a precision $\gamma$  and a confidence $\delta$ (with $\gamma >0$ and $0<\delta<1$), and returns a real that is supposed to differ from $\epsilon^\star(x_1, x_2)$ for at most the precision with probability at least the confidence. Namely: 
$$ \Pr\left(\big|\epsilon^{\star}(x_1,x_2)-\Tilde{\epsilon}(x_1,x_2,\gamma,\delta)\big| \leq \gamma\right) \geq \conf.$$

For the positive result, of course we will need to range on all pairs, i.e., our final aim is to estimate  
\begin{align}
\label{eq:epsxx}
    \epsilon^{\star}(p_{Z|X})\defsym\underset{x_1, x_2\in\mathcal{X}}{\sup}\epsilon^{\star}(x_1, x_2).
\end{align}

\subsection{Impossibility result}
\label{sec:impossibility}

We now prove that \cref{eq:epsx} has no estimator with the desired guarantees; a fortiori, this proves that no estimator for \cref{eq:epsxx} exists.
We remark that this impossibility result is very strong: it shows that no estimator exists, even if (1) we are not very demanding about the precision and the confidence (namely, even if $\gamma$ is large and $\delta$ is small),
(2) even if the number of samples is unbounded and (3) the estimator is adaptive (namely, it can decide on the fly whether to stop or to continue sampling, based on previous samples).

\begin{theorem}[Impossibility]
    \label{thm:imposs}
    For every $\gamma > 0$ (precision), $0<\conf<1$ (confidence), 
    $x_1, x_2 \in \X$, and
    probabilistic estimator algorithm $\Tilde{\epsilon}$ that almost surely terminates, 
    there exists a probability distribution $p_{Z|X}$ such that 
    $$ \Pr\left(\big|\epsilon^{\star}(x_1,x_2)-\Tilde{\epsilon}(x_1,x_2,\gamma,\delta)\big| > \gamma\right) > 1-\conf.$$
\end{theorem}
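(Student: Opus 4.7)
My plan is a Le Cam-style two-point indistinguishability argument. Fix the estimator $\Tilde{\epsilon}$; I will pick a benign baseline $p_0$ under which $\epsilon^{\star}(x_1,x_2) = 0$ and then construct a perturbed $p_1$ whose $\epsilon^{\star}(x_1,x_2)$ is very large, chosen so that the two are statistically indistinguishable given $\Tilde{\epsilon}$'s (random, adaptively-chosen) sample budget. Concretely, take $\Z=[0,1]$ and $p_0(z\mid x_1)=p_0(z\mid x_2)=1$ for $z\in[0,1]$. Because $\Tilde{\epsilon}$ almost surely terminates under $p_0$, I can pick $N$ with $\Pr_{p_0}[T>N]\le \conf/12$, where $T$ is the number of samples it draws. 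Because $\Tilde{\epsilon}(x_1,x_2,\gamma,\conf)$ is then an a.s.-finite real-valued random variable under $p_0$, its law $\pi_0$ is tight on $\mathbb R$, so I can also choose $M>2\gamma$ so large that $\pi_0\bigl([M-\gamma,M+\gamma]\bigr)\le \conf/3$; both $N$ and $M$ depend only on $\Tilde{\epsilon},\gamma,\conf$.

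Now I build $p_1$ by perturbing only the conditional on $x_1$ on a thin sliver $A_\eta=[1-\eta,1]$: set $p_1(z\mid x_1)=e^M$ on $A_\eta$ and $p_1(z\mid x_1)=c$ on $[0,1-\eta]$ with $c=(1-\eta e^M)/(1-\eta)$ (so the density integrates to one), and let $p_1(z\mid x_2)=1$ on $[0,1]$ as under $p_0$. A direct computation then shows that under $p_1$ one has $\epsilon^{\star}(x_1,x_2)=M$, together with $\mathrm{TV}\bigl(p_0(\cdot\mid x_1),p_1(\cdot\mid x_1)\bigr)=\eta(e^M-1)$. Choosing $\eta=\conf/\bigl(12\,N(e^M-1)\bigr)$ and coupling the samples step by step (at each of the $N$ adaptive queries the conditional TV contributed is at most $\eta(e^M-1)$ if $\Tilde{\epsilon}$ queries $x_1$, and $0$ if it queries $x_2$) bounds the total variation between the joint laws of the first $N$ queries under $p_0$ and $p_1$ by $\conf/12$.

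The final step transfers this sample-level bound to the output of $\Tilde{\epsilon}$. Let $E=\{T\le N\}$, which lies in the $\sigma$-algebra generated by the first $N$ samples; then $\Pr_{p_0}[E^c]\le \conf/12$ and, by applying the TV bound to the indicator of $E^c$, $\Pr_{p_1}[E^c]\le \conf/6$. On $E$ the output of $\Tilde{\epsilon}$ is a measurable function of the first $N$ samples, so for every Borel set $A$ we get
\[
\bigl|\Pr_{p_0}[\Tilde{\epsilon}\in A]-\Pr_{p_1}[\Tilde{\epsilon}\in A]\bigr|\le \conf/12+\conf/12+\conf/6=\conf/3.
\]
Applying this to $A=[M-\gamma,M+\gamma]$ gives $\Pr_{p_1}[\Tilde{\epsilon}\in A]\le \conf/3+\conf/3<\conf$, hence $\Pr_{p_1}\bigl[|\epsilon^{\star}(x_1,x_2)-\Tilde{\epsilon}|>\gamma\bigr]>1-\conf$, and $p_{Z|X}=p_1$ is the desired witness.

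The main obstacle is handling the adaptive, a.s.-terminating stopping time rigorously: a naive ``cap at $N$ samples'' argument must simultaneously control the overrun probability under both $p_0$ and $p_1$, and the per-step coupling must accommodate query choices that themselves depend on previous samples. The soft ingredients, tightness of $\pi_0$ and the existence of a large $M$ with small $\pi_0$-mass on $[M-\gamma,M+\gamma]$, are routine. The same construction adapts to discrete $\Z$ by partitioning $[0,1]$ into tiny bins and to central (R)DP by viewing $x_1,x_2$ as adjacent databases, so the argument covers the remaining variants mentioned in the paper.
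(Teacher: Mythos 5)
Your proof is correct, and it reaches the conclusion by a genuinely different route. The paper works on a \emph{binary} $\Z=\{0,1\}$ with Bernoulli mechanisms $p=\mathcal B(d)$ and $q=\mathcal B(d/h)$ for a tiny $d$ and a large $h$; its key device is to condition on the event $P_n$ that every sample returned is $0$, observe that on $P_n$ the estimator's behaviour is identical to its behaviour under the degenerate mechanism $\mathcal B(0)$, and then pick $n_0$, $h$ and $d$ in that order to make the failure probability at least $(\sqrt[3]{1-\conf})^3$. Your argument is Le Cam--style: you keep $p(\cdot\mid x_2)$ fixed, perturb $p(\cdot\mid x_1)$ on a sliver, bound the total variation between the two $N$-step sample trajectories (handling adaptivity via a step-by-step coupling and the stopping time $T$), and then use the tightness of the output law $\pi_0$ to find a window $[M-\gamma,M+\gamma]$ of small $\pi_0$-mass. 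The paper's conditioning trick is arguably more elementary and yields a closed-form choice of parameters, and it is natively discrete (which the authors emphasize when noting that impossibility in the binary case implies impossibility in general); your TV-based coupling is more modular (it would let one perturb in many ways, not just towards $\mathcal B(0)$), plugs directly into the continuous setting, and makes the handling of the random, adaptive query budget explicit rather than implicit. Both correctly exploit the same underlying idea---concentrate the privacy violation on a low-probability event so the estimator's output distribution barely changes---and both deliver the stated bound; the only cosmetic slip in the paper's own derivation is that it establishes $\geq 1-\conf$ rather than the stated strict inequality (fixable by shrinking $\rho$), whereas your accounting of $\conf/3+\conf/3<\conf$ gives the strict version directly.
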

\begin{proof}[Proof idea]
No estimator can exist in the general case because $p_{Z|X}$ has no regularity constraint and the points $z$ for which $\epsilon^{\star}$ is reached can have a low probability of being sampled. Even though the estimator has access to the full range of values $\mathcal{Z}$, since it does not know the probability of each output, it cannot adapt (even on the fly) the number of calls to $\cal S$ to sample at least once each $z\in\mathcal{Z}$.
See the appendix for full details.
\end{proof}

\begin{remark}
    The proof of~\cref{thm:imposs} assumes a binary setting both for $\mathcal{X}$ and $\mathcal{Z}$: an impossibility result in this setting clearly implies impossibility in more general ones, e.g., for continuous noise functions. Moreover, we do not use any property of $x_1$ and $x_2$. In particular, $x_1$ and $x_2$ could be the results of a query applied to two adjacent datasets; so, the result can be adapted also  to the  estimation of $\epsilon$ for central DP. 
\end{remark}

\subsection{Estimating $\epsilon^{\star}(x_1, x_2)$}
\label{sec:est_ldp}

Let us now impose a few reasonable
assumptions on the underlying probability distribution, namely that
$p_{Z|X}$ is defined over a closed interval $\mathcal{Z}\defsym [a, b]$, and that both $p_{Z|X}(\cdot|x_1)$ and $p_{Z|X}(\cdot|x_2)$ are $C$-Lipschitz, with $C < 2/W^2$ and $W=b-a$ (this condition on $C$ is needed to let $\tau$, as defined in \eqref{mdef}, be positive).
Once chosen the precision and confidence parameters, we will prove that $n$ samples from the two distributions (for a suitable value of $n$ that depends on $W$, $\gamma$, $\delta$ and $C$ -- see \cref{ndef} later on)
are enough for estimating \cref{eq:epsx} with guarantees. 


We propose a histogram-based estimator whose  pseudocode is 
provided in~\Cref{alg:histEstimator}.
For the desired precision $\gamma$, the estimator first divides $\mathcal{Z}$ into $m$ sub-intervals, each of width $w\defsym W/m$, where
    \begin{align}
        m \defsym \left\lceil\frac{6CW}{\tau\gamma}\right\rceil
        \label{mdef}
\qquad\text{and}\qquad
    \tau\defsym\frac{1}{W}-\frac{CW}{2}.
    \end{align}
In particular, 
we set $z_0 = a$ and $z_{j+1} = z_i + w$; 
one can readily check that $z_m = b$.
Then, the estimator chooses $n$ (the number of samples) such that
    \begin{align}
        2m(1-w\tau)^n+4f(n,w\tau,\gamma/12) \leq 1- \conf,
        \label{ndef}
    \end{align}
    where $f$ is defined as
    \begin{align}
    f(x,y,z) \defsym \frac{\exp\left(\frac{-xy(e^z-1)^2}{1+e^z}\right)+\exp\left(\frac{-xy(1-e^{-z})^2}{2}\right)}{1-(1-y)^x}
    \label{fdef}
    \end{align}
(note that $f$ is exponentially decreasing in $x$ and $y$). 
The estimator then invokes the sampler $n$ times both for $x_1$ and for $x_2$ (lines 4-7), counts the number of samples that appear in each sub-interval (lines 8-14), and considers these numbers as the approximations of $p_{Z|X}(\cdot|x_1)$ and $p_{Z|X}(\cdot|x_2)$ in that sub-interval; so, it computes their ratio and returns the highest value.

\begin{algorithm}[t]
    \caption{Histogram-based estimator for $\epsilon^{\star}(x_1, x_2)$}\label{alg:histEstimator}
    \begin{algorithmic}[1]
        \State \textbf{Input:} 
        $\mathcal{Z} (= [a,b]), \gamma, \delta, C$ 
        \State \textbf{Output:} $\Tilde{\epsilon}(x_1,x_2,{\cal Z},\gamma,\delta,C)$ \Comment{differing from $\epsilon^{\star}(x_1, x_2)$}
        \hspace*{4.7cm}for $\leq \gamma$ with prob. $\geq \conf$
        \State Compute $m$ and $n$ as in \cref{mdef} and \cref{ndef}, resp. \label{line0}
        \For{$1 \leq i \leq n$}
        \State $s_1[i] \gets \mathcal{S}(x_1)$
        \State $s_2[i] \gets \mathcal{S}(x_2)$
        \EndFor
        \For{$1 \leq j \leq m$}
            \State $N_j \gets \sum_{i}\1\left( z_j \leq s_1[i] < z_{j+1}\right )$ \label{lineA}
            \State $M_j \gets \sum_{i}\1\left( z_j \leq s_2[i] < z_{j+1}\right)$ \label{lineB}
            \If{$N_j = 0 \text{ or } M_j = 0$}
            \State fail
            \EndIf
        \EndFor
        \State \Return $\max_j\log\left(\frac{N_j}{M_j}\right)$  \label{lineR}
    \end{algorithmic}
\end{algorithm}

We are now able to state our theorem. 

\begin{figure}[t]
\framebox{
\centering
\begin{subfigure}[b]{0.45\textwidth}
\centering
\includegraphics[scale=7.5]{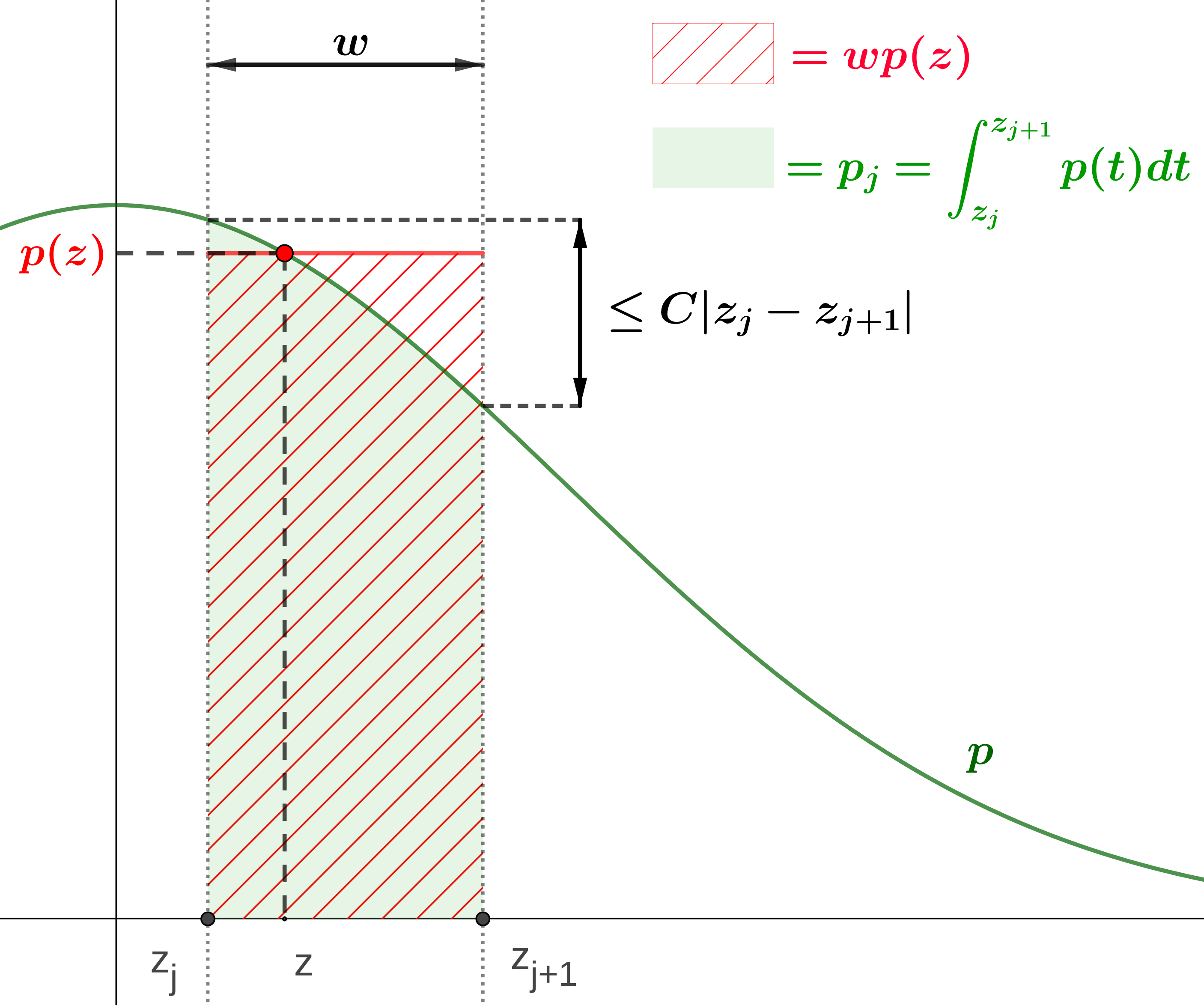}
\caption{}
\label{fig:smoothfig}
\end{subfigure}
}
\framebox{
\begin{subfigure}[b]{0.45\textwidth}
\centering
\includegraphics[scale=.35]{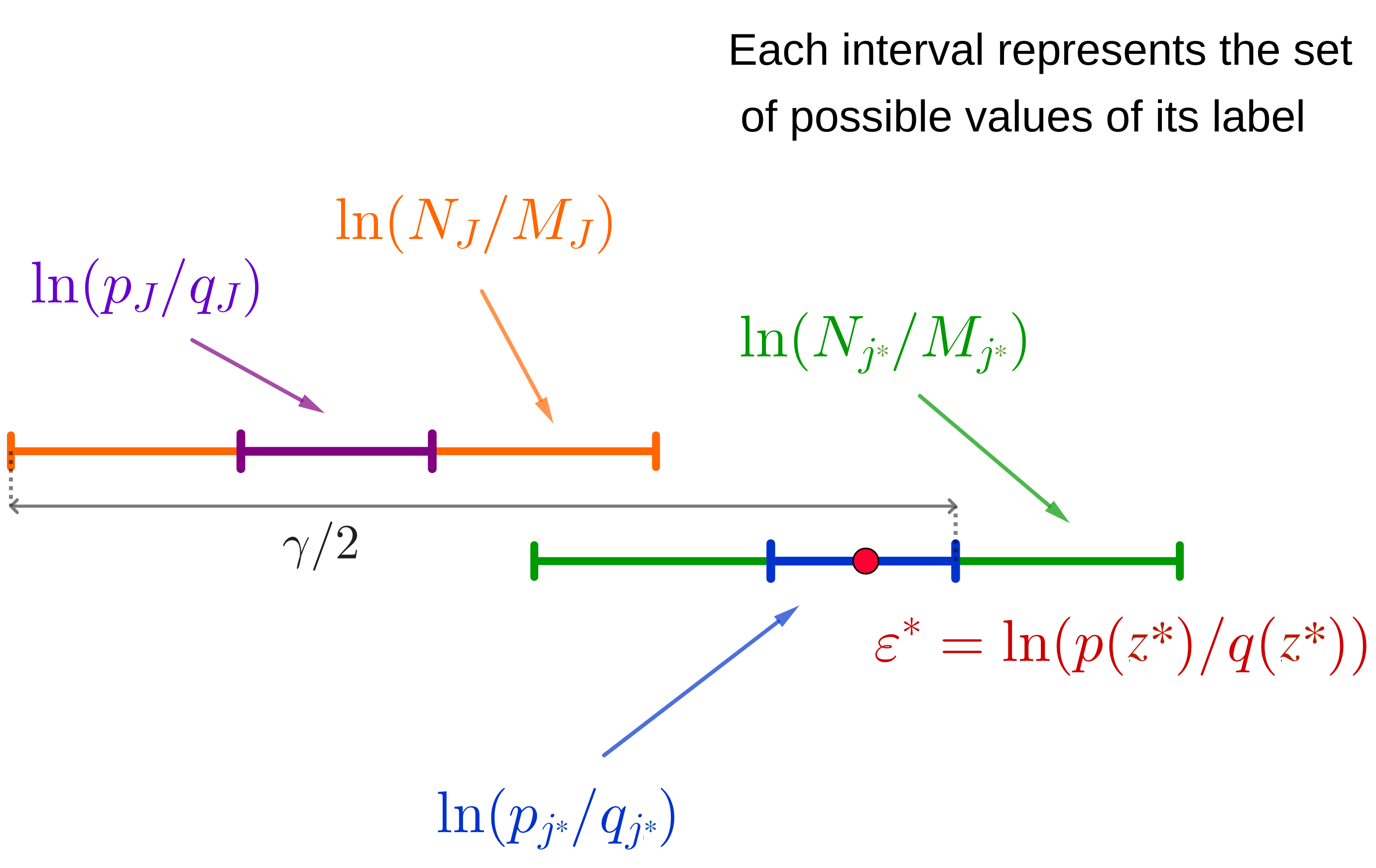}
    \caption{}
    \label{fig:summary}
\end{subfigure}
}
\caption{Proof sketch~\cref{thm:estimator}: (a) The smoothness of $p$ implies that $wp(z)$ is close to $p_j$. (b) Summary of the second and third steps of the proof, illustrating why the answer is close to $\epsilon^{\star}(x_1, x_2)$. The smallest intervals are bounded thanks to the smoothness of the distributions and the largest ones by the bound of \cref{chernoff}, implying that the empirical average tends to the expected value.
\vspace{-5mm}}
\end{figure}

\begin{theorem}[Correctness]
    \label{thm:estimator}
    Let densities $p=p_{Z|X}(\cdot|x_1)$ and $q=p_{Z|X}(\cdot|x_2)$ over $\mathcal{Z} = [a,b]$ be $C$-Lipschitz, with $C<2/W^2$ and $W=b-a$. For every $\gamma > 0$ (precision) and $0<\conf<1$ (confidence),
    we have that 
    \begin{align*}
       &\Pr\left(
       \begin{array}{l}
       \text{\Cref{alg:histEstimator} succeeds and }
        \\
         |\epsilon^{\star}(x_1, x_2) -\Tilde{\epsilon}(x_1,x_2,{\cal Z},\gamma,\delta,C)|\leq\gamma
         \end{array}
        \right )\geq \conf.
    \end{align*}
\end{theorem}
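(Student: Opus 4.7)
The plan is to split the total error $|\tilde\epsilon(x_1,x_2,\Z,\gamma,\conf,C) - \epsilon^\star(x_1,x_2)|$ into a \emph{discretization} error and a \emph{sampling} error, and then combine a union bound on failure with concentration on the non-failure event. Write $p(z)\defsym p_{Z|X}(z|x_1)$, $q(z)\defsym p_{Z|X}(z|x_2)$, $p_j\defsym \int_{z_j}^{z_{j+1}} p(z)\,dz$, $q_j\defsym \int_{z_j}^{z_{j+1}} q(z)\,dz$, so that $N_j\sim\mathrm{Binomial}(n,p_j)$ and $M_j\sim\mathrm{Binomial}(n,q_j)$, independently across the two distributions. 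The algorithm returns $\hat\epsilon \defsym \max_j\log(N_j/M_j)$, which I would compare to the \emph{intermediate} quantity $\bar\epsilon \defsym \max_j\log(p_j/q_j)$: the plan is to show $|\hat\epsilon-\bar\epsilon|$ small with high probability, and $|\bar\epsilon-\epsilon^\star(x_1,x_2)|$ small deterministically.

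\textbf{Step 1 (non-failure).} By \cref{lower} together with $C<2/W^2$, both densities are uniformly bounded below by $\tau>0$, hence $p_j,q_j\ge w\tau$. Since $\Pr(N_j=0)=(1-p_j)^n\le(1-w\tau)^n$, a union bound over the $2m$ events $\{N_j=0\}\cup\{M_j=0\}$ yields $\Pr(\text{Alg.~fails})\le 2m(1-w\tau)^n$, matching the first summand of \eqref{ndef}.

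\textbf{Step 2 (discretization via Lipschitzness).} For any $z^\star\in\Z(x_1,x_2)$ lying in cell $j$, the $C$-Lipschitz property gives $|p(z)-p(z^\star)|\le Cw$ throughout the cell, hence $|p_j-wp(z^\star)|\le Cw^2$ and $|p_j/(wp(z^\star))-1|\le Cw/\tau$. The choice $m=\lceil 6CW/(\tau\gamma)\rceil$ forces $w\le\tau\gamma/(6C)$, so the relative error is at most $\gamma/6$ and, via $|\log(1+x)|=O(|x|)$, the same bound propagates to $\log$ of the densities. Applying this to both $p$ and $q$ and then taking the $\sup$ over $z^\star$ in each cell and over $j$ gives $|\bar\epsilon-\epsilon^\star(x_1,x_2)|\le \gamma/3$ (a small fraction of $\gamma$; Fig.~\ref{fig:smoothfig} illustrates the argument cell-by-cell).

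\textbf{Step 3 (sampling via Chernoff).} The multiplicative Chernoff bound, in the exponential form $\Pr(N_j/(np_j)\ge e^{z})\le \exp(-np_j(e^{z}-1)^2/(1+e^{z}))$ and $\Pr(N_j/(np_j)\le e^{-z})\le\exp(-np_j(1-e^{-z})^2/2)$, applied with $z=\gamma/12$ and the uniform lower bound $p_j\ge w\tau$, produces exactly the numerator of $f(n,w\tau,\gamma/12)$; conditioning on $N_j>0$ (i.e.~on the cell not triggering failure) introduces the denominator $1-(1-w\tau)^n$. It remains to translate cell-level deviations into a control of $|\hat\epsilon-\bar\epsilon|$ while paying only four (not $O(m)$) applications of $f$. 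For the lower direction $\hat\epsilon\ge\bar\epsilon-\gamma/6$, test only the \emph{deterministic} index $j^\star\defsym\arg\max_j\log(p_j/q_j)$: using $\hat\epsilon\ge\log(N_{j^\star}/M_{j^\star})$ and bounding the two deviations $|\log(N_{j^\star}/(np_{j^\star}))|$, $|\log(M_{j^\star}/(nq_{j^\star}))|$ by $\gamma/12$ costs two copies of $f$. For the opposite direction $\hat\epsilon\le\bar\epsilon+\gamma/6$, the empirical argmax $\hat\jmath$ is random, and the two further copies of $f$ come from a two-sided tail bound applied at $\hat\jmath$, yielding the $4f(n,w\tau,\gamma/12)$ budget in \eqref{ndef}.

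\textbf{Step 4 (combining).} On the complement of the failure event and the concentration event, the triangle inequality gives $|\hat\epsilon-\epsilon^\star(x_1,x_2)|\le |\hat\epsilon-\bar\epsilon|+|\bar\epsilon-\epsilon^\star|\le \gamma/6+\gamma/3<\gamma$; a union bound on the two bad events and the defining inequality of $n$ in \eqref{ndef} give the claimed probability $\ge\conf$.

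The main obstacle I anticipate is the end of Step~3: getting the upper direction $\hat\epsilon\le\bar\epsilon+\gamma/6$ with only an $O(1)$ — not $O(m)$ — number of tail-bound applications. A naive uniform-in-$j$ union bound would replace $4f$ by $4mf$ in \eqref{ndef} and loosen $n$; avoiding this requires extracting the bound at the (random) argmaximizing cell directly, which is the delicate point to pin down carefully in the formal write-up.
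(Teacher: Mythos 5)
Your overall plan is sound and close in spirit to the paper's, but the paper organizes the argument differently in a way that makes your ``delicate point'' evaporate. You split the error through the intermediate quantity $\bar\epsilon=\max_j\log(p_j/q_j)$; the paper never introduces $\bar\epsilon$. Instead, with $z_0$ a continuous maximizer, $j_0$ the (deterministic) cell containing $z_0$, $J$ the (random) returned argmax, and $C_{j,\gamma}$ the event $|\log(N_j/M_j)-\log(p_j/q_j)|\le\gamma/6$, the paper's Claim~2 proves directly that on $E=\{\text{success}\}\cap C_{J,\gamma}\cap C_{j_0,\gamma}$ one has $|\log(N_J/M_J)-\log(p_{j_0}/q_{j_0})|\le\gamma/2$, by ruling out both overshoot and undershoot \emph{by contradiction}: overshoot would manufacture, via Claim~1 on cell $J$, a point $z$ with $\log(p(z)/q(z))>\epsilon^\star$, contradicting the definition of $\epsilon^\star$; undershoot would force $\log(N_{j_0}/M_{j_0})>\log(N_J/M_J)$, contradicting the definition of $J$ as argmax. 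Adding Claim~1 at $j_0$ then gives $|\hat\epsilon-\epsilon^\star|\le\gamma/2+\gamma/6\le\gamma$. This is a two-event decomposition, and the $4f$ in~\eqref{ndef} is exactly two copies for $\Pr(C_{j_0,\gamma}^c\,|\,N_{j_0},M_{j_0}>0)$ plus two for $\Pr(C_{J,\gamma}^c\,|\,N_J,M_J>0)$.

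Two remarks about your Step~3. First, the upper direction is actually immediate once you have the single event $C_{J,\gamma}$: on that event $\hat\epsilon=\log(N_J/M_J)\le\log(p_J/q_J)+\gamma/6\le\bar\epsilon+\gamma/6$, since $\log(p_J/q_J)\le\bar\epsilon$ deterministically; no clever extraction is needed, and the argument is no more expensive than the lower direction. Second, the real subtlety in both your plan and the paper's is bounding $\Pr(C_{J,\gamma}^c)$ for a data-dependent index $J$: the paper passes through $\Pr(C_{J,\gamma}^c, N_J>0, M_J>0)\le\Pr(C_{J,\gamma}^c\mid N_J>0, M_J>0)$ and then cites the per-cell Chernoff estimate~\eqref{csucc}, effectively treating $J$ as if it were a fixed index. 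You correctly flagged this as the place where a naive treatment would cost a factor $m$; the paper does not pay that factor, but you should scrutinize that step rather than assume a separate two-sided trick is needed. Also note that in your Step~2 the bound $|p_j-wp(z^\star)|\le Cw^2$ can be tightened to $Cw^2/2$ by integrating $|t-z^\star|$ over the cell (as the paper does); this is why the paper obtains $\gamma/6$ where you report $\gamma/3$, but either constant makes the final triangle inequality close.
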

\begin{proof}[Proof sketch]
First, we know that, since $p$ is smooth, we can set the number of sub-intervals $m$ to reduce $w$ (the width of the sub-intervals) and have the probability $p_j$ of sampling within the sub-interval $j$ to be close to any $wp(z)$, with $z\in[z_j,z_{j+1})$ (see~\cref{fig:smoothfig} for an intuition). Being this true also for $q$, we get that \ $\log\left(\frac{p_{j}}{q_{j}}\right)$ \ is close to \ $\log\left(\frac{p(z)}{q(z)}\right)$.
In particular, this is true for some $z^{\star}$ for which $\epsilon^{\star}$ is reached.

Second, we consider the case where the binomial variables $N_j$ and $M_j$ are close to their average $np_j$ and $nq_j$. This happens with good probability and ensures that \ $\log\left(\frac{N_j}{M_j}\right)$ \ is close to \ $\log\left(\frac{p_{j}}{q_{j}}\right)$ and it is true thanks to the following Lemma:
\begin{lemma}
    \label[lemma]{chernoff}
    Let $X_1,\ldots,X_n$ be i.i.d.\! Bernoulli random variables with parameter $p>0$,
    and $S_n$ denote their sum (hence $\E(S_n) = np$). Then, for every $\gamma > 0$,
    $$\Pr(|\log S_n - \log np| > \gamma\ |\ S_n>0) \leq f(n,p,\gamma)$$
    where $f$ is defined in \eqref{fdef}.
\end{lemma}

Third, we have that the answer \ $\log\left(\frac{N_J}{M_J}\right)$ \ of the estimator is actually close to \ $\log\left(\frac{p_{j^{\star}}}{q_{j^{\star}}}\right)$, where $j^{\star}$ is the sub-interval containing $z^{\star}$, that is the value that $\epsilon^\star$ originates from (recall that, by definition of $\epsilon$-LDP, $\epsilon^\star = \sup_{z\in\mathcal{Z}}\log\left(p(z)/q(z)\right)$). Otherwise, if \ $\log\left(\frac{N_J}{M_J}\right)$ \ was much larger, we would have a contradiction with the definition of $\epsilon^{\star}$; similarly, if it was much smaller, it would imply \ $\log\left(\frac{N_J}{M_J}\right)<\log\left(\frac{N_{j^{\star}}}{M_{j^{\star}}}\right)$ \ and mean that the estimator had not returned the maximum value.

Finally, \ $\log\left(\frac{N_J}{M_J}\right)$ \ is close to \ $\log\left(\frac{p_{j^{\star}}}{q_{j^{\star}}}\right)$ \ which is close to \ $\log\left(\frac{p(z^{\star})}{q(z^{\star})}\right)=\epsilon^{\star}$; so the estimator works. See~\cref{fig:summary} for an illustration of this conclusion.
\end{proof}

To conclude, notice that, if
 $t$ denotes the time complexity of the sampler,
 the cost of sampling is $O(nt)$, whereas
the complexity of calculating each $N_j$ and $M_j$ is $O(n)$; therefore, the total complexity of Algorithm 1 is $O(n(t+m))$.

\begin{remark}
Our method is based on a discretization of the noise function. Obviously, it can be trivially adapted to the case in which the function is  discrete with finite domain provided a lower bound on the probability distribution. For instance, this is the case of the well known $k$-RR mechanism \cite{KOV16} (randomized response on a domain of $k$ elements).
\end{remark}

\subsection{On the $C$-Lipschitzness assumptions}
\label{sec:Lipsch}


Our estimator guarantees the bounds on the estimated $\epsilon$ only if the distributions are smooth; this is similar, e.g., to \cite{LWMZ22}. However, a provider could lie both about $\epsilon$ and about the smoothness of the distributions: it can trick the estimator into answering $\epsilon$ although the system does not actually provide such a level of privacy. Therefore, we have 2 cases to consider:
\paragraph*{Case 1} The estimator is in contradictions with the claims of the provider. Here we can conclude that the provider should not be trusted because it lied either about $\epsilon$ or about the smoothness.
\paragraph*{Case 2} The estimator agrees on the $\epsilon$ claimed by the provider. Then either the provider did not lie at all or it lied about the smoothness of the distributions and tricked the estimator.

Unfortunately, checking Lipschitzness of a given function is well-known to be a hard task \cite{JR11}. So, to improve the unsatisfactory conclusion drawn in the second case, we can implement a safety check derived from the following theorem:

\begin{theorem}[Necessary condition for Lipschitzness]
    \label{thm:scLip}
    Assuming that the distributions are $C$-Lipschitz, then, for any $c > 0$, the event
    \begin{align}
        \forall j. &
        \biggl[\frac{1}{n}|N_j-N_{j+1}| \leq 2c+Cw^2\ \text{ and }\nonumber\\
        & \ \, \frac{1}{n}|M_j-M_{j+1}| \leq 2c+Cw^2\biggr]
        \label{scLip}
    \end{align}
     happens with probability at least $1-8me^{-nc^2/3}$.
\end{theorem}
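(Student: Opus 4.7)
The plan is to decouple the claim into two pieces: a deterministic Lipschitz bound comparing the bin probabilities $p_j \defsym \int_{z_j}^{z_{j+1}} p(z)\,dz$ and $p_{j+1}$ (and analogously $q_j, q_{j+1}$), and a high-probability concentration statement saying that each empirical proportion $N_j/n$ (respectively $M_j/n$) is close to its mean $p_j$ (respectively $q_j$). A triangle inequality will then combine the two ingredients into the claimed bound on $\frac{1}{n}|N_j - N_{j+1}|$, and a union bound over $j$ produces the overall failure probability.

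First, I would establish the deterministic estimate $|p_j - p_{j+1}| \leq C w^2$ (and analogously for the $q_j$'s) via the change of variable $u = z + w$ in the integral defining $p_{j+1}$,
$$p_j - p_{j+1} \;=\; \int_{z_j}^{z_{j+1}} \bigl( p(z) - p(z+w) \bigr)\,dz,$$
and then applying $|p(z) - p(z+w)| \leq C w$ from $C$-Lipschitzness over an interval of width $w$.

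Second, since $N_j \sim \mathrm{Bin}(n, p_j)$ and $M_j \sim \mathrm{Bin}(n, q_j)$, I would invoke the multiplicative Chernoff inequality to obtain
$$\Pr\bigl(|N_j/n - p_j| > c\bigr) \;\leq\; 2 \exp(-n c^2/3),$$
using $p_j \leq 1$ to dominate the Bernoulli-adapted exponent $n c^2/(3 p_j)$ by its worst case $n c^2/3$, and likewise for each $M_j$. The triangle inequality
$$\tfrac{1}{n}\bigl|N_j - N_{j+1}\bigr| \;\leq\; |N_j/n - p_j| + |p_j - p_{j+1}| + |p_{j+1} - N_{j+1}/n|$$
then shows that the event in \eqref{scLip} is implied by the conjunction of the $2m$ concentration events $\{|N_j/n - p_j| \leq c\}$ and $\{|M_j/n - q_j| \leq c\}$. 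A union bound over these events, accounting for the two one-sided tails of each and for the multiplicity arising from the pair-indexed formulation, yields a total failure probability of at most $8 m \exp(-n c^2 / 3)$.

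The only slightly delicate step is making the Chernoff factor uniform in $p_j$: when $c > p_j$ the textbook multiplicative form (with exponent $\mu \delta^2 / 3$ valid only for $\delta \in (0,1]$) does not apply directly, and I would have to fall back on the general exponent $\mu \delta^2/(2 + \delta)$ and check that, after invoking $p_j \leq 1$, it still dominates $n c^2/3$. Beyond that, the Lipschitz computation and the union bound are entirely routine.
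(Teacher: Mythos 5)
Your proposal follows essentially the same route as the paper's: the triangle inequality $\frac{1}{n}|N_j-N_{j+1}|\leq |N_j/n-p_j|+|p_j-p_{j+1}|+|p_{j+1}-N_{j+1}/n|$, the deterministic bound $|p_j-p_{j+1}|\leq Cw^2$, multiplicative Chernoff with $\delta=c/p_j$ simplified via $p_j\leq 1$, and a union bound over all $j$ and over $N$ and $M$. The only cosmetic divergence is in proving $|p_j-p_{j+1}|\leq Cw^2$: you use a direct change of variables $u=z+w$, whereas the paper reuses its earlier estimate $|p_j-wp(z)|\leq Cw^2/2$ and splits through $wp(z_{j+1})$; and you flag the $\delta>1$ regime of Chernoff, which the paper silently passes over (it is harmless because the event is vacuous when $c>1$, and the $\delta\in(0,1]$ form suffices otherwise).
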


In practice, one can make multiple executions of \Cref{alg:histEstimator}, check if~\cref{scLip} occurs for each of them, and compute the empirical probability of this event (proportion of executions for which the event occurred).
If the empirical probability of~\cref{scLip} is significantly lower than the theoretical lower bound from~\cref{thm:scLip}, then it is likely (depending on the number of executions) that the distributions are not $C$-Lipschitz; so, the provider lied.



\begin{table*}[t]
    \centering
    \resizebox{1.5\columnwidth}{!}{
    \begin{tabular}{c|cccc||c|cccc}
    \toprule
    \multicolumn{5}{c}{Truncated Laplace}
    & \multicolumn{5}{c}{Truncated Gaussian}
    \\
    \cmidrule(lr){1-5} \cmidrule(lr){6-10}
    $B$ & $D$ & $C$ & $\epsilon$ & $C<2/W^2$ & $\sigma$ & $D$ & $C$ & $\epsilon$ & $C<2/W^2$ \\
    \midrule\midrule
    0.5 & 9.25 & 4.63 & 2.00 & \xmark &
    0.3 & 7.06 & 5.40 & 5.56 & \xmark
    \\
    0.8 & 4.38 & 2.19 & 1.25 & \xmark &
    0.5 & 2.42 & 2.03 & 2.00 & \xmark
    \\
    1 & 3.16 & 1.58 & 1.00 & \cmark &
    0.6 & 1.62 & 1.49 & 1.39 & \cmark
    \\
    2 & 1.27 & 0.64 & 0.50 & \cmark &
    1 & 0.54 & 0.70 & 0.50 & \cmark
    \\
    5 & 0.44 & 0.22 & 0.20 & \cmark &
    2 & 0.13 & 0.23 & 0.13 & \cmark 
    \\
    \bottomrule
    \end{tabular}
    }
    \caption{The Lipschitz constants ($D$ and $C$) for truncated Laplace and Gaussian mechanisms,  varying according to the value of their defining parameter (viz., $B$ and $\sigma$, resp.). We also display the resulting level of privacy $\epsilon$ and a check mark indicating if the guarantees of our theorems hold for the displayed values or not (recall that $\X\equiv\Z=[0,1]$ and so $W=1$).}
    \label{tab:DCconstants}
\end{table*}

\subsection{Estimating $\epsilon^{\star}(p_{Z|X})$}
\label{subsec:actEst}
As already said, \Cref{alg:histEstimator} estimates $\epsilon^{\star}(x_1, x_2)$, for a given pair $x_1,x_2$.
However, we can ask it to return the maximum between 
\ $\max_j\log\left(\frac{N_j}{M_j}\right)$ \
and
\ $-\min_j\log\left(\frac{N_j}{M_j}\right)=\max_j\log\left(\frac{M_j}{N_j}\right)$, thus 
avoiding switching $p$ and $q$ and call the estimator again.

A more delicate issue is that up to now we have only estimated the value of $\epsilon$ for a fixed pair $(x_1,x_2)$; so we need to generalize in this direction if we aim at estimating \cref{eq:epsxx}.
One solution is when $\mathcal{X}$ is finite: we can call the estimator for all 2-subsets $\{x_1,x_2\}$, resulting in 
$O(|\mathcal{X}|^2)$ calls; this is what is advocated, e.g., in \cite{LO19}. However, if $\X$ is big, the cost grows dramatically.
Alternatively, we can follow \cite{AKD22} and choose a subset of all possible pairs; this gives a statistical lower bound of the overall $\epsilon$. We can also follow \cite{LWMZ22}, where the authors focus on a {\em relative} notion of DP, i.e., DP restricted to a subset of $\cal X$; in this way, we estimate the DP parameter on this subset by running our estimator on all the pairs of this subset (that, indeed, is usually quite small). However, the latter two approaches reduce the computational cost at the price of losing guarantees on the estimation.

We now propose an alternative method that works well when $\X$ is infinite or very big; the price we pay is another mild assumption on the densities involved. Up to now, we have assumed that, for every $x \in \cal X$, $p_{Z|X}(\cdot|x)$ is $C$-Lipschitz, for $C < 2/W^2$ and $W$ is the width of $\cal Z = [a,b]$. Let us now also assume that, for every $z \in \cal Z$, $p_{Z|X}(z|\cdot)$ is $D$-Lipschitz, for some $D$; this new assumption is quite mild since we are not posing any constraint on $D$, and so any doubly differentiable function satisfies this requirement (since $\cal X$ is a closed interval). Then, the new algorithm that we propose first discretizes the interval $\cal X = [c,d]$ in $k$ buckets (for a proper value of $k$), then takes $x_i$ to be the mid-point of bucket $i$, runs \Cref{alg:histEstimator} for all possible $\{x_i,x_j\}$ and returns the maximum between all the estimated values, by ignoring the failed computations of $\Tilde\epsilon_{ij}$ when calculating the returned value.  
The details are given in \Cref{alg:histEstimatorAll}, whose overall time complexity is $O(k^2n(t+m))$; its correctness is provided by the following result, where we say that the algorithm \textit{succeeds} if at least one invocation of \Cref{alg:histEstimator} succeeds.

\begin{algorithm}[t]
    \caption{Estimator for $\epsilon^{\star}(p_{Z|X})$}\label{alg:histEstimatorAll}
    \begin{algorithmic}[1]
        \State \textbf{Input:} 
        $\mathcal{Z} (= [a,b]), \mathcal{X} (= [c,d]), \gamma, \delta, C, D$ 
        \State \textbf{Output:} $\Tilde{\epsilon}({\cal Z},{\cal X},\gamma,\delta,C,D)$ 
        \State Let $k \geq \frac{3D(d-c)}{\tau\gamma}$, where $\tau$ is defined in \cref{mdef}\label{kdef00}
        \State Divide $\cal X$ in $k$ buckets, with $x_i$ the mid-point of bucket $i$
        \ForAll{$\{x_i,x_j\} \subseteq \{1,\ldots,k\}$}
            \State 
            \label{lineTildeps}
            $\Tilde\epsilon_{ij} \gets \Tilde\epsilon(x_i,x_j,{\cal Z},\frac \gamma 3, \sqrt{\delta}\!, C)$, by invoking Alg.\ref{alg:histEstimator}
        \EndFor
        \State \Return $\max_{ij}\Tilde\epsilon_{ij}$
    \end{algorithmic}
\end{algorithm}

\begin{theorem}
\label{thm:buckets}
Let $\Z=[a,b]$, $\X = [c,d]$, and $p_{Z|X}$ be such that, for every $x \in \cal X$, $p_{Z|X}(\cdot|x)$ is $C$-Lipschitz, for $C < 2/W^2$ and $W=b-a$, and that, for every $z \in \cal Z$, $p_{Z|X}(z|\cdot)$ is $D$-Lipschitz, for some $D$. For every $\gamma > 0$ (precision) and $0<\conf<1$ (confidence),
    we have that 
    \begin{align*}
       &\Pr\left(
       \begin{array}{l}
       \text{\Cref{alg:histEstimatorAll} succeeds and }
        \\
        |\epsilon^{\star}(p_{Z|X}) -\Tilde{\epsilon}({\cal Z},{\cal X},\gamma,\delta,C,D)|\leq\gamma
        \end{array}
        \right )\geq \conf.
    \end{align*}
\end{theorem}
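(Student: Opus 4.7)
The approach is to combine a deterministic discretization bound, based on the $D$-Lipschitz hypothesis on $p_{Z|X}(z|\cdot)$, with the per-pair probabilistic guarantee of Theorem~\ref{thm:estimator}. First, for arbitrary $(x_1,x_2) \in \X^2$, let $(\tilde x_1,\tilde x_2)$ be the pair of closest bucket midpoints, so that $|x_i - \tilde x_i| \leq (d-c)/(2k)$. Since $p_{Z|X}(\cdot|x)$ is $C$-Lipschitz with $C < 2/W^2$, Lemma~\ref{lower} provides a uniform lower bound $p_{Z|X}(z|x) \geq \tau > 0$. Combining this with $D$-Lipschitzness in $x$ and applying the mean value theorem to $\log$,
\[
\bigl|\log p_{Z|X}(z|x_i) - \log p_{Z|X}(z|\tilde x_i)\bigr| \leq \frac{D\, |x_i - \tilde x_i|}{\tau}.
\]
Summing over $i \in \{1,2\}$ and taking $\sup_z$ yields $|\epsilon^\star(x_1,x_2) - \epsilon^\star(\tilde x_1,\tilde x_2)| \leq D(d-c)/(k\tau) \leq \gamma/3$ by the choice of $k$ in line~\ref{kdef00}; a further $\sup_{x_1,x_2}$ then gives $\bigl|\epsilon^\star(p_{Z|X}) - \max_{i,j} \epsilon^\star(x_i,x_j)\bigr| \leq \gamma/3$.

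Next, I would apply Theorem~\ref{thm:estimator} to each of the $\binom{k}{2}$ subcalls with precision $\gamma/3$ and confidence $\sqrt\delta$, defining
\[
A_{ij} := \bigl\{\text{Alg.~\ref{alg:histEstimator} for }(x_i,x_j)\text{ succeeds and }|\tilde\epsilon_{ij}-\epsilon^\star(x_i,x_j)|\leq \gamma/3\bigr\};
\]
these satisfy $\Pr(A_{ij}) \geq \sqrt\delta$ and are mutually independent because the samplers for different pairs use fresh randomness. Let $(i^\star,j^\star)$ attain $\max_{i,j} \epsilon^\star(x_i,x_j)$. On $A_{i^\star j^\star}$, Alg.~\ref{alg:histEstimatorAll} succeeds and $\tilde\epsilon \geq \tilde\epsilon_{i^\star j^\star} \geq \epsilon^\star(p_{Z|X}) - 2\gamma/3$, giving the lower side; on $\bigcap_{ij} A_{ij}$, every successful $\tilde\epsilon_{ij}$ satisfies $\tilde\epsilon_{ij} \leq \epsilon^\star(x_i,x_j) + \gamma/3 \leq \epsilon^\star(p_{Z|X}) + \gamma/3$, giving the upper side. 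The factor $\sqrt\delta$ in line~\ref{lineTildeps} is calibrated so that, by the independence of the $A_{ij}$, the joint event protecting both sides has probability at least $\delta$.

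The hard part will be this final combination. A naive union bound over $\binom{k}{2}$ pairs is too loose, so the intended factorization is into two independent sub-events of probability $\geq \sqrt\delta$ each: one securing $A_{i^\star j^\star}$ (hence the lower bound), and one ruling out any other successful subcall overshooting its target by more than $\gamma/3$ (hence the upper bound). Establishing the second factor requires revisiting the concentration estimates of Lemma~\ref{chernoff} inside the proof of Theorem~\ref{thm:estimator} to isolate the ``success-but-inaccurate'' contribution from the total failure budget $1-\sqrt\delta$ bounded there, and then re-assembling the bounds across pairs using their independence. Once this is in place, combining the two events with the discretization bound of Step~1 yields $|\tilde\epsilon - \epsilon^\star(p_{Z|X})| \leq \gamma$ with probability at least $\delta$.
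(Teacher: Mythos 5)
Your overall structure is right — discretize $\X$ into $k$ buckets, bound the discretization error by $\gamma/3$ via the $D$-Lipschitz/$\tau$-lower-bound/mean-value argument (this step is correct and matches the paper's), then call Theorem~\ref{thm:estimator} on each pair at precision $\gamma/3$ and confidence $\sqrt\delta$. The gap is exactly where you flagged it: you end up requiring the full intersection $\bigcap_{ij} A_{ij}$, and neither of the repairs you sketch closes it. Taking the intersection gives probability only $(\sqrt\delta)^{\binom{k}{2}}$, and the second sub-event you propose (``no successful subcall overshoots its target by more than $\gamma/3$'') is itself a conjunction over all $\binom{k}{2}$ pairs: each pair overshoots with some constant probability, so re-opening Lemma~\ref{chernoff} to split the failure budget cannot make this event have probability $\geq\sqrt\delta$ independently of $k$.

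The missing idea is that the upper bound on $\tilde\epsilon$ does \emph{not} need every pair to be accurate — only the argmax pair $(I,J)$ with $\tilde\epsilon = \tilde\epsilon_{IJ}$. The paper proves this by contradiction: on the event $A_{IJ}\cap A_{i^\star j^\star}$, if one supposes $\tilde\epsilon_{IJ} > \epsilon^\star_{i^\star j^\star} + 2\gamma/3$, then $\epsilon^\star_{IJ} \geq \tilde\epsilon_{IJ} - \gamma/3 > \epsilon^\star_{i^\star j^\star} + \gamma/3 \geq \epsilon^\star(p_{Z|X})$, contradicting $\epsilon^\star_{IJ} \leq \epsilon^\star(p_{Z|X})$; and $\tilde\epsilon_{IJ}\geq\tilde\epsilon_{i^\star j^\star}\geq \epsilon^\star_{i^\star j^\star}-\gamma/3$ gives the other side, so $|\tilde\epsilon_{IJ}-\epsilon^\star_{i^\star j^\star}|\leq 2\gamma/3$, and combining with the discretization bound yields $|\tilde\epsilon-\epsilon^\star(p_{Z|X})|\leq\gamma$. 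Thus the protecting event is just $A_{IJ}\cap A_{i^\star j^\star}$ — two subcalls, not $\binom{k}{2}$ — and the paper bounds its probability below by $(\sqrt\delta)^2 = \delta$ using independence of the two invocations. That is the reason the per-pair confidence in line~\ref{lineTildeps} is $\sqrt\delta$ and not something that shrinks with $k$. Replace your $\bigcap_{ij} A_{ij}$ step with this argmax-plus-contradiction argument to complete the proof.
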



\subsection{Experiments}
\label{sec:exper}

We now test the validity of our approach by using real data obfuscation mechanisms.
However, since our estimators work only when $\X$ and $\Z$ are closed intervals, we rely on the {\em truncated} version of the mechanisms, as defined e.g. in \cref{sec:truncated}. For our experiments we set $\X=\Z=[0,1]$, so $W=1$. 

\bigskip
\noindent{\bf On the Lipschitness assumptions. }
We start by showing that the Lipschitzness requirements (that are needed to let our theorems hold) are met by two of the best known DP mechanisms, namely the (truncated) Laplace \cite{DworkMNS06} and Gaussian \cite{DworkR14} mechanisms. 
For the Gaussian mechanism, we consider the same process as for Laplace (see \cref{def:Laplace}) with 
$$g(z|x,\sigma) =
\frac 1 {\sigma\sqrt{2\pi}} e^{-\frac{(x-z)^2}{2\sigma^2}}
$$
Such functions are continuous with respect to $x$ or $z$ on closed intervals; so, they are Lipschitz and in \cref{tab:DCconstants} we depict the constants $D$ and $C$ that  they exhibit for various values of their defining parameter (viz., $B$ and $\sigma$). Furthermore, we also provide the resulting $\epsilon$ and we see in which cases the condition $C < \frac 2 {W^2}$ holds (and so the guarantees of our theorems are ensured).

From now on, we focus on the (truncated) Laplace mechanism, since it clearly enlightens the features of our approach.
To massively sample
from the CDF of a truncated
Laplace distribution, we follow the well-known inverse transform sampling method (details are provided in \cref{app_privacy:truncated}).

\bigskip
\noindent{\bf Number of samples. }
In~\cref{fig:nbSamples} we depict the logarithm of the number of samples $n$ in \Cref{alg:histEstimator} as functions of $\gamma$, $\conf$ and $C$. The analysis points out that $n$ strongly depends on $\gamma$. Since the time complexity of the estimator mostly depends on the number of samples (indeed, we always have that $m << n$), $\gamma$ is the parameter that most deeply influences the overall time complexity.

\begin{figure}[t]
\centering\framebox{
    \includegraphics[width=.47\textwidth]{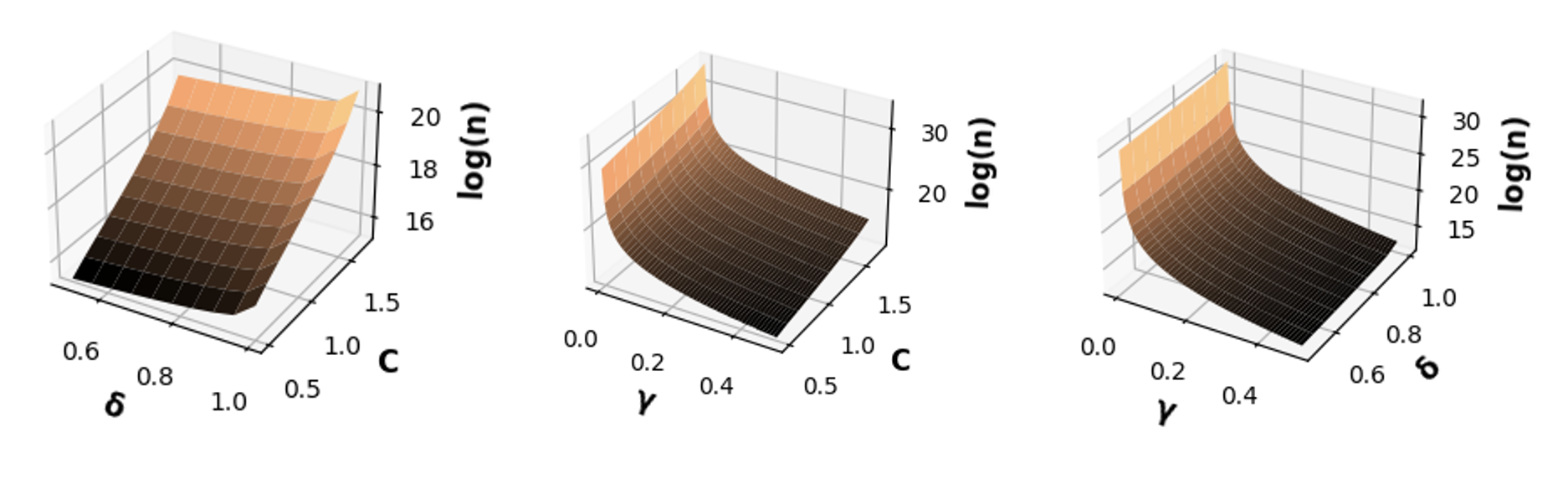}
    }
    \caption{Logarithm of the number of samples as functions of $\conf,\gamma$ and $C$ for LDP with $W=1$; when not varying, we set $C = 1$, $\conf = 0.9$ and $\gamma = 0.1$.}
    \label{fig:nbSamples}
\end{figure}



\bigskip
\noindent{\bf Analysis of \Cref{alg:histEstimatorAll}. }
In the following experiment, we 
set $\gamma=0.5$, $\delta=0.8$ and  $B=1$. The latter implies that $D\approx 3.16$, $C\approx 1.58$, and $\epsilon=1$; so, we set $n$ and $k$ (as defined in \Cref{alg:histEstimator} and \Cref{alg:histEstimatorAll}) to 1863132 and 91, respectively. In \cref{tab:est_all} we display $\Tilde{\epsilon}(x_i,x_j)$ for some pairs $x_i$ and $x_j$ among the $k^2 (=8291)$ computed during an execution of \Cref{alg:histEstimatorAll}. The maximum $\Tilde{\epsilon}(x_i,x_j)$ is 1, so the algorithm correctly returns 1. Notice that this value is reached only for $\{x_1,x_2\} = \{0,1\}$ (the extremes of $\X$). This result is not surprising since the Laplace density is unimodal, first increasing then decreasing; so, $\epsilon$ is reached when the modes are placed as far away from each other as possible. For this reason, we will focus our next experiments on the pair of extremes $\{0,1\}$, as any other pair would lead to an underestimation of $\epsilon$.

\begin{table}[t]
    \centering
    \resizebox{1\columnwidth}{!}{
    \begin{tabular}{c|c|c|||c|c|c}
    \toprule
    $x_i$ & $x_j$ &  $\Tilde{\epsilon}(x_i,x_j)$ &
    $x_i$ & $x_j$ &  $\Tilde{\epsilon}(x_i,x_j)$ \\
    \midrule
    0.00& 1.00& \textbf{1.00} & 0.54& 0.44& 0.12\\
    0.06& 0.49& 0.62 & 0.60& 0.94& 0.52\\
    0.11& 0.99& 0.96 & 0.67& 0.43& 0.26\\
    0.18& 0.48& 0.40 & 0.72& 0.93& 0.34\\
    0.23& 0.98& 0.89 & 0.79& 0.42& 0.45\\
    0.30& 0.47& 0.21 & 0.84& 0.92& 0.14\\
    0.36& 0.97& 0.80 & 0.91& 0.41& 0.65\\
    0.42& 0.46& 0.06 & 0.97& 0.91& 0.10\\
    0.48& 0.96& 0.66 & 1.00& 0.00& \textbf{1.00}\\
    \bottomrule
    \end{tabular}
    }
    \caption{$\Tilde{\epsilon}(x_i,x_j)$ for some pairs 
    among the $k^2$ 
    computed by \Cref{alg:histEstimatorAll} with a truncated Laplacian of parameter $B=1$, and with $\conf=0.8$ and $\gamma=0.5$.}
    \label{tab:est_all}
\end{table}

\begin{figure}[h]
    \centering
    \begin{subfigure}[b]{0.5\textwidth}
    \centering
    \includegraphics[width=\textwidth]{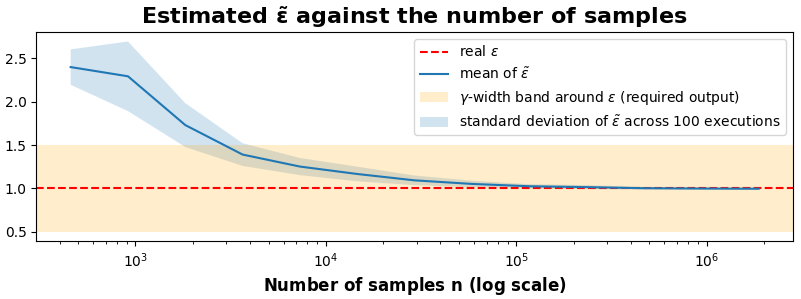}
    \caption{Analysis of $n$ on the precision $\gamma$.}
    \label{fig:eps_n}
    \end{subfigure}\\
    \begin{subfigure}[b]{0.5\textwidth}
    \centering
    \includegraphics[width=\textwidth]{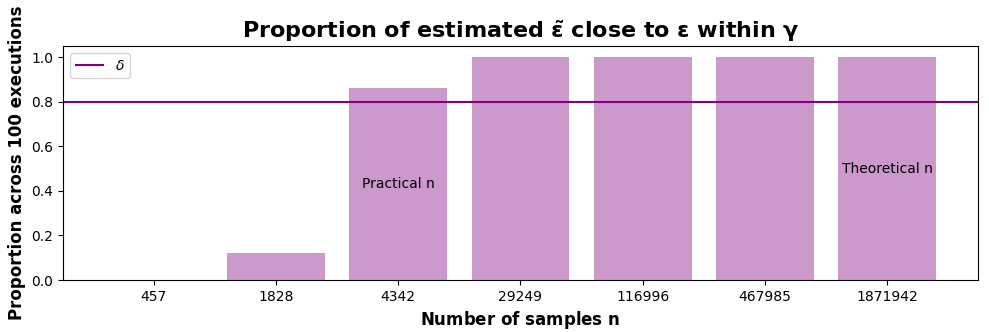}
    \caption{Analysis of $n$ on the confidence $\conf$.}
    \label{fig:proportion}
    \end{subfigure}
    \caption{Analysis on $n$ across 100 execution of~\cref{alg:histEstimator} for LDP.}
    \label{fig:my_label}
\end{figure}


\begin{table*}[t]
\centering
\resizebox{1.5\columnwidth}{!}{%
\begin{tabular}{r|ccc|ccc|ccc|ccc}
\toprule
\multirow{2}[3]{*}{$\conf=.8$} & \multicolumn{3}{c}
{\begin{tabular}{c}
     $\epsilon = .5$\\
     $C = .63$
\end{tabular}} & 
\multicolumn{3}{c}
{\begin{tabular}{c}
    $\epsilon = .7$\\
    $C = .97$
\end{tabular}} & 
\multicolumn{3}{c}
{\begin{tabular}{c}
    $\epsilon = 1$\\
    $C = 1.58$
\end{tabular}} & 
\multicolumn{3}{c}
{\begin{tabular}{c}
    $\epsilon = 2$\\ 
    $C = 4.62$
\end{tabular}} \\
\cmidrule(lr){2-4} \cmidrule(lr){5-7} \cmidrule(lr){8-10} \cmidrule(lr){11-13}
 & $n_{\text{TH}}$ & $n_{\text{PR}}$ & $m$ & $n_{\text{TH}}$ & $n_{\text{PR}}$ & $m$  & $n_{\text{TH}}$ & $n_{\text{PR}}$ & $m$  & $n_{\text{TH}}$ & $n_{\text{PR}}$ & $m$  \\
\midrule
$\gamma=1$   & 9588  & 56 & 6 &
               25488 & 112 & 12 &
               2.4e5 & 625 & 46 &
               Und. & 2600 & 100 \\
$\gamma=.5$  & 75618 & 262 & 12 &
               1.9e5 & 575 & 23 &
               1.9e6 & 4000 & 91 &
               Und. & 15600 & 200 \\
$\gamma=.1$  & 8.7e6 & 25600 & 56 &
               2.4e7 & 68000 & 114 &
               2.3e8 & 4.6e5 & 455 &
               Und. & 1.5e6 & 1000 \\
$\gamma=.05$ & 7e7   & 20800 & 112 &
               1.9e8 & 5e5 & 228 &
               1.9e9 & 3.3e6 & 909 &
               Und. & 1.2e7 & 2000\\
\bottomrule
\end{tabular}
}
\caption{Numbers of samples for different values of $\gamma$ and $C$ for a truncated Laplace mechanism for LDP. In every cell, we provide: the theoretical number of samples ($n_{\text{TH}}$), an order of magnitude for the practical one ($n_{\text{PR}}$), and the number of sub-intervals ($m$). When $C$ is too large, the theoretical number of samples is undefined and so is the number of sub-intervals, that has to be set by hand in practice (we used $m=\left\lceil100/\gamma\right\rceil$).}
\label{tab:theoPrac}
\end{table*}

\bigskip
\noindent{\bf Analysis of \Cref{alg:histEstimator}. }
Let us again consider the confidence  $\conf=0.8$, the precision $\gamma=0.5$, and a truncated Laplace of parameter $B=1$; then, the {\em theoretical} number of samples  $n$ (i.e., the one from~\cref{thm:estimator}) is $\approx 2 \times 10^6$. 
In~\cref{fig:eps_n} we show the mean and standard deviation of the estimated $\Tilde{\epsilon}$ across 100 executions of \Cref{alg:histEstimator} for different values of $n$. The real value of $\epsilon$ is shown with the red dashed line whilst the precision region is in yellow. As expected, the standard deviation decreases when the number of samples increases and all the  $\Tilde{\epsilon}$ estimated in every execution of the algorithm are close to $\epsilon$ within $\gamma$.
Notice that the number of samples required to have results that are satisfying in practice is significantly lower than the theoretical one (in~\cref{fig:eps_n}, the blue line and the red line overlap already when $n\approx 10^5$). This is not surprising since~\cref{thm:estimator} makes no assumption on the distributions, apart from Lipschitzness; so, the bound on the probability of success is far from being tight for the distributions considered in practice. 

In~\cref{fig:proportion}, we show the proportion of estimated $\Tilde{\epsilon}$ computed across 100 executions of~\Cref{alg:histEstimator} satisfying the precision requirement ($\gamma=0.5$). Indeed, we remind that such a proportion represents an empirical approximation for the probability in~\cref{thm:estimator}, which we want to be greater than $\conf$. With the theoretical number of samples, the empirical probability of estimating with the required precision $\epsilon$ is bigger than the requested confidence ($\conf=0.8$). From a practical perspective, in this simulation we are also interested in the lowest number of samples that satisfy the precision/confidence requirements. It turns out that the \textit{practical number of samples} is approximately $400$ times lower than the theoretical one in this case. In~\cref{tab:theoPrac}, we deepen the analysis by fixing $\conf$ and varying $\gamma$. 
To conclude, notice that, even if the assumptions of \cref{thm:estimator} are not met, the estimator still works, as we can see by looking at the last column of \cref{tab:theoPrac}.
However, one drawback when using the estimator without the theoretical backup is that the user has to set both $m$ (the number of sub-intervals) and $n$ (the number of samples), and needs to balance between them.

\bigskip
\noindent{\bf The safety check. }
We now aim to verify whether the safety check (w.r.t. $(x_1, x_2)$) in~\cref{thm:scLip} works in practice when the provider lies about the smoothness of the distribution. In particular, we suppose that the provider can lie about the $C$-Lipschitz parameter.

\begin{figure}
    \centering
    \includegraphics[width=0.5\textwidth]{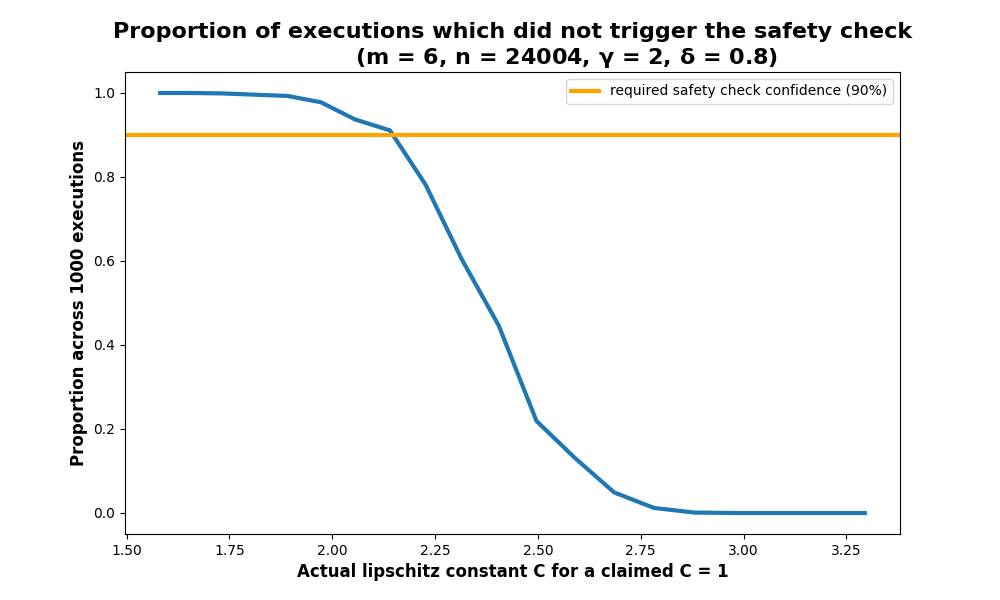}
    \caption{Experiments on the safety check.}
    \label{fig:safety}
\end{figure}

As in the section before, we start by fixing precision and confidence ($\gamma=2$ and $\conf=0.8)$, and we set the $c$ of \cref{thm:scLip} to $(Cw^2)/2$. Moreover, we choose some \textit{required probability} (say 0.9) and, if needed, we increase the number of samples $n$ so that the theoretical lower bound $1-8me^{-nc^2/3}$ on the probability of the safety check to be met is greater than our required probability. Then, we perform 1000 execution of~\Cref{alg:histEstimator} by considering first the $C$ claimed by the provider (i.e., $C=1$) and then the actual $C$ (i.e., the true one). Therefore, we look at the proportion of executions that do not trigger the safety check. This empirically approximates the probability of the safety check being met in~\cref{thm:scLip}.
The results are shown in~\cref{fig:safety}, where we can see that, if the actual $C$ is close to the claimed one (viz., when the curve is between 1.50 and 2), the safety check does not allow us to detect the lie from the provider. This is because, once again,~\cref{thm:scLip} does not assume the particular kind of distributions used, so its bound cannot be tight for truncated Laplace distributions.

\section{On the estimation of $\epsilon$ in Local R\'enyi Differential Privacy (LRDP)}
\label{sec:rdp}

We now adapt all previous results to LRDP.
As before, 
we first concentrate on a fixed pair $x_1, x_2\in\mathcal{X}$ to estimate 
\begin{align}
\label{eq:epsRDPx}
    \epsilon^\star_{\alpha}(x_1, x_2) \defsym 
    D_\alpha\left(p_{Z|X}(\cdot | x_1)\big\| p_{Z|X}(\cdot | x_2)\right).
\end{align}
Again, without reasonable assumptions it is  impossible to estimate~\cref{{eq:epsRDPx}} with guarantees. On the other hand, by having the same assumptions as for LDP, this quantity can be estimated as well.
Then, these results will transfer to the estimation of 
\begin{align}
\label{eq:sup_rdp}
    \epsilon^\star_{\alpha}(p_{Z|X}) \defsym \underset{x_1, x_2\in\mathcal{X}}{\sup}\epsilon^\star_{\alpha}(x_1, x_2).
\end{align}

\subsection{Impossibility result}

The impossibility result in~\cref{sec:impossibility} also holds for LRDP. This is more surprising than the analogous one for LDP: indeed, if there is again some output where the probabilities differ significantly but the probability of this output is low, then one would think that this would not violate the RDP guarantee since Rényi divergence averages over all outputs, instead of taking the pointwise maximum. However, as the following theorem proves, this is not the case.

\begin{theorem}[Impossibility]
\label{thm:RenyiImposs}
    For every $\alpha > 1$ (LRDP order), 
    $\gamma > 0$ (precision), 
    $0 < \conf < 1$ (confidence),
    $x_1, x_2 \in \X$, and
    probabilistic estimator  $\Tilde{\epsilon}_{\alpha}$ that almost surely terminates, there exists a probability distribution $p_{Z|X}$ such that
    $$ \Pr
        \left(
            \big| \epsilon_{\alpha}^{*}(x_1,x_2)
            - \Tilde{\epsilon}_{\alpha}(x_1, x_2, \gamma, \conf) \big|
            > \gamma
        \right) > 1 - \conf.$$
\end{theorem}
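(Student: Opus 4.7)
The plan is to follow the scheme of \cref{thm:imposs}, exploiting the fact that for $\alpha>1$ the Rényi divergence can be made arbitrarily large by pushing the mass of the denominator distribution toward $0$ on a single output, without appreciably changing how samples look. Fix the estimator $\Tilde{\epsilon}_\alpha$, $\alpha>1$, $\gamma>0$, and $\conf\in(0,1)$, and assume for contradiction that for every $p_{Z|X}$ the estimator is within $\gamma$ of $\epsilon^{*}_\alpha(x_1,x_2)$ with probability at least $\conf$. I will exhibit $L$ ``indistinguishable'' pairs $(P,Q_i)$ whose divergences $D_\alpha(P\|Q_i)$ are spaced by more than $2\gamma$, so that no single output value can lie in the $\gamma$-good region of two of them; a counting argument will bound $L$, contradicting the assumption once $L$ is large enough.

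The construction is binary, $\Z=\{0,1\}$. Fix $\mu>0$ small, let $P(1)=\mu$, and for $i=0,1,\dots,L$ define $Q_i(1)=\nu_i$, where $\nu_0=\mu$ and $\nu_i\in(0,\mu)$ for $i\geq 1$ is chosen so that
$$\frac{1}{\alpha-1}\log\!\bigl[(1-\nu_i)^{1-\alpha}(1-\mu)^\alpha+\nu_i^{1-\alpha}\mu^{\alpha}\bigr]=3i\gamma.$$
Such a $\nu_i$ exists by continuity and monotonicity: the left-hand side equals $0$ at $\nu=\mu$ and blows up as $\nu\to 0^{+}$, because $\nu^{1-\alpha}\to+\infty$ for $\alpha>1$. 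With this choice, $D_\alpha(P\|Q_i)=3i\gamma$, and crucially $\|Q_i-Q_0\|_{\mathrm{TV}}=\mu-\nu_i\leq\mu$ uniformly in $i$, which is exactly what the coupling step will need.

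For the coupling, I use the degenerate pair $\pi_0$ (Dirac at $0$ on both inputs) as anchor. Almost-sure termination on $(\pi_0,\pi_0)$ gives some $n_0$ with $\Pr(T_0>n_0)\leq\conf/8$, where $T_0$ is the stopping time there. Pick $\mu<\conf/(16n_0)$ and $L=\lceil 2/\conf\rceil$. Couple all $L+1$ runs by reusing the estimator's internal randomness and, via a common uniform seed per draw, maximally coupling each sample to the corresponding $\pi_0$-sample (always $0$). A single $P$-call or $Q_i$-call disagrees with the anchor with probability at most $\mu$, so by union bound over the at most $2n_0$ draws, all of them agree with probability $\geq 1-2n_0\mu$. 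On the intersection of this event with $\{T_0\leq n_0\}$ the run on $(P,Q_i)$ follows the same trace as the run on $(\pi_0,\pi_0)$, so $\hat{\epsilon}_i=\hat{\epsilon}_{\pi_0}$; hence $\Pr(\hat{\epsilon}_i=\hat{\epsilon}_{\pi_0})\geq 1-2n_0\mu-\conf/8\geq 1-\conf/4$. Transferring the assumed success probability through the coupling gives $\Pr(\hat{\epsilon}_{\pi_0}\in R_i)\geq 3\conf/4$, where $R_i=[(3i-1)\gamma,(3i+1)\gamma]$ are pairwise disjoint. Summing, $\sum_{i=1}^{L}\Pr(\hat{\epsilon}_{\pi_0}\in R_i)\leq 1$ forces $L\leq 4/(3\conf)$, contradicting $L=\lceil 2/\conf\rceil$.

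The main obstacle, and what makes this result conceptually stronger than its LDP counterpart, is justifying the existence of the construction: the expectation inside $D_\alpha$ seems to average away rare events, so one might doubt that a low-probability output can inflate $D_\alpha$. The quantitative escape is that the integrand $(p/q)^\alpha$ against $q$ yields $p^\alpha/q^{\alpha-1}$, and for $\alpha>1$ the factor $q^{-(\alpha-1)}$ diverges faster than the mass $q$ at that point can hide it. This amplification is what frees us to drive $\nu_i\to 0$ (and hence $D_\alpha$ arbitrarily large) while keeping $\|Q_i-Q_0\|_{\mathrm{TV}}\leq\mu$ tiny, and is the only nontrivial ingredient beyond the LDP coupling-plus-counting skeleton.
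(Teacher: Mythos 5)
Your proof is correct, but it takes a genuinely different route from the paper's. The paper's proof is a direct adaptation of its LDP impossibility argument (\cref{thm:imposs}): it constructs a \emph{single} hard pair $p=\mathcal B(d^{1/\alpha})$, $q=\mathcal B\bigl((d/h^{\alpha-1})^{1/(\alpha-1)}\bigr)$, where $h$ is chosen \emph{adaptively} — after one observes, for each $n\le n_0$, a value $h_n$ large enough that, conditioned on the estimator having sampled only zeros and stopped at step $n$, its answer lies below $\log h_n-\gamma$ with probability $\ge\rho$. The existence of such an $h_n$ uses that, conditioned on the all-zeros event, the estimator's output law cannot depend on $h$; the proof then combines the three inequalities $\Pr(P_n)\ge 1-nd$, $\sum_{n\le n_0}\Pr(N=n\mid P_n)\ge\rho$, and the $h_n$-property with $\rho=\sqrt[3]{1-\conf}$ and $d=\bigl(\tfrac{1-\rho}{n_0}\bigr)^{\alpha-1}$. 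Your proof instead argues by contradiction with a \emph{family} $(P,Q_i)_{i=0}^L$ of pairs whose divergences $D_\alpha(P\|Q_i)=3i\gamma$ are spread over pairwise disjoint $\gamma$-windows $R_i$, couples each run to a degenerate anchor $(\pi_0,\pi_0)$ via a shared uniform seed (so that all $L$ coupled outputs collapse onto a single random variable $\hat\epsilon_{\pi_0}$ with probability $\ge 1-\conf/4$ once $\mu<\conf/(16n_0)$), and then applies pigeonhole: $\sum_{i=1}^L\Pr(\hat\epsilon_{\pi_0}\in R_i)\le 1$ forces $L\le 4/(3\conf)$, contradicting $L=\lceil 2/\conf\rceil$. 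Both arguments hinge on the same key observation you correctly isolate — that for $\alpha>1$ the term $q(z)^{-(\alpha-1)}$ in $D_\alpha$ diverges faster than the mass $q(z)$ can suppress it, so the divergence can be made arbitrarily large with an arbitrarily small TV perturbation of the samples. What your approach buys is modularity: it separates the ``indistinguishability'' step (coupling to an anchor) from the ``many well-separated targets'' step (pigeonhole), avoiding the paper's slightly delicate adaptive choice of $h_n$; the price is a less uniform treatment across the LDP and LRDP cases, since the paper re-uses essentially verbatim the machinery of \cref{thm:imposs} and only swaps the Bernoulli parameters.

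One small remark on presentation: the observation $\|Q_i-Q_0\|_{\mathrm{TV}}\le\mu$ is a red herring — the coupling step actually needs $\|P-\pi_0\|_{\mathrm{TV}}=\mu$ and $\|Q_i-\pi_0\|_{\mathrm{TV}}=\nu_i\le\mu$, which you do use correctly; you might as well state those directly.
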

\begin{proof}[Proof Sketch]
Let us set the distributions
$$ p_{Z|X}(\cdot|x_1)= \mathcal{B}(d^{\frac 1 \alpha})
\ \text{ and } 
p_{Z|X}(\cdot|x_2)= \mathcal{B}
    \left(\frac{d^{\frac 1 {\alpha - 1}}} h
    \right)
$$
where $\mathcal{B}(p)$ denotes a Bernoulli distribution with parameter $p$, and we can consider the case when $\Tilde{\epsilon}$ samples only $0$s. In that case, the behaviour of $\Tilde{\epsilon}$ is independent of $d^{\frac 1 \alpha}$ and $\frac{d^{\frac 1 {\alpha - 1}}} h$ since its a priori distributions for $p_{Z|X}(\cdot|x_1)$ and $p_{Z|X}(\cdot|x_2)$ are $\mathcal{B}(0)$. Therefore, we can freely set $d$ so that the probability for $\Tilde{\epsilon}$ of sampling only $0$s is high enough and $h$ is sufficiently large, so that $\epsilon^{\star}\geq \log(h)$ is far enough from the answers of $\Tilde{\epsilon}$.
See~\cref{fig:bernoulliRDP} for an illustration of such Bernoulli distributions and the appendix for full details.
\end{proof}

\begin{figure}[!tbp]
 \framebox{\!\!\!
    \begin{tikzpicture}[thick]
        \def\red{red!70!black}
        \def\green{green!60!black}
        \node at (0, 0) (root) {};
        \draw (root) --+ (4.5, 0);
        \node[below] at (1.25, 0) (0) {$0$};
        \node[below] at (3.25, 0) (1) {$1$};
        
        \node[above left=3cm and .5cm of 0, left, color=\green] (oppd) {$1-d^{\frac 1 \alpha}$};
        \draw[color=\green] (oppd.east)+(.25, 0) --+ (1, 0);
        \node[above left=3.5cm and .5cm of 0, left, color=\red] (oppdh) {$1-\frac{d^{\frac 1 {\alpha - 1}}} h$};
        \draw[color=\red] (oppdh.east)+(.25, 0) --+ (1, 0);
        
        \node[above left=1cm and .5cm of 1, left, color=\green] (d) {$d^{\frac 1 \alpha}$};
        \draw[color=\green] (d.east)+(.25, 0) --+ (1, 0);
        \node[above left=.35cm and .5cm of 1, left, color=\red] (dh) {$\frac{d^{\frac 1 {\alpha - 1}}} h$};
        \draw[color=\red] (dh.east)+(.25, 0) --+ (1, 0);
        
        \node[color=\green, above right=3.3cm and .5cm of 1] (p) {$p_{Z|X}(\cdot|x_1)=
        \mathcal{B}(d^{\frac 1 \alpha})
        $};
        \node[color=\red, below=-.1cm of p] {$p_{Z|X}(\cdot|x_2)= \mathcal{B}
    \left(\frac{d^{\frac 1 {\alpha - 1}}} h
    \right)
        $};
        
        \draw[<->] (dh.east)+(1.4, 0) --+ (1.4, .5) node[right=.2cm, midway, text width=3.1cm] {\small{far so that $\epsilon^{\star}\geq \log(h)$\\ is large}};
        
        \draw[<->] (d.west)+(-.2,.05) --+ (-.2, 1.95) node[right=.2cm, midway, text width=3cm] {\small{far so that only\\ sampling 0s is likely}};
    \end{tikzpicture}\!\!\!
}
\caption{Illustration of the two Bernoulli distributions in the proof of Thm.\,\ref{thm:RenyiImposs}}
\label{fig:bernoulliRDP}
\end{figure}
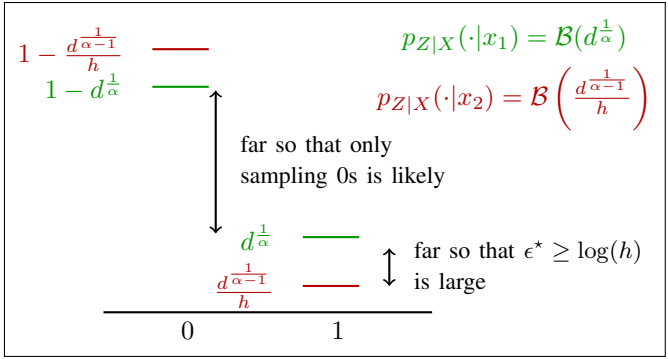

\medskip

We remark that also for RDP the previous result (for the local setting) holds for the central setting, where of course $x_1$ and $x_2$ are adjacent databases.

\subsection{Estimating $\epsilon_{\alpha}^{\star}(x_1, x_2)$}
For estimating LRDP, we follow the same setting as the one followed for LDP.
So, we still assume the distributions to be $C$-Lipschitz over $\mathcal{Z}\defsym [a, b]$, with  $a, b\in\mathbb{R}$, $C < 2/W^2$ and $W=b-a$

The new estimator, that we call Algorithm 3, can be adapted from \cref{alg:histEstimator}
by changing the values of $m$ and $n$, and the returned value. In particular, let us define
\begin{align}
    \tau_0 \defsym \frac{1}{W}-\frac{CW}{2}
    & \ \text{ and } \
    \tau_1 \defsym \frac{1}{W}+\frac{CW}{2},
    \\
    K \defsym \frac{2\tau_1^\alpha}{\tau_0^{\alpha-1}}
    & \ \text{ and } \
    K' \defsym \frac{\tau_0^\alpha}{\tau_1^{\alpha - 1}},
\\
\label{defGammaPr}
    \gamma' =
    & \min
        \left(
            \frac{\gamma K'(\alpha - 1)}{2K(2\alpha - 1)},
            \frac{\log 2}{2\alpha - 1}
        \right).
\end{align}
The new algorithm will divide $\mathcal{Z}$ in $m$ sub-intervals s.t.
\begin{align}
\label{defM}
    \frac{CwK(2\alpha - 1)}{2\tau_0K'(\alpha - 1)}
    & \leq \gamma/2,
\end{align}
where $w = W/m$.
Let $n$ be such that 
    \begin{align}
    \label{nRDPdef}
    1 - 2m(1 - w\tau_0)^n - 2mf(n,w\tau_0,\gamma')
    & \geq \conf
\end{align}
where 
$f$ is defined in~\cref{fdef}.
Finally, the returned value is
\begin{align}
    \label{Estdef}
    \frac{1}{\alpha-1} \log \sum_j 
        \left(
            \frac{N_j}{M_j}
        \right) ^\alpha \frac1n M_j.
\end{align}

\begin{theorem}[Correctness]
    \label{thm:RDP-correct}
    Let densities $p=p_{Z|X}(\cdot|x_1)$ and $q=p_{Z|X}(\cdot|x_2)$ over $\mathcal{Z} = [a,b]$ be $C$-Lipschitz, with $C<2/W^2$ and $W=b-a$. 
    For every $\alpha > 1$ (LRDP order), $\gamma > 0$ (precision), and $0<\conf<1$ (confidence), 
    it holds that
    \begin{align*}
       &\Pr\left(
       \begin{array}{l}
       \text{algorithm 3 succeeds and }
       \\
       |\epsilon_{\alpha}^{\star}(x_1, x_2) -\Tilde{\epsilon}_{\alpha}(x_1,x_2,{\cal Z},\gamma,\conf,C)|\leq\gamma 
       \end{array}
        \right )\geq \conf.
    \end{align*}
\end{theorem}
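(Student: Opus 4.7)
The plan is to mirror the proof structure of \cref{thm:estimator} but adapted to the Rényi quantity. Writing $p_j \defsym \int_{z_j}^{z_{j+1}} p(z)\,dz$ and $q_j \defsym \int_{z_j}^{z_{j+1}} q(z)\,dz$, the target is $\epsilon_\alpha^\star(x_1,x_2) = \frac{1}{\alpha-1}\log I$ with $I \defsym \int_a^b p^\alpha/q^{\alpha-1}$, the empirical estimator (cf.~\cref{Estdef}) is $\frac{1}{\alpha-1}\log \hat S$ with $\hat S \defsym \sum_j (N_j/n)^\alpha (M_j/n)^{1-\alpha}$, and I shall interpose the ``discretized'' quantity $S \defsym \sum_j p_j^{\alpha}/q_j^{\alpha-1}$. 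By the triangle inequality it then suffices to show
\begin{equation*}
    |\log I - \log S| \leq (\alpha-1)\gamma/2
    \ \text{ and } \
    |\log S - \log \hat S| \leq (\alpha-1)\gamma/2
\end{equation*}
on a high-probability event, so that the claimed $\gamma$-bound follows after dividing by $\alpha-1$.

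First I would handle the deterministic discretization error $|\log I - \log S|$. By \cref{lower} the densities lie in $[\tau_0,\tau_1]$, so the integrand $p^\alpha/q^{\alpha-1}$ is bounded in $[K',K/2]$, giving $I \in [WK', WK/2]$. Using $p_j = w\bar p_j$ for the sub-interval mean $\bar p_j$ (same for $q_j$), I get $p_j^\alpha/q_j^{\alpha-1} = w\,\bar p_j^\alpha/\bar q_j^{\alpha-1}$, so both $S$ and $I$ are Riemann-type sums over the same partition. A local estimate on $[z_j,z_{j+1}]$ using $|p(z)-\bar p_j|\leq Cw$, $|q(z)-\bar q_j|\leq Cw$ together with the MVT on $t\mapsto t^\alpha$ and $t\mapsto t^{1-\alpha}$ on $[\tau_0,\tau_1]$ yields a pointwise error of order $Cw\,K/\tau_0$ (up to the $(2\alpha-1)/(\alpha-1)$ factor arising from expanding the difference of ratios). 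Integrating over $[a,b]$ and dividing by $I \geq WK'$ to pass to the logarithm gives
\begin{equation*}
    |\log S - \log I| \leq \frac{CwK(2\alpha-1)}{2\tau_0 K'(\alpha-1)}\cdot(\alpha-1),
\end{equation*}
which is at most $(\alpha-1)\gamma/2$ precisely by the choice of $m$ in~\cref{defM}.

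Next I would handle the random concentration error $|\log S - \log \hat S|$. Since $N_j \sim \mathrm{Bin}(n,p_j)$ and $M_j \sim \mathrm{Bin}(n,q_j)$ with $p_j,q_j \geq w\tau_0$, the failure probability $\Pr(N_j=0)+\Pr(M_j=0) \leq 2(1-w\tau_0)^n$ is tiny. Conditioned on non-failure, \cref{chernoff} gives $|\log(N_j/n)-\log p_j|\leq \gamma'$ and $|\log(M_j/n)-\log q_j|\leq\gamma'$ for every $j$ with probability at least $1 - 2m(1-w\tau_0)^n - 2m\,f(n,w\tau_0,\gamma')$, which is $\geq\conf$ by~\cref{nRDPdef}. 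On this event, expanding
\begin{equation*}
    \log\tfrac{(N_j/n)^\alpha}{(M_j/n)^{\alpha-1}} - \log\tfrac{p_j^\alpha}{q_j^{\alpha-1}}
    = \alpha\bigl(\log\tfrac{N_j}{n}-\log p_j\bigr) - (\alpha-1)\bigl(\log\tfrac{M_j}{n}-\log q_j\bigr)
\end{equation*}
is bounded in absolute value by $(2\alpha-1)\gamma'$, so each summand of $\hat S$ lies within a multiplicative factor $e^{\pm(2\alpha-1)\gamma'}$ of the corresponding summand of $S$; hence $\hat S / S \in [e^{-(2\alpha-1)\gamma'},e^{(2\alpha-1)\gamma'}]$ and $|\log \hat S - \log S| \leq (2\alpha-1)\gamma'$. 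The definition of $\gamma'$ in~\cref{defGammaPr} guarantees this is at most $(\alpha-1)\gamma/2$, and the second term in the $\min$ (i.e.\ $\gamma' \leq \log 2/(2\alpha-1)$) is what ensures $\hat S$ stays bounded away from $0$ so that the logarithms are well-defined and the above linearization is valid.

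The main obstacle, as in \cref{thm:estimator}, is keeping the error propagation through the nonlinearity $(p/q)^\alpha/q^{-1}$ clean: both the discretization bound and the concentration bound require simultaneous two-sided control of $p$ and $q$, and the exponent $\alpha$ amplifies errors on the numerator while the exponent $\alpha-1$ amplifies errors on the denominator, which is precisely why the combined factor $2\alpha-1$ and the constants $K,K'$ built out of $\tau_0,\tau_1$ appear in \eqref{defGammaPr} and \eqref{defM}. Once these two bounds are assembled as above, a final union bound over the $2m$ events and over the two error sources gives probability at least $\conf$, completing the proof.
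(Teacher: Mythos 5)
Your proof is correct and achieves the bound, but it takes a genuinely different route from the paper's. Both proofs split the total error into a deterministic discretization error and a random concentration error, both use \cref{chernoff} and the $w\tau_0$ lower bound on $p_j,q_j$ to control failure and concentration, and both compute the final probability via the same union bound with \eqref{nRDPdef}. The difference is how the two error sources are combined. The paper works interval-by-interval: for each $j$ it first bounds $\left|\int_{z_j}^{z_{j+1}} p^\alpha/q^{\alpha-1} - \frac{N_j^\alpha}{nM_j^{\alpha-1}}\right|$ by converting to a log-difference via \cref{expLipsch} (which requires the a priori bounds $\frac{N_j^\alpha}{nM_j^{\alpha-1}} \leq wK$ and $w\frac{p(t_j)^\alpha}{q(t_j)^{\alpha-1}} \leq wK$, and the mean value theorem on the per-interval integral), does the discretization/concentration split in the log domain, converts back to an absolute difference, sums, and finally applies \cref{logLipsch} once more. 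Your proof instead interposes the global quantity $S = \sum_j p_j^\alpha/q_j^{\alpha-1}$ and handles the two errors independently: the discretization part $|\log I - \log S|$ stays in the density domain until one final $\log$-conversion using $I,S\geq WK'$, and the concentration part $|\log S - \log \hat S|\leq (2\alpha-1)\gamma'$ follows directly from the observation that each summand of $\hat S$ is within a multiplicative factor $e^{\pm(2\alpha-1)\gamma'}$ of the corresponding summand of $S$ (and a ratio of sums of positives is sandwiched between the extremes of the termwise ratios). This is cleaner: you never need \cref{expLipsch}, the mean value theorem on the integrals, or the a priori bound $\frac{N_j^\alpha}{nM_j^{\alpha-1}}\leq wK$. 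One small inaccuracy worth flagging: you claim the constraint $\gamma'\leq \frac{\log 2}{2\alpha-1}$ in \eqref{defGammaPr} is needed so that ``$\hat S$ stays bounded away from $0$'' and ``the linearization is valid.'' In fact your concentration argument is exact (not a linearization), $\hat S>0$ holds automatically on success, and your route never uses this constraint at all; it is there only because the paper's route needs $e^{\gamma'(2\alpha-1)}\leq 2$ to establish $\frac{N_j^\alpha}{nM_j^{\alpha-1}}\leq wK$ before invoking \cref{expLipsch} with $T=wK$. The discretization estimate you sketch is a little terse — the explicit computation of the pointwise bound $\left|\bar p_j^\alpha/\bar q_j^{\alpha-1} - p(z)^\alpha/q(z)^{\alpha-1}\right|$ via the mean value theorem on $t\mapsto t^\alpha$ and $t\mapsto t^{1-\alpha}$ should be spelled out — but the constants do come out as claimed, with $(2\alpha-1)$ arising from $\alpha+(\alpha-1)$ and a factor $\tau_1^\alpha/\tau_0^\alpha = K/(2\tau_0 K'/K')\cdot\ldots$ that matches \eqref{defM}, so no gap there.
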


\subsection{Estimating $\epsilon_{\alpha}^{\star}(p_{Z|X})$}

We now focus on the estimation of \cref{eq:sup_rdp}. To this aim, we follow the approach in \cref{subsec:actEst}: we assume $D$-Lipschitzness of all $p_{Z|X}(z|\cdot)$, divide $\X$ in a proper number of buckets, run algorithm 3 for every pair of buckets' mid-points, and return the maximum returned value. The new algorithm, that we call Algorithm 4, is identical to \cref{alg:histEstimatorAll} except for choosing
\begin{align}
\label{kReniDef}
k \geq \frac {3(2\alpha-1)KD(d-c)} {2(\alpha-1)K'\tau_0\gamma}
\end{align}
and for invoking algorithm 3 in place of \cref{alg:histEstimator}.

\begin{theorem}
\label{thm:bucketsRDP}
Let $\Z=[a,b]$, $\X = [c,d]$, and $p_{Z|X}$ be such that, for every $x \in \cal X$, $p_{Z|X}(\cdot|x)$ is $C$-Lipschitz, for $C < 2/W^2$ and $W=b-a$, and that, for every $z \in \cal Z$, $p_{Z|X}(z|\cdot)$ is $D$-Lipschitz, for some $D$. For every
$\alpha > 1$ (LRDP order), $\gamma > 0$ (precision) and $0<\conf<1$ (confidence),
    we have that 
    \begin{align*}
       &\Pr\left(
       \begin{array}{l}
       \text{algorithm 4 succeeds and }
        \\
        |\epsilon_\alpha^{\star}(p_{Z|X}) -\Tilde{\epsilon}_\alpha({\cal Z},{\cal X},\gamma,\delta,C,D)|\leq\gamma
        \end{array}
        \right )\geq \conf.
    \end{align*}
\end{theorem}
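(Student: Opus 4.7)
The plan is to mirror the proof of Theorem \ref{thm:buckets} in the LRDP setting by establishing two complementary facts: (i) the map $(x_1, x_2) \mapsto \epsilon_\alpha^\star(x_1, x_2)$ is Lipschitz continuous in its arguments, with a constant derivable from $C$, $D$, $W$ and $\alpha$; and (ii) each per-pair call to algorithm 3 returns an accurate estimate with sufficiently high probability, which combines across the $\binom{k}{2}$ pairs to yield overall confidence $\delta$. Balancing the bucketisation error and the per-pair estimation error at $\gamma/3$ each (plus an initial $\gamma/3$ slack for approximating the supremum by a near-maximiser) yields the target precision $\gamma$.

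For (i), I would work with the representation $\epsilon_\alpha^\star(x_1, x_2) = \frac{1}{\alpha-1}\log F(x_1, x_2)$, where $F(x_1, x_2) = \int_{\mathcal{Z}} p_{Z|X}(z|x_1)^\alpha \, p_{Z|X}(z|x_2)^{-(\alpha-1)}\, dz$. The $D$-Lipschitzness of $p_{Z|X}(z|\cdot)$ bounds $|\partial p_{Z|X}(z|x_i)/\partial x_i| \leq D$, and Lemma \ref{lower} yields the pointwise enclosure $\tau_0 \leq p_{Z|X}(z|x) \leq \tau_1$ on $\mathcal{Z}\times\mathcal{X}$. Differentiating $F$ under the integral sign gives $|\partial F/\partial x_1| \leq \alpha WD\,\tau_1^{\alpha-1}/\tau_0^{\alpha-1}$ and $|\partial F/\partial x_2| \leq (\alpha-1) WD\, \tau_1^\alpha/\tau_0^\alpha$, while a pointwise lower bound on the integrand of $F$ gives $F \geq W K'$. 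Dividing by $(\alpha-1)F$ and summing, the inequality $\alpha/(\alpha-1) + \tau_1/\tau_0 \leq (2\alpha-1)\tau_1/((\alpha-1)\tau_0)$ (using $\tau_1 \geq \tau_0$) collapses the expression into a joint Lipschitz constant $L = (2\alpha-1)KD/(2(\alpha-1)K'\tau_0)$ in the $\ell_1$ norm on $(x_1, x_2)$. Therefore, if $(x_1^\star, x_2^\star)$ satisfies $\epsilon_\alpha^\star(x_1^\star, x_2^\star) \geq \epsilon_\alpha^\star(p_{Z|X}) - \gamma/3$ and $(x_{i^\star}, x_{j^\star})$ are the closest bucket mid-points, the distance $|x_1^\star - x_{i^\star}| + |x_2^\star - x_{j^\star}| \leq (d-c)/k$ together with the choice \eqref{kReniDef} ensures $|\epsilon_\alpha^\star(x_{i^\star}, x_{j^\star}) - \epsilon_\alpha^\star(p_{Z|X})| \leq 2\gamma/3$.

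For (ii), Theorem \ref{thm:RDP-correct} applied at precision $\gamma/3$ and confidence $\sqrt{\delta}$ gives, for a single pair, probability at least $\sqrt{\delta}$ that the call succeeds and returns a value within $\gamma/3$ of $\epsilon_\alpha^\star(x_i, x_j)$. Combining with (i), on the intersection of the good events for the $(i^\star, j^\star)$ pair and for all other pairs, the returned $\max_{ij}\tilde\epsilon_{ij}$ obeys both $\max_{ij}\tilde\epsilon_{ij} \geq \tilde\epsilon_{i^\star j^\star} \geq \epsilon_\alpha^\star(p_{Z|X}) - \gamma$ and $\max_{ij}\tilde\epsilon_{ij} \leq \max_{ij}\epsilon_\alpha^\star(x_i, x_j) + \gamma/3 \leq \epsilon_\alpha^\star(p_{Z|X}) + \gamma/3$, giving the $\gamma$-accuracy. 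Combining the inner confidence $\sqrt{\delta}$ via the independence of the samplers across different pairs (exactly as in the proof of Theorem \ref{thm:buckets}) yields overall confidence at least $\delta$.

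The main obstacle is (i): extracting a clean Lipschitz constant for the Rényi divergence requires careful manipulation of the density bounds and the integral $F$, and checking that the accumulated factors telescope into exactly the quantities $K$, $K'$, $\tau_0$ appearing in \eqref{kReniDef}. Once this Lipschitz estimate is in place, the remainder of the argument is a direct adaptation of the proof of Theorem \ref{thm:buckets}, substituting Theorem \ref{thm:RDP-correct} for Theorem \ref{thm:estimator} in the per-pair step.
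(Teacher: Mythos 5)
Your high-level plan—bound the bucketisation error by a Lipschitz estimate on $(x_1,x_2)\mapsto\epsilon^\star_\alpha(x_1,x_2)$, then follow the two-stage scheme of Theorem~\ref{thm:buckets} with Theorem~\ref{thm:RDP-correct} in place of Theorem~\ref{thm:estimator}—is exactly what the paper does, and your candidate Lipschitz constant $L=\frac{(2\alpha-1)KD}{2(\alpha-1)K'\tau_0}$ is numerically correct: it reproduces the bound the paper obtains and is consistent with the choice of $k$ in \eqref{kReniDef}. However, there are two genuine gaps in the way you realize the plan.

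First, you derive $L$ by differentiating $F(x_1,x_2)=\int p_{Z|X}(z|x_1)^\alpha\,p_{Z|X}(z|x_2)^{-(\alpha-1)}\,dz$ under the integral sign and bounding $|\partial p_{Z|X}(z|x_i)/\partial x_i|\le D$. But the hypothesis is only that $p_{Z|X}(z|\cdot)$ is $D$-Lipschitz, which gives a.e.\ differentiability at best and no control on where the exceptional set sits; the step ``differentiate under the integral'' is not licensed by the stated assumptions. The paper avoids this entirely by working with finite differences: it uses the mean value theorem only on $\log$ and $\exp$ (Lemmas~\ref{logLipsch} and~\ref{expLipsch}, together with the enclosure from Lemma~\ref{lower}) to convert $|F(x_1^\star,x_2^\star)-F(x_{i^\star},x_{j^\star})|$ into a sum of $|\log p_{Z|X}(z|x)-\log p_{Z|X}(z|x')|$ terms, which are then bounded directly by $\frac{D}{\tau_0}|x-x'|$ using only Lipschitzness of the density. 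To make your derivative computation sound you would have to either add a smoothness hypothesis or reformulate it with difference quotients, at which point you have essentially recovered the paper's argument.

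Second, your confidence accounting is inconsistent. To justify $\max_{ij}\tilde\epsilon_{ij}\le\max_{ij}\epsilon^\star_\alpha(x_i,x_j)+\gamma/3$ you invoke ``the intersection of the good events for the $(i^\star,j^\star)$ pair and for all other pairs''. That intersection has probability $\delta^{\binom{k}{2}/2}$, not $\delta$, so the subsequent claim that the per-pair confidence $\sqrt{\delta}$ ``yields overall confidence at least $\delta$'' does not follow. The proof of Theorem~\ref{thm:buckets} (which you say you are mirroring) is careful precisely to avoid this: it only conditions on the two events ``Algorithm~1 succeeds for $(x_{i^\star},x_{j^\star})$'' and ``Algorithm~1 succeeds for $(x_I,x_J)$'', and handles the upper bound on $\tilde\epsilon_{IJ}$ by contradiction—if $\tilde\epsilon_{IJ}>\epsilon^\star_{i^\star j^\star}+2\gamma/3$ and the $(I,J)$-call is accurate, then $\epsilon^\star_{IJ}>\epsilon^\star(p_{Z|X})$, absurd. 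That contradiction argument is what lets the proof get by with only two events (hence $(\sqrt\delta)^2=\delta$); your version replaces it with a uniform bound over all $\binom{k}{2}$ pairs and thereby breaks the probability calculation.

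Finally, a minor point: the paper assumes the supremum defining $\epsilon^\star_\alpha(p_{Z|X})$ is attained (compactness of $\X\times\Z$ plus continuity of the Lipschitz densities), so the extra $\gamma/3$ slack you allocate to a near-maximiser is unnecessary and would force you to re-derive the $\gamma/3$--$2\gamma/3$ split used in Theorem~\ref{thm:buckets}.
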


\subsection{Experiments on LRDP}

We test the validity of our approach by repeating the same experiments as presented in the previous section about LDP but using the new algorithm for LRDP estimation. The results  confirm that our approach is practical and useful also in this framework. However, here it is particularly evident that the theoretical bound for sampling is much higher than the actual number needed in practice since now the theoretical bound is 10$^5$ times bigger than the practical one (whereas for LDP it was less than 10$^3$).

\bigskip
\noindent{\bf Number of samples. }
First of all, for LRDP the number of samples also depends on another parameter (apart from $\gamma$, $\conf$, and $C$), viz. $\alpha$. In \cref{fig:LogRDP} we depict the dependencies of $n$ as functions of the other parameters; as we can see, the higher dependency is on $\gamma$ and $C$, with a similar contribution.

\begin{figure}[t]
    \centering    \framebox{\includegraphics[width=.48\textwidth]{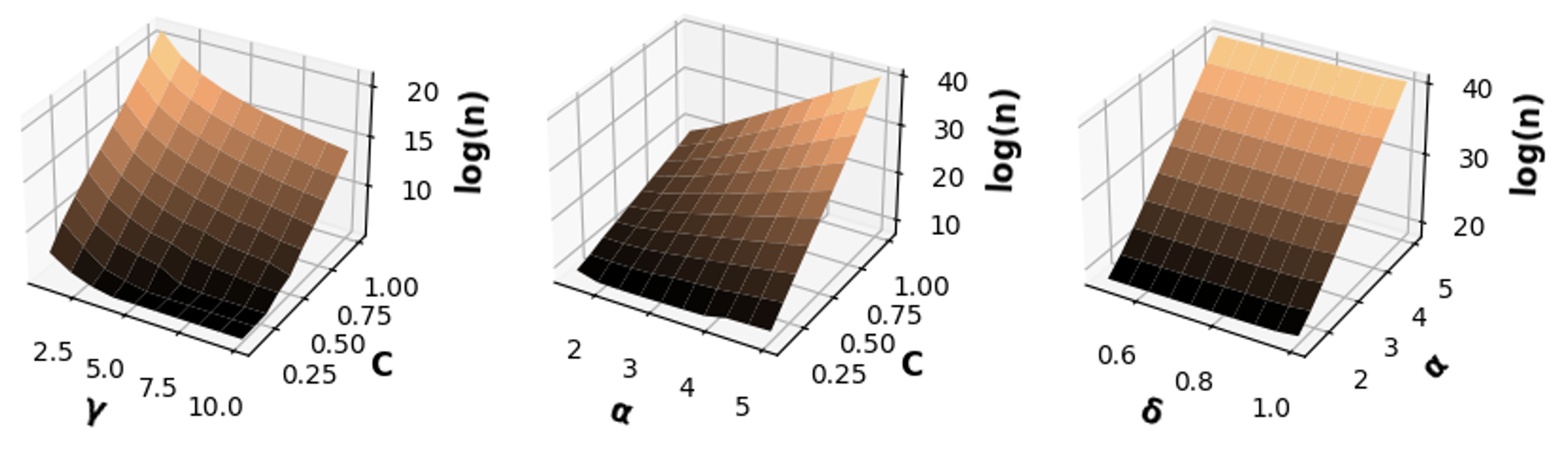}}
    \caption{Logarithm of the number of samples as functions of $\alpha, \conf, \gamma$ and $C$ for LRDP ($W=1$; when not varying, we set $\alpha = 2$, $C = 1$, $\conf = 0.9$ and $\gamma = 1$).}
    \label{fig:LogRDP}
\end{figure}


\begin{table}[h]
    \centering
    \resizebox{1\columnwidth}{!}{
    \begin{tabular}{c|c|c|||c|c|c}
    \toprule
    $x_i$ & $x_j$ &  $\Tilde{\epsilon}(x_i,x_j)$ &
    $x_i$ & $x_j$ &  $\Tilde{\epsilon}(x_i,x_j)$ \\
    \midrule
    0.00& 1.00& \textbf{0.027}& 0.55& 0.47& 0.000\\
    0.05& 0.87& 0.024& 0.61& 0.32& 0.005\\
    0.11& 0.71& 0.017& 0.66& 0.16& 0.013\\
    0.16& 0.55& 0.008& 0.71& 0.00& 0.018\\
    0.21& 0.39& 0.002& 0.79& 0.87& 0.000\\
    0.26& 0.24& 0.000& 0.84& 0.71& 0.001\\
    0.32& 0.08& 0.002& 0.89& 0.55& 0.006\\
    0.39& 0.95& 0.012& 0.95& 0.39& 0.012\\
    0.45& 0.79& 0.007& 1.00& 0.24& 0.021\\
    0.50& 0.63& 0.001& 1.00& 0.00& \textbf{0.027}\\    
    \bottomrule
    \end{tabular}
    }
    \caption{$\Tilde{\epsilon}(x_i,x_j)$ for some pairs 
    among the $k^2$ 
    computed by Algorithm 4 with a truncated Laplacian of parameter $B=3.5$, and with $\conf=0.9$ and $\gamma=0.5$.}
    \label{tab:est_all_rdp}
\end{table}

\begin{table*}[t]\centering
\resizebox{1.5\columnwidth}{!}{%
\begin{tabular}{r|ccc|ccc|ccc|ccc}
\toprule
\multirow{2}[3]{*}{$\conf=.9$} & \multicolumn{3}{c}
{\begin{tabular}{c}
     $B = 5$\\
     $\epsilon\approx .22$\\
     $C\approx .013$
\end{tabular}} & 
\multicolumn{3}{c}
{\begin{tabular}{c}
     $B = 3$\\
     $\epsilon\approx .39$\\
     $C\approx .036$
\end{tabular}} & 
\multicolumn{3}{c}
{\begin{tabular}{c}
     $B = 2$\\
     $\epsilon\approx .63$\\
     $C\approx .081$
\end{tabular}} & 
\multicolumn{3}{c}
{\begin{tabular}{c}
     $B = 1.5$\\
     $\epsilon\approx .91$\\
     $C\approx .142$
\end{tabular}} \\
\cmidrule(lr){2-4} \cmidrule(lr){5-7} \cmidrule(lr){8-10} \cmidrule(lr){11-13}
 & $n_{\text{TH}}$ & $n_{\text{PR}}$ & $m$ & $n_{\text{TH}}$ & $n_{\text{PR}}$ & $m$  & $n_{\text{TH}}$ & $n_{\text{PR}}$ & $m$  & $n_{\text{TH}}$ & $n_{\text{PR}}$ & $m$  \\
\midrule
$\gamma=1$ & 17794 & 56 & 3 & 2.3e5 & 56 & 10 & 6.7e6 & 293 & 41 & 3.4e8 & 1800 & 195 \\
$\gamma=.5$ & 1.6e5 & 56 & 6 & 2.1e6 & 150 & 20 & 5.7e7 & 650 & 81 & 3.0e9 & 3962 & 389 \\
$\gamma=.1$ & 2.5e7 & 831 & 29 & 3.1e8 & 2556 & 97 & 8.6e9 & 9812 & 403 & 4.3e11 & 45875 & 1945 \\
\bottomrule
\end{tabular}
}
\caption{Numbers of samples for different values of  $\gamma$ and $C$ for a truncated Laplace mechanism for LRDP. 
In every cell, we provide: the theoretical number of samples ($n_{\text{TH}}$), an order of magnitude for the practical one ($n_{\text{PR}}$), and the number of sub-intervals ($m$).}
\label{tab:RDP}
\end{table*}

\bigskip
\noindent{\bf Analysis of Algorithm 4. }
As we have done for LDP, we begin by analyzing the behavior of the estimator in Algorithm 4. We suppose the same experimental environment in~\cref{sec:exper} but with parameters $\gamma=0.5$, $\delta=0.9$, $B=3.5$ so $\epsilon\approx 0.027$, $D\approx 0.66$ and $C\approx 0.33$; moreover, we let $\alpha=2$ and, consequently, $k=39$.
In \cref{tab:est_all_rdp} we display $\Tilde{\epsilon}(x_i,x_j)$ for some pairs $x_i$ and $x_j$ among the $k^2(=1521)$ computed during an execution of Algorithm 4. Once again the algorithm correctly returns $\Tilde{\epsilon}=0.027$ and again this value is reached only for the pair of extremes $\{x_i,x_j\}=\{0,1\}$. Therefore, we will focus on $x_1=0$ and $x_2=1$ for the analysis of Algorithm~3.

\bigskip
\noindent{\bf Analysis of Algorithm 3. }
We show in~\cref{fig:RDPworks} the average value of the estimated $\epsilon$ across 100 executions of the proposed algorithm for LRDP.
As already anticipated, the most impressive result is the gap between the theoretical and the practical number of samples needed in order to satisfy~\cref{thm:RDP-correct}. Indeed, for LRDP the theoretical number of samples is way too large and the practical one is very small: there is a factor $\sim 8000$ between the two. To show this, we report in \cref{tab:RDP} a comparison between the two values for different sets of parameters.

\begin{figure}[t]
    \includegraphics[width=.45\textwidth]{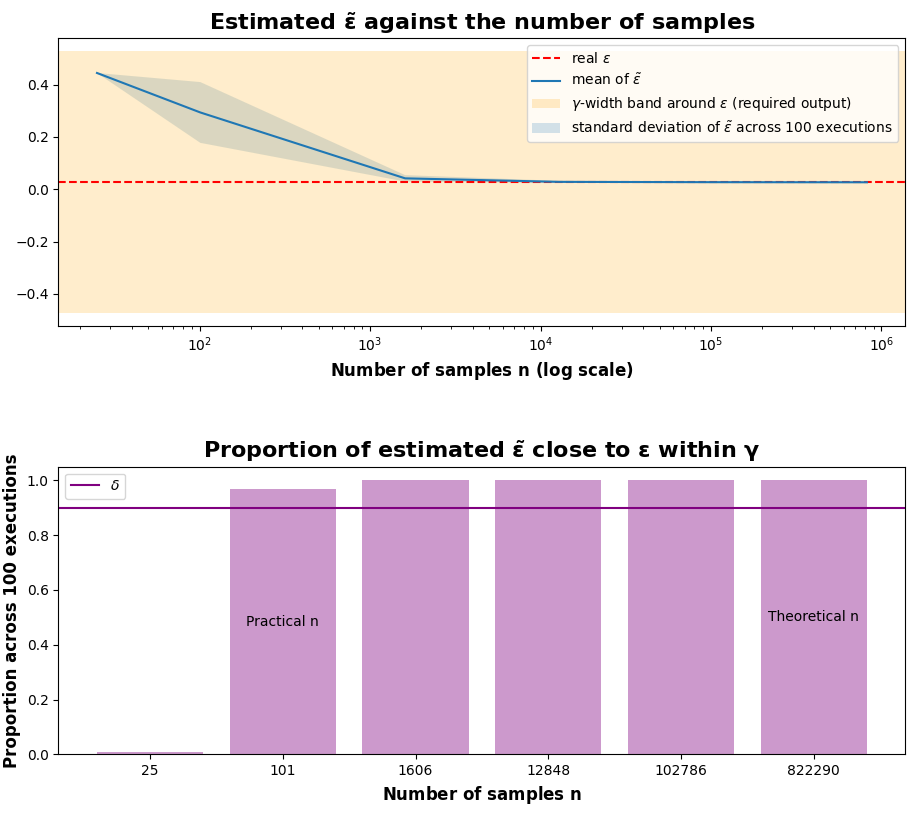}
    \caption{Analysis on $n$ across 100 execution of algorithm 2 for LRDP.}
    \label{fig:RDPworks}
\end{figure}

\section{Conclusion and Future work}
\label{sec:concl}

In this paper, we studied to what extent final users can infer information about the level of protection of their data in the differential privacy setting when the obfuscation mechanism is unknown to them (the so called ``black-box" scenario). We first proved that, without any assumption on the underlying distributions, it is not possible to estimate the degree of protection of the data, not even with low levels of precision and confidence. Our proof is for LDP and LRDP, but it can easily be adapted also to DP and RDP. By contrast, when the involved densities are Lipschitz in a closed interval, such guarantees exist for both $\varepsilon$-LDP and LRDP.
We validated our results by using one of the best known DP data obfuscation mechanisms (namely, the Laplace one).

This is just the starting point of our investigations. Possible directions for future research involve finding other possible assumptions on the distributions that allow for theoretical guarantees of the estimator, in particular for discrete distributions. In any case, as our impossibility results make evident, the crucial thing is to have distributions without peaks; so, some form of smoothness is needed.
Orthogonally, we can renounce to any theoretical guarantee and use estimators for ratios of probability distributions that may perform well in practice. Many works in this directions have been done (see, e.g., \cite{KHS08,KSS10,SSK12,LTSK17,K17}) and it would be interesting to see their performances in the setting of DP. This would also have the advantage of not making any assumption on the distributions involved.

\ifCLASSOPTIONcaptionsoff
  \newpage
\fi

\bibliographystyle{IEEEtran}
\bibliography{IEEEabrv,main.bib}

\newpage

\appendix
\subsection{Calculating a Truncated Laplace distribution}
\label{app_privacy:truncated}

For the experiments, we will need to massively sample from
the cumulative distribution function (CDF) of a
truncated Laplace distribution. However, classical libraries used for sampling (such as numpy for Python) do not implement such distributions.
Hence, we first determine the inverse CDF $F^{-1}$ of the truncated Laplace distribution $F$, and then apply it to values drawn from a uniform distribution on $[0,1]$ (implemented in classical libraries) to get values distributed according to the truncated Laplace. 
This is the well-known {\em inverse transform sampling} method used for pseudo-random number sampling (i.e., for generating sample numbers at random from any probability distribution) given its inverse CDF. The basic property of this method is that, if $U$ is a uniform random variable on $(0, 1)$, then $F^{-1}(U)$ has $F$ as its CDF. In our case, the required CDF is:\\

\vspace{-.4cm}
\noindent
$F(z|x,B)$
\begin{align*}
    & = \int_a^z f(u|x,B)du
    \\
    & =
    \begin{cases}
        BK_{x,B}
        \left(
            2 - e^{-(x - a)/B} - e^{-(z - x)/B}
        \right)
        \qquad \quad \text{if $z>x$}
        \\
        BK_{x,B}
        \left(
            e^{-(x - z)/B} - e^{-(x - a)/B}
        \right)
        \hfill \text{if $z \leq x$}
    \end{cases}
    \\
    & = BK_{x,B}
    \biggl(
        \operatorname{sgn}(z - x)
        \left(
            1 - e^{-|z - x|/B}
        \right) + 1 - e^{-(x - a)/B} 
    \biggr)
\end{align*}

To compute the inverse CDF, 
let $p\in [0,1]$. Then, we solve for $z$ the following equation:
\begin{align*}
    p = F(z|x,B)
\end{align*}
to set $F^{-1}(p|x,B) = z$ and we get\\

\noindent
$F^{-1}(p|x,B)$
\begin{align*}
    & =
    \begin{cases}
        x - B\log
        \left(
            2 - e^{-(x - a)/B} - \frac{p}{BK_{x,B}}
        \right) \qquad \quad \text{if $z > x$}
        \\
        x + B\log
        \left(
            e^{-(x - a)/B} + \frac{p}{BK_{x,B}}
        \right) \hfill \text{if $z \leq x$}
    \end{cases}
    \\
    & = x - \operatorname{sgn}(z - x)B\log
    \biggl[
        1\ +
    \\ & \qquad + \operatorname{sgn}(z - x)
        \biggl(
            1 - e^{-(x-a)/B} - \frac{p}{BK_{x,B}}
        \biggr)
    \biggr]
\end{align*}
Note that $F$ is strictly increasing and $F(z|x,B) = p$; so
\begin{align*}
    \operatorname{sgn}(z - x)
    & = \operatorname{sgn}(p-F(x|x,B))
\end{align*}

Therefore, we can derive the desired expression for $F^{-1}(p|x,B)$:
$$    
\begin{array}{l}
x - \operatorname{sgn}(p-F(x|x,B))B\log
    \biggl[
        1\ +
    \\ 
\quad + \operatorname{sgn}(p-F(x|x,B))
        \biggl(
            1 - e^{-(x-a)/B}
             - \frac{p}{BK_{x,B}}
        \biggr)
    \biggr].
\end{array}
$$
\subsection{Proofs}
\label{appendix:proofs}

\begin{IEEEproof}[Proof of \cref{lower}]
    Let $z\in\mathcal{Z}$. Then:
    \begin{align*}
        p(z) &= \left|p(z)-\frac{1}{W}+\frac{1}{W}\right|\\
        &\geq \frac{1}{W} - \left|p(z)-\frac{1}{W}\right| \tag{second triangle inequality}\\
        &=\frac{1}{W}-\left|\frac{1}{W}\int_a^b p(z)dt - \frac{1}{W}\int_a^b p(t)dt\right|\\
        &=\frac{1}{W}-\frac{1}{W}\left|\int_a^b (p(z)-p(t))dt\right|\\
        &\geq \frac{1}{W}-\frac{1}{W}\int_a^b |p(z)-p(t)|dt\\
        &\geq \frac{1}{W}-\frac{C}{W}\int_a^b |z-t|dt \tag{since $p$ is $C$-lipschitz}\\
        & = \frac{1}{W}-\frac{C}{W}\left[\int_a^z (z-t)dt + \int_z^b (t-z)dt\right]\\
        & = \frac{1}{W} - \frac{C}{W}\left[\frac{(z-a)^2}{2}+\frac{(b-z)^2}{2}\right]\\
        & \geq \frac{1}{W} - \frac{CW}{2}
    \end{align*}
    The last inequality comes from the fact that, if we consider $g(z)=(z-a)^2+(b-z)^2$,
    we have that
    $g'(z) = 4z - 2a - 2b \geq 0 \Longleftrightarrow z \geq \frac{a + b}{2}$.
    Hence, $g$ is decreasing and then increasing; therefore, its maximum is either $g(a)$ or $g(b)$. Since $g(a)=g(b)=W^2$, we have that $g(z)\leq W^2$.\\
    
    \noindent
    Similarly with the first triangle inequality we get
    
    \hfill \large $p(z) \leq \frac{1}{W}+\frac{CW}{2}$
\end{IEEEproof}

\bigskip

\begin{IEEEproof}[Proof of~\cref{thm:imposs}]
    Let $\rho=\sqrt[3]{1-\conf}\in(0,1)$; we shall
    write $\Tilde\epsilon$ instead of $\Tilde\epsilon(x_1,x_2,\gamma, \conf)$ and take $\mathcal{X}=\mathcal{Z}=\{0,1\}$. 
    For arbitrary $d \in (0,1)$ and $h > 1$, we set 
    $$p = p_{Z|X}(\cdot|0) = \mathcal{B}(d) \quad\text{and}\quad q = p_{Z|X}(\cdot|1) = \mathcal{B}\left(\frac{d}{h}\right)$$
    where $\mathcal{B}(\cdot)$ denotes a Bernoulli r.v.\,of parameter ‘$\cdot$'.
    Hence, $\Z(x_1,x_2)=\Z$ and
    \begin{align}
        \epsilon^\star(x_1,x_2) &=\sup_{x_1,x_2\in\mathcal{X}}\left[\sup_{z\in\mathcal{Z}(x_1,x_2)}\log\left(\frac{p_{Z|X}(z|x_1)}{p_{Z|X}(z|x_2)}\right)\right] \nonumber \\
        &=\max \left[\log\frac{1-d}{1-d/h},\log\frac{d}{d/h},\log\frac{1-d/h}{1-d},\log\frac{d/h}{d}\right] \nonumber \\
        &\geq \log(h) \label{lnh}
    \end{align}

    Now, let $N$ (still  depending on $x_1, x_2, \gamma,\conf$) be the (random) number of samples that $\Tilde\epsilon$ draws from either $p$ or $q$ during its execution. Let also $s_i$ denote the $i$-th sample from $p$ or $q$ returned by the sampler $\cal S$.
    For all $i$, we have that $\Pr_{s_i\sim p}(s_i=1)=d$ and $\Pr_{s_i\sim q}(s_i=1)=d/h$; so, in general
    \begin{align}
    \Pr(s_i=1)\leq d 
    \label{prSi}
    \end{align}
    Hence, if we let $P_n$ be the event \textit{“$\forall i \leq n. \ s_i = 0$"}, we can use \cref{prSi} and the union bound to have that:
    \begin{align}
        \Pr(P_n) 
        &= 1 - \Pr(\exists i \leq n. \ s_i = 1)\nonumber \\
        & \geq 1 - nd \label{1nd}
    \end{align}
    
    Then, for all $n$, consider the event $[N=n|P_n]$ (meaning that $\Tilde\epsilon$ stops sampling after $n$ draws equal to 0): it does not depend on the distribution $p_{Z|X}$ since $\Tilde\epsilon$ cannot have inferred any information on $p_{Z|X}$ with such samples. Hence, we can consider for a moment $p_{Z|X}=\mathcal{B}(0)$, the trivial obfuscation mechanism that always returns 0: $\Tilde\epsilon$ almost surely terminates also in this case and, in particular, it almost surely does a finite number $N_{\mathcal{B}(0)}$ of calls to the sampler. Hence, $\Pr(N_{\mathcal{B}(0)}=+\infty)=0$ and $\sum_{n\in\N}\Pr(N_{\mathcal{B}(0)}=n)=1$.
    Therefore, there exists $n_0$ such that
    \begin{align*}
        \sum_{n\leq n_0}\Pr(N_{\mathcal{B}(0)}=n)\geq \rho
    \end{align*}
    since $\rho<1$. Furthermore, for all $n$, it holds that
    $$\Pr(N_{\mathcal{B}(0)}=n)\ =\ \Pr(N=n|P_n)$$
    hence, 
        \begin{align}
        \sum_{n\leq n_0}\Pr(N=n|P_n)\geq \rho.
        \label{pralpha}
        \end{align}
    Now, for all $n\leq n_0$, let $h_n>1$ be such that         \begin{align}\label{hnprop}
    \Pr(\Tilde\epsilon+\gamma\leq \log(h_n)|P_n\,,N=n)\geq \rho.
    \end{align}
    Notice that such an $h_n$ exists for every $n$: indeed, since $\Tilde\epsilon$ has only sampled 0s, its answer does not depend on $h$.
    Now, set $h=\max_{n\leq n_0} h_n$ and $d=\frac{1-\rho}{n_0}$; then:
    
    $\Pr(|\epsilon^\star(x_1,x_2)-\Tilde\epsilon|\geq\gamma)$
    \begin{align*}
        & \geq \Pr(\Tilde\epsilon + \gamma \leq \epsilon^\star(x_1,x_2)) \tag{inclusion of events}\\
        & \geq \Pr(\Tilde\epsilon + \gamma \leq \log(h)) \tag{by~\eqref{lnh}}\\
        & = \sum_n \Pr(\Tilde\epsilon + \gamma \leq \log(h), N=n) \tag{law of total probability}\\
        & \geq \sum_n \Pr(\Tilde\epsilon + \gamma \leq \log(h), N=n, P_n)\\
        & \geq \sum_{n\leq n_0} \Pr(\Tilde\epsilon + \gamma \leq \log(h), N=n, P_n)\\
        & \geq \sum_{n\leq n_0} \Pr(\Tilde\epsilon + \gamma \leq \log(h_n), N=n, P_n) \tag{since $h\geq h_n$ for all $n\leq n_0$}\\
        & = \sum_{n\leq n_0} \Pr(\Tilde\epsilon + \gamma \leq \log(h_n) | N=n, P_n) \Pr(N=n,P_n) \tag{we set $\PR{A|B}=0$ whenever $\PR{B}=0$}\\
        & \geq \sum_{n\leq n_0} \rho \Pr(N=n,P_n) \tag{by \eqref{hnprop}}\\
        & = \sum_{n\leq n_0} \rho \Pr(N=n|P_n) \Pr(P_n)\\
        & \geq \sum_{n\leq n_0} \rho \Pr(N=n|P_n) (1-nd) \tag{by~\eqref{1nd}}\\
        & \geq \rho (1-n_0d) \sum_{n\leq n_0} \Pr(N=n|P_n) \tag{since $n\leq n_0$} \\
        & \geq \rho^2 (1-n_0d) \tag{by \eqref{pralpha}}\\
        & = \rho^3 \tag{by definition of $d$}\\
        & = 1 - \conf \tag{by definition of $\rho$}
    \end{align*}
\end{IEEEproof}

\bigskip

\begin{IEEEproof}[Proof of \cref{chernoff}] \ \\
    We have $\Pr(S_n>0) = 1-\Pr(S_n=0)=1-(1-p)^n$. Let us then define
    \begin{align}
        \Pr_0(A) &= \Pr(A|S_n>0)\nonumber
    \end{align}
    for any event $A$. Hence,
    \begin{align}
        \Pr_0(A) &= \Pr(A,S_n>0)/\Pr(S_n>0)\nonumber\\
        & \leq \Pr(A)/\Pr(S_n>0)\nonumber\\
        &= \frac{\Pr(A)}{1-(1-p)^n} \label{l21}
    \end{align}
    Then,
    \begin{align*}
        & \Pr_0(|\log S_n - \log np| > \gamma)
        \\
        &\quad = \Pr_0\text([\log S_n - \log np > \gamma] \text{ or } [\log np - \log S_n > \gamma])
        \\
        &\quad \leq \Pr_0(\log S_n - \log np > \gamma) + \Pr_0(\log np - \log S_n > \gamma)
        \\
        &\quad = \Pr_0(\log S_n > \log np+\gamma) + \Pr_0(\log S_n < \log np-\gamma)
        \\
        &\quad = \Pr_0(S_n > npe^\gamma) + \Pr_0(S_n < npe^{-\gamma})
        \\
        &\quad \leq \frac{\Pr(S_n > npe^\gamma) + \Pr(S_n < npe^{-\gamma})}{1-(1-p)^n}
        \tag{by~\eqref{l21}}
        \\
        &\quad = \big[\Pr(S_n > np(1+(e^\gamma - 1)))
        \\
        & \quad \quad \quad + \Pr(S_n < np(1 - (1-e^{-\gamma})))\big]/(1-(1-p)^n)
        \\
        &\quad = f(n,p,\gamma)
        \tag{multiplicative Chernoff bounds}
    \end{align*}
\end{IEEEproof}

\bigskip

To prove \cref{thm:estimator}, we need a preliminary lemma.

\begin{lemma}[Property of $\log$]
    \label[lemma]{logLipsch}
    For any $T > 0$ and $t,t' \geq T$, we have that
    $$
        |\log t - \log t'| \leq \frac{1}{T}|t - t'|
    $$
\end{lemma}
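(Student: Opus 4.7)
The statement is that $\log$ is Lipschitz with constant $1/T$ on the interval $[T, \infty)$, which is a standard calculus fact. The plan is to use the Mean Value Theorem (or equivalently, to write the difference as an integral of the derivative) and exploit the monotonicity of $1/x$.

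First, I would assume without loss of generality that $t' \leq t$, so that $|t - t'| = t - t'$ and $|\log t - \log t'| = \log t - \log t'$. If $t = t'$ the inequality is trivial, so assume $t' < t$. Since $\log$ is differentiable on $(0, \infty)$ with derivative $u \mapsto 1/u$, the fundamental theorem of calculus gives
\begin{equation*}
    \log t - \log t' \;=\; \int_{t'}^{t} \frac{du}{u}.
\end{equation*}

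Next, I would bound the integrand. For every $u \in [t', t]$, we have $u \geq t' \geq T > 0$, so $1/u \leq 1/T$. Therefore
\begin{equation*}
    \log t - \log t' \;=\; \int_{t'}^{t} \frac{du}{u} \;\leq\; \int_{t'}^{t} \frac{du}{T} \;=\; \frac{t - t'}{T}.
\end{equation*}
Taking absolute values (and recalling the symmetric case $t < t'$, handled identically) yields the desired inequality.

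There is no real obstacle here: the result is just the fact that a $C^1$ function with derivative bounded in absolute value by $L$ is $L$-Lipschitz, applied to $\log$ on $[T, \infty)$ where the derivative $1/x$ is bounded by $1/T$. One could equivalently invoke the Mean Value Theorem to produce an intermediate point $\xi \in [t', t] \subseteq [T, \infty)$ with $\log t - \log t' = (t - t')/\xi$ and conclude using $1/\xi \leq 1/T$; the integral presentation is slightly cleaner because it does not require separately invoking the MVT hypotheses.
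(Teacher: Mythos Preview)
Your proof is correct and essentially the same as the paper's: the paper applies the Mean Value Theorem to obtain a point $c\in[t,t']$ with $\log t - \log t' = (1/c)(t-t')$ and then bounds $1/c \le 1/T$, which is exactly the alternative you describe at the end. The integral presentation you give is an equivalent formulation of the same derivative-bound argument.
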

\begin{IEEEproof}
    Without loss of generality, we can suppose $t \leq t'$. Then the mean value theorem ensures that there exists $c \in [t, t']$ such that
    \begin{align*}
        \log t - \log t' = \log'(c) (t - t')
    \end{align*}
    Since \ $\log'(c) = \frac{1}{c} \leq \frac{1}{t} \leq \frac{1}{T}$, we also have that
    $$
        |\log t - \log t'|
        = \log'(c) |t - t'| 
        \leq \frac{1}{T} |t - t'|
        \vspace*{-.5cm}
    $$
\end{IEEEproof}

\medskip

\begin{IEEEproof}[Proof of~\cref{thm:estimator}] 
By \eqref{mdef}, we have that
    \begin{align}
        \frac{Cw}{\tau} &\leq \frac{\gamma}{6} \label{m}
    \end{align}
    Now, let us define
    \begin{align*}
        p_j &\defsym \Pr(z_j \leq {\cal S}(x_1) < z_{j+1}) = \int_{z_j}^{z_{j+1}} p(t)dt \\
        q_j &\defsym \Pr(z_j \leq {\cal S}(x_2) < z_{j+1}) = \int_{z_j}^{z_{j+1}} q(t)dt
    \end{align*}
    the probabilities of sampling in $[z_j,z_{j+1})$ from $p$ and $q$, respectively. We first prove that (the log of) $\frac{p_j}{q_j}$ is close to (the log of) $\frac{p(z)}{q(z)}$.
    \medskip
    \begin{quote}
        \label{cl1}
        \textbf{Claim 1:} {\it for all $j$ and $z \in [z_j,z_{j+1})$, it holds that} $$\left|\log\left(\frac{p_{j}}{q_{j}}\right)-\log\left(\frac{p(z)}{q(z)}\right)\right|\leq \frac{\gamma}{6}.$$
    \end{quote}
    First, thanks to~\cref{lower} and \eqref{mdef}, we have that
    \begin{align}
        &p(z)\geq \tau \text{ and } q(z)\geq \tau \text{ for all $z$.} 
        \label{pqtau}
    \end{align}
    So,
    \begin{align}
    p_j = \int_{z_j}^{z_{j+1}} p(t)dt \geq \int_{z_j}^{z_{j+1}}\tau dt = w\tau \label{pqj}
    \end{align}
    and similarly $q_j \geq w\tau$.
    Then, for all $j$ and $z \in [z_j , z_{j+1})$:
    \begin{align*}
    & \left|\log\left(\frac{p_{j}}{q_{j}}\right)-\log\left(\frac{p(z)}{q(z)}\right)\right|
    \\
    & \qquad = \left|\log\left(\frac{p_{j}}{q_{j}}\right)-\log\left(\frac{wp(z)}{wq(z)}\right)\right|\\
        &\qquad \leq |\log p_{j} - \log wp(z)|+|\log q_{j} - \log wq(z)|
        \\
        & \qquad \leq \frac{1}{w\tau}\biggl[|p_{j}-wp(z)|+|q_{j}-wq(z)|\biggr]
        \tag{by \cref{logLipsch}, \eqref{pqtau} and \eqref{pqj}}
    \end{align*}
    Now,
    \begin{align}
        |p_{j}-wp(z)| &=
        \left|
            \int_{z_{j}}^{z_{j+1}} (p(t)-p(z))dt
        \right| \nonumber\\
        &\leq \int_{z_{j}}^{z_{j+1}} |p(t)-p(z)|dt \nonumber\\
        &\leq C\int_{z_{j}}^{z_{j+1}} |t-z|dt \tag{$p$ is $C$-lipschitz}\\
        &\leq \frac{Cw^2}{2}
        \label{boundpj}
    \end{align}
    with the last inequality resulting from the same reasoning as for the end of the proof of~\cref{lower}.
    The same applies for $q$, and so by~\eqref{m}
    \begin{align}
        \left|\log\left(\frac{p_{j}}{q_{j}}\right)-\log\left(\frac{p(z)}{q(z)}\right)\right|\nonumber
        & \leq \frac{1}{w\tau}\biggl[\frac{Cw^2}{2}+\frac{Cw^2}{2}\biggr]
        \leq \frac{\gamma}{6}
    \end{align}
    This proves \hyperref[cl1]{Claim 1}.
    \vspace{.6cm}
    
    Since $\epsilon^\star(x_1,x_2) = \sup_{z\in\mathcal{Z}}\log\left(\frac{p(z)}{q(z)}\right)$ and $\mathcal{Z}=[a,b]$ is closed, $\epsilon^\star(x_1,x_2)$ is reached for some $z_0\in\mathcal{Z}$. Let $j_0$ be such that $z_{j_0} \leq z_0 < z_{j_0+1}$ ($z_0$ is in the sub-interval of index $j_0$).
    Let also the random variable $J=\text{argmax}_{1\leq j\leq m}\log\left(\frac{N_j}{M_j}\right)$ be the index of the return of our estimator in case of success.
    
    For all $j$, let 
    $$C_{j,\gamma} \defsym  ``\ \left|\log\left(\frac{N_j}{M_j}\right)-\log\left(\frac{p_j}{q_j}\right)\right| \leq \frac{\gamma}{6}\ "
    $$
    denote the event that we estimate precisely enough the ratio for the sub-interval of index $j$. 
    Furthermore, let 
    $$E \defsym ``\text{success}\text{ and } C_{J,\gamma} \text{ and } C_{j_0,\gamma}"$$
    denote the event that our estimator succeeds (so for all $j$, $N_j>0$ and $M_j>0$) and it precisely estimates the ratio for the sub-intervals of index $J$ and $j_0$.
    We now show that $E$ implies that the whole estimation process succeeds.
    
    \bigskip
    \begin{quote}
    \label{cl2}
        \textbf{Claim 2:} {\it If $E$ then $\left|\log\left(\frac{N_J}{M_J}\right)-\epsilon^\star(x_1,x_2)\right|\leq \gamma$.}
    \end{quote}
    Indeed, suppose that we have run our algorithm and that $E$ occurs; let us then show that the answer of our estimator is close to $\epsilon^\star(x_1,x_2)$. To this aim, 
    let us prove that our answer (viz., $\log \frac{N_J}{M_J}$) is close to $\log \frac{p_{j_0}}{q_{j_0}}$, since (thanks to \hyperref[cl1]{Claim 1}) we know that the latter is close to $\epsilon^\star(x_1,x_2) = \log \frac{p(z_0)}{q(z_0)}$. 
    By contradiction, suppose that:
    \begin{align}
    \left|\log\left(\frac{N_J}{M_J}\right)-\log\left(\frac{p_{j_0}}{q_{j_0}}\right)\right|>\frac{\gamma}{2}
    \label{contrHyp}
    \end{align}
    Hence, $J\neq j_0$ and
    \eqref{contrHyp} can hold because of two possible cases, which we now examine in isolation. In what follows, we shall use the observation, referred to as $(\dagger)$, that $|x-y|\leq z$ implies $x-y\leq z$, for every $x,y,z$ (indeed, $x-y\leq |x-y|$).
    \begin{itemize}
    \item Case 1:
    $\log\left(\frac{N_J}{M_J}\right)>\log\left(\frac{p_{j_0}}{q_{j_0}}\right)+\frac{\gamma}{2}$.
    Then, for any $z \in [z_J, z_{J+1})$, we have that
    \begin{align*}
        \log\left(\frac{p(z)}{q(z)}\right)
        & \geq \log\left(\frac{p_J}{q_J}\right) - \frac{\gamma}{6} \tag{by \hyperref[cl1]{Claim 1} and $(\dagger)$}\\
        & \geq \log\left(\frac{N_J}{M_J}\right) - \frac{2\gamma}{6} \tag{by $(\dagger)$ and since $C_{J,\gamma}$ occurs}\\
        & > \log\left(\frac{p_{j_0}}{q_{j_0}}\right) + \frac{\gamma}{2}- \frac{\gamma}{3}\\
        & \geq \log\left(\frac{p(z_0)}{q(z_0)}\right) - \frac{\gamma}{6}+\frac{\gamma}{6} \tag{by \hyperref[cl1]{Claim 1} and $(\dagger)$}\\
        & = \epsilon^\star(x_1,x_2)
    \end{align*}
    which is impossible by definition of $\epsilon^\star(x_1,x_2)$.
    
    \item Case 2:
    $\log\left(\frac{p_{j_0}}{q_{j_0}}\right)>\log\left(\frac{N_J}{M_J}\right)+\frac{\gamma}{2}$. Then:
    \begin{align*}
        \log\left(\frac{N_{j_0}}{M_{j_0}}\right) & \geq \log\left(\frac{p_{j_0}}{q_{j_0}}\right) - \frac{\gamma}{6} \tag{by $(\dagger)$ and since $C_{j_0,\gamma}$ occurs}\\
        & > \log\left(\frac{N_J}{M_J}\right)+\frac{\gamma}{2} - \frac{\gamma}{6}\\
        & \geq \log\left(\frac{N_J}{M_J}\right)
    \end{align*}
    which is impossible by definition of $J$ (the algorithm should have answered $\log \frac{N_{j_0}}{M_{j_0}}$).
    \end{itemize}
    
    \noindent
    Therefore, 
    \begin{align*}
    \left|\log\left(\frac{N_J}{M_J}\right)-\epsilon^\star(x_1,x_2)\right| & 
    \leq \left|\log\left(\frac{N_J}{M_J}\right)-\log\left(\frac{p_{j_0}}{q_{j_0}}\right)\right|
    \\
    & \quad + \left|\log\left(\frac{p_{j_0}}{q_{j_0}}\right)-\log\left(\frac{p(z_0)}{q(z_0)}\right)\right|
    \\
    & \leq \gamma
    \end{align*}
    because of the negation of \eqref{contrHyp}
    and by \hyperref[cl1]{Claim 1}. This proves \hyperref[cl2]{Claim 2}.\\

    Finally, for showing that our estimator succeeds with high enough probability, we will prove that $E$ occurs with high enough probability. To this aim, first observe that
    \begin{align*}
        \Pr(\text{fail}) &= \Pr(\exists j\ s.t. \ N_j = 0 \text{ or } M_j = 0)\\&\leq \sum_j [\Pr(N_j = 0) + \Pr(M_j = 0)]
    \end{align*}
    However,
    \begin{align*}
        \Pr(N_j = 0) &= \Pr(\forall i. \ s_1[i] \notin [z_j, z_{j+1}))\\
        &= \Pr({\cal S}(x_1) \notin [z_j,z_{j+1}))^n \tag{the $s_1[i]$'s are i.i.d.}\\
        &= (1-\Pr({\cal S}(x_1) \in [z_j,z_{j+1})))^n\\
        &= (1-p_j)^n \tag{definition of $p_j$}\\
        &\leq (1-w\tau)^n \tag{by~\eqref{pqj}}
    \end{align*}
    and similarly, $\Pr(M_j=0)\leq (1-w\tau)^n$.
    Then
    \begin{align}
        \Pr(\text{fail}) \leq 2m(1-w\tau)^n \label{fail}
    \end{align}
    
    Furthermore, we have that all
    $N_j$ (resp., $M_j$)
    are binomial distributions of parameters $n$ and $p_j$ (resp. $q_j$); so
    \begin{align}
    & \PR{C_{j,\gamma}^c \ \bigg| \ N_j>0 \text{ and } M_j > 0} 
    \nonumber\\
    & \quad = \PR{\left|\log\left(\frac{N_j}{M_j}\right)-\log\left(\frac{np_j}{nq_j}\right)\right| > \frac{\gamma}{6} \ \bigg| \ N_j, M_j > 0}\nonumber\\
        & \quad\leq \PR{\left|\log N_j-\log np_j\right| > \frac{\gamma}{12} \ \bigg| \ N_j>0}\nonumber\\
        & \quad\qquad + \PR{\left|\log M_j-\log nq_j\right| > \frac{\gamma}{12} \ \bigg| \ M_j>0}\nonumber \tag{$N_j$ and $M_j$ are independent}\\
        & \quad\leq f(n,p_j,\frac \gamma{12})+f(n,q_j,\frac\gamma{12}) \tag{by~\cref{chernoff}}\nonumber\\
        & \quad\leq 2f(n,w\tau,\frac\gamma{12}) \label{csucc}
    \end{align}
    
    To conclude, we have that:
    \begin{align*}
    & \PR{\text{success}, \left|\log\left(\frac{N_J}{M_J}\right)-\epsilon^\star(x_1,x_2)\right|\leq \gamma}
    \\
    & \quad \geq \PR{E,\left|\log\left(\frac{N_J}{M_J}\right)-\epsilon^\star(x_1,x_2)\right|\leq \gamma}\\
        & \quad= \PR{E} \tag{by \hyperref[cl2]{Claim 2}}\\
        & \quad= 1 - \PR{E^c}\\
        & \quad= 1 - \PR{\text{fail} \text{ or } [C_{J,\gamma}^c,\text{success}] \text{ or } [C_{j_0,\gamma}^c,\text{success}]}
        \tag{since $A^c\cup B^c=A^c\cup (B^c\setminus A^c)=A^c\cup(B^c\cap A$)}\\
        & \quad\geq 1 - \PR{\text{fail}} - \PR{C_{J,\gamma}^c,\text{success}} - \PR{C_{j_0,\gamma}^c,\text{success}}\\
        & \quad\geq 1 - \PR{\text{fail}}\\
        & \quad\qquad - \PR{C_{J,\gamma}^c,N_J>0,M_J>0}\\
        & \quad\qquad - \PR{C_{j_0,\gamma}^c,N_{j_0}>0,M_{j_0}>0}\\
        & \quad\geq 1 - \PR{\text{fail}}\\
        & \quad\qquad - \PR{C_{J,\gamma}^c \ \bigg| \ N_J>0,M_J>0}\\
        & \quad\qquad - \PR{C_{j_0,\gamma}^c, \ \bigg| \ N_{j_0}>0,M_{j_0}>0}
        \tag{$\PR{A,B}=\PR{A|B}\PR{B}\leq \PR{A|B}$}\\
        & \quad\geq 1-2m(1-w\tau)^n-4f(n,w\tau,\gamma/12) \tag{by~\eqref{fail} and~\eqref{csucc}}\\
        & \quad\geq \conf \tag{by \eqref{ndef}}
    \end{align*}
    This proves the theorem.
\end{IEEEproof}

\bigskip

\begin{IEEEproof}[Proof of \cref{thm:scLip}] By using the same notation as the proof of \cref{thm:estimator}, for all $j$, we have that
    \begin{align*}
        \left|\frac{1}{n}N_j-\frac{1}{n}N_{j+1}\right|
        \leq & \left|\frac{1}{n}N_j-p_j\right|
        + |p_j-p_{j+1}|\\
        & + \left|p_{j+1}-\frac{1}{n}N_{j+1}\right|
    \end{align*}
    First, notice that
    \begin{align*}
        |p_j-p_{j+1}|&\leq |p_j-wp(z_{j+1})|+|wp(z_{j+1})-p_{j+1}|\\
        & \leq Cw^2 \tag{by \eqref{boundpj}}
    \end{align*}
    Then, for any $c\geq0$, notice that the event
    $$\left|\frac{1}{n}N_j-p_j\right|\leq c \ \text{ and } \ \left|\frac{1}{n}N_{j+1}-p_{j+1}\right|\leq c$$
    implies the event $\left|\frac{1}{n}N_j-\frac{1}{n}N_{j+1}\right|\leq 2c+Cw^2$. Therefore,\\
    $\PR{\left|\frac{1}{n}N_j-\frac{1}{n}N_{j+1}\right|\leq 2c+Cw^2}$
    \begin{align*}
        &\geq \PR{\left|\frac{1}{n}N_j-p_j\right|\leq c \ \text{ and } \ \left|\frac{1}{n}N_{j+1}-p_{j+1}\right|\leq c}\\
        &=1-\PR{\left|\frac{1}{n}N_j-p_j\right|> c \ \text{ or } \ \left|\frac{1}{n}N_{j+1}-p_{j+1}\right|> c}\\
        &\geq 1 - \PR{\left|\frac{1}{n}N_j-p_j\right|> c} - \PR{ \left|\frac{1}{n}N_{j+1}-p_{j+1}\right|> c}
    \end{align*}
    Furthermore, for every $j$, we have that
    \begin{align*}
        \PR{\left|\frac{1}{n}N_j-p_j\right|> c}
        &=\PR{\left|N_j-np_j\right|> np_j\frac{c}{p_j}}\\
        &\leq 2\exp\left(\frac{-np_j(c/p_j)^2}{3}\right)\tag{multiplicative Chernoff bound}
        \\
        &\leq 2\exp\left(\frac{-nc^2}{3}\right)
    \end{align*}
    Hence,
    \begin{align*}
        \PR{\left|\frac{1}{n}N_j-\frac{1}{n}N_{j+1}\right|\leq 2c+Cw^2}\geq 1-4\exp\left(\frac{-nc^2}{3}\right)
    \end{align*}
    Since this is true also for $M_j$, we can conclude that
    \begin{align*}
        \Pr\biggl(\exists j. \ &
        \left|\frac{1}{n}N_j-\frac{1}{n}N_{j+1}\right|> 2c+Cw^2 \text{ or }\\
        & \left|\frac{1}{n}M_j-\frac{1}{n}M_{j+1}\right|> 2c+Cw^2\biggr)
    \end{align*}
    \qquad \qquad $\leq 8m\exp\left(\frac{-nc^2}{3}\right)$
\end{IEEEproof}

\bigskip

\begin{IEEEproof}[Proof of \cref{thm:buckets}]
Since both $\X$ and $\Z$ are closed, $\epsilon^{\star}(p_{Z|X})$ is reached on some $x_1^\star, x_2^\star \in \X$ and $z^\star \in \Z$. Let $i^\star$ and $j^\star$ be the buckets where $x_1^\star$ and $x_2^\star$ fall into. Then, 
\begin{align}
\label{lab:boundk}
|x_1^\star - x_{i^\star}| \leq \frac{d-c}{2k} \quad\text{and}\quad 
|x_2^\star - x_{j^\star}| \leq \frac{d-c}{2k}.
\end{align}
Let us now call 
\begin{align*}
L \defsym \log{\frac{p_{Z|X}(z^\star|x_{i^\star})}{p_{Z|X}(z^\star|x_{j^\star})}}
\end{align*}
Since $\epsilon^\star(p_{Z|X}) = \log{\frac{p_{Z|X}(z^\star|x_1^\star)}{p_{Z|X}(z^\star|x_2^\star)}}$, we have that
\begin{align*}
& |\epsilon^\star(p_{Z|X}) - L|
\\
& \quad \leq |\log{p_{Z|X}(z^\star|x_1^\star)} - \log{p_{Z|X}(z^\star|x_{i^\star})}| 
\\
& \qquad + |\log{p_{Z|X}(z^\star|x_2^\star)} - \log{p_{Z|X}(z^\star|x_{j^\star})}|
\\
& \quad \leq \frac 1 \tau |p_{Z|X}(z^\star|x_1^\star) - p_{Z|X}(z^\star|x_{i^\star})| 
\\
& \qquad + \frac 1 \tau |p_{Z|X}(z^\star|x_2^\star) - p_{Z|X}(z^\star|x_{j^\star})|
\tag{by \cref{lower} and \cref{logLipsch}}
\\
& \quad \leq \frac D \tau |x_1^\star - x_{i^\star}| 
+ \frac D \tau |x_2^\star - x_{j^\star}|
\tag{by $D$-Lipschitzness}
\\
& \quad \leq \frac{D(d-c)}{\tau k}
\tag{by \cref{lab:boundk}}
\\
& \quad \leq \frac \gamma 3
\tag{by \cref{kdef00} of \cref{alg:histEstimatorAll}}
\end{align*}
Since by definition $\epsilon^\star(x,x') = \sup_{z \in \Z}\log{\frac{p_{Z|X}(z|x)}{p_{Z|X}(z|x')}}$
and $\epsilon^\star(p_{Z|X}) = \sup_{x,x' \in \X}\epsilon^\star(x,x')$, we have that
$\epsilon^\star(p_{Z|X}) \geq \epsilon^\star(x_{i^\star},x_{j^\star}) \geq L$.
Thus, by the above inequality, we also have that
\begin{align}
\label{epstarDiff}
\epsilon^\star(p_{Z|X}) - \epsilon^\star(x_{i^\star},x_{j^\star})
\leq \frac \gamma 3.
\end{align}
To streamline writing, let us write 
$\Tilde{\epsilon}(x_i,x_i,{\cal Z},\frac \gamma 3,\sqrt\delta,C) \defsym \Tilde\epsilon_{ij}$ and $\epsilon^\star(x_i,x_j)\defsym \epsilon^\star_{ij}$.
Now assume that \cref{alg:histEstimator} succeeds in estimating $\epsilon^\star_{i^\star j^\star}$ with precision $\gamma/3$; then, \cref{alg:histEstimatorAll} succeeds, in the sense specified before \cref{thm:buckets}. Let us now consider the bucket indexes $I$ and $J$ of the answer of the algorithm i.e.
$$\Tilde{\epsilon}({\cal Z},{\cal X},\gamma,\delta,C,D) =
\Tilde\epsilon_{IJ}.$$
By \cref{thm:estimator}, we have that \cref{alg:histEstimator} succeeds with
\begin{align*}
    |\epsilon^\star_{i^\star j^\star} - \Tilde\epsilon_{i^\star j^\star}| \leq \frac\gamma 3
    \ \text{ and } \ 
    |\epsilon^\star_{IJ} - \Tilde\epsilon_{IJ}|& \leq \frac\gamma 3
\end{align*}
with probability $\sqrt\delta$.
By construction of the algorithm, we know that $\Tilde\epsilon_{i^\star j^\star} \leq \Tilde\epsilon_{IJ}$.
If by contradiction $\Tilde\epsilon_{IJ} > \epsilon^\star_{i^\star j^\star} + 2\gamma/3$, then
\begin{align*}
    \epsilon^\star_{IJ} & \geq \Tilde\epsilon_{IJ} - \gamma/3\\
    & > \epsilon^\star_{i^\star j^\star} + \gamma/3\\
    & \geq \epsilon^\star(p_{Z|X}) \tag{by \cref{epstarDiff}}
\end{align*}
which is absurd. So, $\Tilde\epsilon_{IJ} \leq \epsilon^\star_{i^\star j^\star} + 2\gamma/3$ and as 
$$\Tilde\epsilon_{IJ} \geq \Tilde\epsilon_{i^\star j^\star} \geq \epsilon^\star_{i^\star j^\star} - \frac \gamma 3$$
then
$$
    |\Tilde\epsilon_{IJ} - \epsilon^\star_{i^\star j^\star}| \leq \frac{2\gamma}{3}.
$$
Thus,
\begin{align}
    |\epsilon^\star(p_{Z|X}) - \Tilde\epsilon_{IJ}| & \leq |\epsilon^\star(p_{Z|X}) - \epsilon^\star_{i^\star j^\star}| + |\epsilon^\star_{i^\star j^\star} - \Tilde\epsilon_{IJ}| \notag\\
    & \leq \frac \gamma 3 + \frac{2\gamma}{3} = \gamma \label{alg2works}
\end{align}
meaning that \cref{alg:histEstimatorAll} estimates $\epsilon^\star(p_{Z|X})$ with the required precision $\gamma$. Let us show that this happens with probability at least $\delta$:
\begin{align*}
& \Pr(\text{Alg.\ref{alg:histEstimatorAll} succeeds and } |\epsilon^{\star}(p_{Z|X}) -\Tilde{\epsilon}|\leq\gamma)
\\
& \quad = \Pr(\text{Alg.\ref{alg:histEstimatorAll} succeeds and } |\epsilon^{\star}(p_{Z|X}) - \epsilon^\star(x_{i^\star},x_{j^\star})
\\
& \hspace*{4.3cm}+\epsilon^\star(x_{i^\star},x_{j^\star}) - \Tilde{\epsilon}|\leq\gamma)
\\
& \quad \geq \Pr\left(\text{Alg.\ref{alg:histEstimatorAll} succeeds and } | \epsilon^\star(x_{i^\star},x_{j^\star}) - \Tilde{\epsilon}|\leq \frac{2\gamma}{3}\right)
\tag{
by \cref{epstarDiff}}
\\
& \quad \geq \Pr\left(\text{Alg.\ref{alg:histEstimator} succeeds for $(x_I,x_J)$ and $(x_{i^\star}, x_{j^\star})$}\right)
\tag{as the latter event implies the former one}
\\
& \quad = (\sqrt{\delta})^2 = \delta \tag{by independence and \cref{thm:estimator}}
\end{align*}
\end{IEEEproof}

\bigskip

\begin{IEEEproof}[Proof of \cref{thm:RenyiImposs}]
The proof of \cref{thm:imposs} can be adapted to RDP with
$$
    p = \mathcal{B}
    \left(
        d^{1/\alpha}
    \right)
    \ \text{ and } \
    q = \mathcal{B}
    \left(
        \left[
            \frac{d}{h^{\alpha - 1}}
        \right]^{1/(\alpha - 1)}
    \right)
$$
Indeed, analogously to \eqref{lnh}, we have that
    \begin{align*}
        \epsilon_\alpha^\star(x_1,x_2)
        & \geq D_\alpha(p\|q)
        \\
        & = \frac{1}{\alpha-1} \log
            \E_{z \sim q}
            \left(
                \frac{p(z)}{q(z)}
            \right)^\alpha
        \\
        & = \frac{1}{\alpha-1} \log
            \left[
                \left(
                    \frac{p(0)}{q(0)}
                \right)^\alpha
                q(0)
                +
                \left(
                    \frac{p(1)}{q(1)}
                \right)^\alpha
                q(1)
            \right]
        \\
        & \geq \frac{1}{\alpha-1} \log 
        \left[
            \left(
                \frac{p(1)}{q(1)}
            \right)^\alpha q(1)
        \right]
        \tag{as log is increasing}
        \\
        & = \log(h)
    \end{align*}
    Moreover, we have that
    $$
        \Pr_{s_i \sim p}(s_i = 1) = d^{1/\alpha}
        \leq d^{1/(\alpha - 1)}
    $$
    and
    $$
        \Pr_{s_i \sim q}(s_i = 1)
        =
        \left[
            \frac{d}{h^{(\alpha-1)}}
        \right]^{1/(\alpha-1)}
        \leq d^{1/(\alpha - 1)}
    $$
    So, similarly to \eqref{prSi}, we have $\Pr(s_i = 1) \leq d^{1/(\alpha - 1)}$ and, similarly to \eqref{1nd}, we have
    \begin{align*}
        \PR{P_n} & \geq 1 - nd^{1/(\alpha-1)}
    \end{align*}
    If we choose $d = \left(\frac{1-\rho}{n_0}\right)^{\alpha-1}$, we can conclude like in the proof of \cref{thm:imposs}, since $1-n_0d^{1/(\alpha-1)} = \rho$.
\end{IEEEproof}

\bigskip

To prove \cref{thm:RDP-correct}, we need a preliminary lemma, analogous to \cref{logLipsch}.

\begin{lemma}[Property of $\exp$]
    \label[lemma]{expLipsch}
    For any $T > 0$ and $t,t' \leq T$, we have that
    \begin{align*}
        |t - t'| \leq T|\log t - \log t'|
    \end{align*}
\end{lemma}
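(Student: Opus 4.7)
The plan is to mirror the proof of \cref{logLipsch} by applying the mean value theorem, but to $\exp$ rather than $\log$. First I would note that although not stated explicitly, we implicitly need $t, t' > 0$ so that $\log t$ and $\log t'$ are well defined; without loss of generality, I would assume $t \leq t'$, so that both $\log t \leq \log t'$ and $t \leq t' \leq T$.

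Then, by the mean value theorem applied to $\exp$ on the interval $[\log t, \log t']$, there exists $c \in [\log t, \log t']$ such that
\begin{align*}
t' - t \;=\; \exp(\log t') - \exp(\log t) \;=\; \exp'(c)\,(\log t' - \log t) \;=\; e^{c}\,(\log t' - \log t).
\end{align*}
Since $c \leq \log t' \leq \log T$ (using $t' \leq T$ and monotonicity of $\log$), we get $e^{c} \leq T$, and therefore
\begin{align*}
|t - t'| \;=\; e^{c}\,|\log t - \log t'| \;\leq\; T\,|\log t - \log t'|,
\end{align*}
which is the desired inequality.

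There is no real obstacle here: the statement is the exponential dual of \cref{logLipsch}, and the bound $e^{c} \leq T$ on $[\log t, \log t']$ is the direct analogue of the bound $1/c \leq 1/T$ used there. The only subtlety worth flagging is the tacit positivity of $t, t'$, which should either be added to the hypotheses or mentioned as an obvious convention since the conclusion only makes sense when both logarithms are defined.
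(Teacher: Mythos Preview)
Your proof is correct and follows essentially the same approach as the paper's: both apply the mean value theorem to $\exp$ on the interval $[\log t,\log t']$ (the paper writes $s=\log t$, $s'=\log t'$) and bound $\exp'(c)=e^{c}\leq e^{\log t'}=t'\leq T$. Your remark about the tacit positivity of $t,t'$ is a fair observation; the paper silently assumes it as well.
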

\begin{IEEEproof}
    Without loss of generality we can suppose $t \leq t'$. Let $s = \log t$ and $s' = \log t'$; hence, $s \leq s'$. The mean value theorem ensures that there exists $c \in [s,s']$ such that
    \begin{align*}
        e^s - e^{s'} = \exp'(c)(s - s')
    \end{align*}
    and, since \ $\exp' c = e^c \leq e^{s'} = t' \leq T$, we have that
    $$
        |t - t'|
        = |e^s - e^{s'}|
        = \exp'(c)|s - s'| \leq T|\log t - \log t'|
        \vspace*{-.5cm}
    $$
\end{IEEEproof}

\medskip

\begin{IEEEproof}[Proof of \cref{thm:RDP-correct}]
Let us define
    \begin{align*}
        p_j & \defsym \Pr(z_j \leq {\cal S}(x_1) < z_{j+1}) = \int_{z_j}^{z_{j+1}} p(t)dt \\
        q_j & \defsym \Pr(z_j \leq {\cal S}(x_2) < z_{j+1}) = \int_{z_j}^{z_{j+1}} q(t)dt
    \end{align*}
    the probabilities of sampling in $[z_j,z_{j+1})$ from $p$ and $q$, respectively.
    Thanks to \cref{lower}, for all $t$ we have that
    \begin{align}
        \tau_0 \leq p(t) \leq \tau_1
        \ \text{ and } \
        \tau_0 \leq q(t) \leq \tau_1
        \label{tau01}
    \end{align}
    and so
    \begin{align}
        w\tau_0 \leq p_j \leq w\tau_1
        \ \text{ and } \
        w\tau_0 \leq q_j \leq w\tau_1.
        \label{wtau01}
    \end{align}
    For all $j$, let $D_{j,\gamma'}$ denote the event
    $$
    |\log N_j - \log np_j| \leq \gamma'
    \ \text{ and } \
    |\log M_j - \log nq_j| \leq \gamma'
    $$ 
    in the case of success, so that it is well defined.
The key result for proving the theorem is the following claim, where we denote with $Est$ the quantity in \cref{Estdef}.

    \medskip
    \begin{quote}
    \label{cl3}
    \textbf{Claim 3:} {\it 
    Suppose that algorithm 3 succeeds and all the $D_{j,\gamma'}$ occur. Then, 
    $\left|D_\alpha(p\|q) - Est\right| \leq \gamma$.}
    \end{quote}
\noindent
    If algorithm 3 succeeds and all the $D_{j,\gamma'}$ occur, then
    $$
    np_je^{-\gamma'} \leq N_j \leq np_je^{\gamma'}
    \ \text{ and } \
    nq_je^{-\gamma'} \leq M_j \leq nq_je^{\gamma'}
    $$
    for all $j$; therefore,
    \begin{align}
        \frac{N_j^\alpha}{nM_j^{\alpha-1}}
        & \leq \frac{(np_je^{\gamma'})^\alpha}
                    {n(nq_je^{-\gamma'})^{\alpha-1}}
        \nonumber
        \\
        & \leq \frac{(nw\tau_1e^{\gamma'})^\alpha}
        {n(nw\tau_0e^{-\gamma'})^{\alpha-1}}
        \tag{by \eqref{wtau01}}
        \nonumber
        \\
        & \leq w\frac{\tau_1^\alpha e^{\gamma'(2\alpha-1)}}
                    {\tau_0^{\alpha-1}}
        \nonumber
        \\
        & \leq wK
        \label{wk2}
    \end{align}
    as $e^{\gamma'(2\alpha - 1)} \leq 2$ (indeed, by
    \eqref{defGammaPr}, $\gamma' \leq \frac{\log 2}{2\alpha-1}$).
    Similarly,
    \begin{align}
        \frac{N_j^\alpha}{nM_j^{\alpha-1}}
        \geq wK'
        \label{wk'}
    \end{align}
    
    Moreover, because of \cref{tau01}, we have that
    \begin{align}
        wK'
        \leq w \frac{p(t)^\alpha}{q(t)^{\alpha - 1}}
        \leq wK
        \label{wk1}
    \end{align}

    Now, thanks to $C$-lipschitzness, \eqref{boundpj} holds also in the setting of this proof.
    Hence, for all $j$ and for all $z \in [z_j, z_{j + 1})$, we can apply \cref{logLipsch} with $T = w\tau_0$, $t = p_j$ and $t' = wp(z)$ (as, by \eqref{wtau01}, $p_j \geq w\tau_0$ and $wp(z) \geq w\tau_0$) that, together with \eqref{boundpj}, gives
    \begin{align}
        |\log p_j - \log wp(z)|
        & \leq \frac{1}{w\tau_0} |p_j - wp(z)|
        \leq \frac{Cw}{2\tau_0}
        \label{cw2}
    \end{align}
    and the same holds for $q$.
    
    By the mean value theorem, we have that, for all $j$, there exists some $t_j \in (z_j, z_{j + 1})$ such that
    \begin{align}
        \int_{z_j}^{z_{j+1}}
            \frac{p(t)^\alpha}{q(t)^{\alpha-1}}
            dt & 
            = w\frac{p(t_j)^\alpha}{q(t_j)^{\alpha-1}}
        \label{meanValThm}
    \end{align}
    Therefore, we have:\\
    
    \noindent
    $\left|\int_{z_j}^{z_{j+1}}\frac{p(t)^\alpha}{q(t)^{\alpha-1}}dt-\frac{N_j^\alpha}{nM_j^{\alpha-1}}\right|$
    \begin{align}
        & = \left|
                w\frac{p(t_j)^\alpha}{q(t_j)^{\alpha-1}}
                - \frac{N_j^\alpha}{nM_j^{\alpha-1}}
            \right|
            \tag{by \eqref{meanValThm}}
        \\
        & \leq wK 
                \left|
                    \log w\frac{p(t_j)^\alpha}
                              {q(t_j)^{\alpha-1}}
                    - \log \frac{N_j^\alpha}
                                {nM_j^{\alpha-1}}
                \right|
        \tag{by \cref{expLipsch} with \eqref{wk1} and \eqref{wk2}}
        \\
        & = wK 
                \left|
                    \log \frac{(wp(t_j))^\alpha}
                              {(wq(t_j))^{\alpha-1}}
                    -
                    \log \frac{p_j^\alpha}
                              {q_j^{\alpha-1}}
                \right.
                \nonumber
                \\
        & \qquad\qquad\qquad\qquad \left.
                    + \log \frac{p_j^\alpha}
                                {q_j^{\alpha-1}}
                    - \log \frac{N_j^\alpha}
                                {nM_j^{\alpha-1}}
                \right|
                \nonumber
        \\
        & = wK 
                \left|
                    \log \frac{(wp(t_j))^\alpha}
                              {(wq(t_j))^{\alpha-1}}
                    -
                    \log \frac{p_j^\alpha}
                              {q_j^{\alpha-1}}
                    \right.
                    \nonumber
                    \\
        & \qquad\qquad\qquad\qquad
                \left.
                    + \log \frac{(np_j)^\alpha}
                                {(nq_j)^{\alpha-1}}
                    - \log \frac{N_j^\alpha}
                                {M_j^{\alpha-1}}
                \right|
                \nonumber
        \\
        \nonumber
        \\
        & \leq wK 
        \biggl(
            \alpha|\log wp(t_j) - \log p_j|
            + (\alpha-1)|\log wq(t_j) - \log q_j|
            \nonumber
            \\
            & \quad \qquad + \alpha|\log np_j - \log N_j|
            + (\alpha-1)|\log nq_j - \log M_j|
        \biggr)
        \nonumber
        \\
        & \leq wK(2\alpha-1)
                \left(
                    \gamma' + \frac{Cw}{2\tau_0}
                \right)
                \label{boundTwo}
    \end{align}
    where the last step holds by \eqref{cw2} and since $D_{j,\gamma'}$ occurs.
    
    Thanks to \eqref{meanValThm} and \eqref{wk1} (for \eqref{WKK})  and to \eqref{wk'} (for \eqref{WK'}), we have
    \begin{align}
        & \E_{t\sim q}
        \left(
            \frac{p(t)}{q(t)}
        \right)^\alpha
        = \int_a^b 
            \left(
                \frac{p(t)}{q(t)}
            \right)^\alpha q(t) dt
        \geq WK' 
        \label{WKK}
        \\
        & \sum_j 
            \left(
                \frac{N_j}{M_j}
            \right)^\alpha \frac1n M_j
        \geq mwK' = WK'
        \label{WK'}
    \end{align}
    Therefore:\\

    \noindent
    $\left|D_\alpha(p\|q) - Est\right|$
    \begin{align}
        & = \frac{1}{\alpha - 1}
            \left|
                \log \int_a^b 
                \left(
                    \frac{p(t)}{q(t)}
                \right)^\alpha q(t) dt
                - \log \sum_j 
                \left(
                    \frac{N_j}{M_j}
                \right)^\alpha \frac1n M_j
            \right|
            \nonumber
        \\
        & \leq \frac{1}{(\alpha - 1)WK'} \left|
                \int_a^b 
                    \left(
                        \frac{p(t)}{q(t)}
                    \right)^\alpha q(t) dt
                - \sum_j 
                    \left(
                        \frac{N_j}{M_j}
                    \right)^\alpha \frac1n M_j
            \right|
        \tag{by \cref{logLipsch} and \eqref{WK'}}
        \\
        & \leq \frac{1}{(\alpha - 1)WK'} \sum_j
                \left|
                    \int_{z_j}^{z_{j+1}} 
                    \left(
                        \frac{p(t)}{q(t)}
                    \right)^\alpha q(t) dt
                    - 
                    \left(
                        \frac{N_j}{M_j}
                    \right)^\alpha \frac1{n} M_j
                \right|
                \nonumber
        \\
        & \leq \frac{mwK(2\alpha-1)}{(\alpha - 1)WK'}
            \left(
                \gamma'+\frac{Cw}{2\tau_0}
            \right)
            \tag{by \eqref{boundTwo}}
        \\
        & = \frac{K(2\alpha-1)}{K'(\alpha-1)}
            \left(
                \gamma'+\frac{Cw}{2\tau_0}
            \right)
            \nonumber
        \\
        & \leq \gamma
        \nonumber
    \end{align}
    The last step holds since $\gamma'$ and $\frac{Cw}{2\tau_0}$ are at most $\frac{\gamma K'(\alpha-1)}{2K(2\alpha-1)}$, because of \eqref{defGammaPr} and \eqref{defM}. 
This completes the proof of \hyperref[cl3]{Claim 3}.

\medskip

With this result, the correctness proof can be completed:
\vspace*{.15cm}

    \noindent
    $\Pr\left(
    \begin{array}{l}
    \text{algorithm 3 succeeds and }
    \\
    |\epsilon^\star_\alpha(x_1,x_2)-\Tilde\epsilon_\alpha(x_1,x_2,\mathcal{Z},\gamma,\conf,C)|\leq\gamma
    \end{array}
    \right)
    \vspace*{-.15cm}
    $
    \begin{align*}
        & = \PR{\text{success and } \left|D_\alpha(p\|q) - Est\right| \leq \gamma}
        \\
        & \geq \PR{\text{success} \text{ and } \forall j \ D_{j,\gamma'}}
        \tag{by \hyperref[cl3]{Claim 3}}
        \\
        & = 1 - \PR{
                    \text{fail or }
                    \exists j \
                    (D_{j,\gamma'}^c \text{ and success})
                }
        \\
        & \geq 1 - \PR{\text{fail}}
        - \PR{\exists j \ (D_{j,\gamma'}^c \text{ and success})}
        \\
        & \geq 1 - \PR{\exists j \ (N_j = 0 \ \text{ or } \ M_j = 0)}
        \\
        & \qquad - \sum_j \PR{D_{j,\gamma'}^c \text{ and success}}
        \\
        & \geq 1 - \sum_j
            \big[
                \PR{N_j = 0} + \PR{M_j = 0}
            \big]
        \\
        & \qquad - \sum_j \PR{
            |\log N_j - \log np_j| > \gamma' 
            \text{ and success}}
        \\
        & \qquad - \sum_j \PR{
            |\log M_j - \log nq_j| > \gamma' 
            \text{ and success}}
        \\
        & \geq 1 - \sum_j ((1 - p_j)^n + (1 - q_j)^n)
        \\
        & \qquad - \sum_j \PR{
            |\log N_j - \log np_j| > \gamma' 
            \ \big| \ N_j > 0}
        \\
        & \qquad - \sum_j \PR{
            |\log M_j - \log nq_j| > \gamma' 
            \ \big| \ M_j > 0}
        \\
        & \geq 1 - \sum_j 2(1 - w\tau_0)^n
            - \sum_j f(n, p_j, \gamma')
            - \sum_j f(n, q_j, \gamma')
        \tag{by \eqref{wtau01} and \cref{chernoff}}
        \\
        & \geq 1 - 2m(1 - w\tau_0)^n
            - 2mf(n,w\tau_0,\gamma')
        \\
        & \geq \conf
        \tag{by definition of $n$}
    \end{align*}
\end{IEEEproof}

\bigskip
\bigskip

\begin{IEEEproof}[Proof of \cref{thm:bucketsRDP}]
Let $x_1^\star, x_2^\star \in \X$ be the values where
$\epsilon_\alpha^{\star}(p_{Z|X})$ is reached, and $i^\star$ and $j^\star$ be the buckets where $x_1^\star$ and $x_2^\star$ fall into. 
To streamline writing, we abbreviate
$p_{Z|X}(\cdot|x_1^\star)$ and $p_{Z|X}(\cdot|x_2^\star)$
as $p_{1}(\cdot)$ and $p_{2}(\cdot)$; similarly,
we write $p_{i^\star}(\cdot)$ and $p_{j^\star}(\cdot)$
for $p_{Z|X}(\cdot|x_{i^\star})$ and $p_{Z|X}(\cdot|x_{j^\star})$.

Since $\epsilon_\alpha^\star(p_{Z|X}) = D_\alpha(p_1(\cdot)\big\| p_2(\cdot))$,
$\epsilon_\alpha^\star(x_{i^\star},x_{j^\star}) = D_\alpha(p_{i^\star}(\cdot) \big\| p_{j^\star}(\cdot))$, 
and 
$D_\alpha(p_{Z|X}(\cdot|x),p_{Z|X}(\cdot|x')) = \frac 1 {\alpha-1}\log{\int_a^b\frac{p_{Z|X}(z|x)^\alpha}{p_{Z|X}(z|x')^{\alpha-1}}dz}$, we have that
\begin{align*}
& |\epsilon_\alpha^\star(p_{Z|X}) - \epsilon_\alpha^\star(x_{i^\star},x_{j^\star})| =
\\
& = \frac 1 {\alpha-1} \left|
\log{\int_a^b\frac{p_1(z)^\alpha}{p_2(z)^{\alpha-1}}dz}
-
\log{\int_a^b\frac{p_{i^\star}(z)^\alpha}{p_{j^\star}(z)^{\alpha-1}}dz}
\right|
\\
& \quad \leq \frac 1 {(\alpha-1)WK'} \left|
\int_a^b\frac{p_1(z)^\alpha}{p_2(z)^{\alpha-1}}dz
-
\int_a^b\frac{p_{i^\star}(z)^\alpha}{p_{j^\star}(z)^{\alpha-1}}dz
\right|
\tag{by \cref{logLipsch} and \cref{WKK}}
\\
& \quad \leq \frac 1 {(\alpha-1)WK'} \sum_{t=0}^{m-1} \int_{z_t}^{z_{t+1}}\left|
\frac{p_1(z)^\alpha}{p_2(z)^{\alpha-1}}
-
\frac{p_{i^\star}(z)^\alpha}{p_{j^\star}(z)^{\alpha-1}}
\right|dz
\\
& \quad = \frac w {(\alpha-1)WK'} \sum_{t=0}^{m-1} \left|
\frac{p_1(z'_t)^\alpha}{p_2(z'_t)^{\alpha-1}}
-
\frac{p_{i^\star}(z'_t)^\alpha}{p_{j^\star}(z'_t)^{\alpha-1}}
\right|
\tag{by the mean value thm., with $z'_t \in (z_t,z_{t+1})$}
\\
& \quad \leq \frac {wK} {(\alpha-1)WK'} \sum_{t=0}^{m-1} \left|
\log\frac{p_1(z'_t)^\alpha}{p_2(z'_t)^{\alpha-1}}
-
\log\frac{p_{i^\star}(z'_t)^\alpha}{p_{j^\star}(z'_t)^{\alpha-1}}
\right|
\tag{by \cref{expLipsch} and \cref{wk1}}
\\
& \quad \leq \frac {wK} {(\alpha-1)WK'} \sum_{t=0}^{m-1} (
\alpha|\log{p_1(z'_t)}
- \log{p_{i^\star}(z'_t)}|
\\
& \hspace*{3.5cm} + (\alpha-1)|\log{p_2(z'_t)}
-\log{p_{j^\star}(z'_t)}|
)
\\
& \quad \leq \frac {wK} {(\alpha-1)WK'} \sum_{t=0}^{m-1} \left(
\frac \alpha {\tau_0}|p_1(z'_t)
- p_{i^\star}(z'_t)|
\right.
\\
& \hspace*{3.7cm} + \left. \frac {\alpha-1}{\tau_0} |p_2(z'_t)
- p_{j^\star}(z'_t)|
\right)
\tag{by \cref{logLipsch} and \cref{tau01}}
\\
& \quad \leq \frac {wK} {(\alpha-1)WK'} \sum_{t=0}^{m-1}\!\! \left(
\frac {\alpha D |x_1^\star-x_{i^\star}|} {\tau_0}
 + \frac {(\alpha-1)D|x_2^\star - x_{j^\star}|}{\tau_0}
\right)
\tag{by $D$-Lipschitzness}
\\
& \quad \leq 
\frac {wKm(2\alpha-1)D(d-c)} {(\alpha-1)WK'\tau_02k}
\tag{by \cref{lab:boundk}}
\\
& \quad =
\frac {(2\alpha-1)KD(d-c)} {2(\alpha-1)K'\tau_0k}
\tag{by def. of $w$}
\\
& \quad \leq \frac \gamma 3
\tag{by def. of $k$ for Alg.4 (\cref{kReniDef})}
\end{align*}
The proof then proceeds like the one for \cref{thm:buckets}.
\end{IEEEproof}

\end{document}